\documentclass[a4paper,12pt]{article}

\usepackage[active]{srcltx} 

\usepackage{amsmath}
\usepackage{amsthm}
\usepackage{amssymb,stmaryrd}
\usepackage[T1]{fontenc}

\usepackage{bbm}
\usepackage{amsfonts}
\usepackage{xy}           

\usepackage{graphicx}

\usepackage{url}
\usepackage{color}

%


\setcounter{secnumdepth}{2} 
\setcounter{tocdepth}{4} 
\numberwithin{equation}{section}

\theoremstyle{plain}  
\newtheorem{thm}{Theorem}[section]

\newtheorem{cor}[thm]{Corollary}
\newtheorem{lemma}[thm]{Lemma}
\newtheorem{prop}[thm]{Proposition}

\newtheorem{defi}[thm]{Definition}         

\newtheorem{rem}[thm]{Remark}


 \usepackage{fancyhdr}
 \pagestyle{fancy}
 \fancypagestyle{fancy2}{
\lhead[\rmfamily \thepage \fancyplain{}]{{\slshape \rightmark}}
\chead{}
\rhead[{\slshape \leftmark}]{\fancyplain{} {\rmfamily \thepage}}
\lfoot{}
\cfoot{}
\rfoot{}

		}
\renewcommand{\sectionmark}[1]{\markboth{\leftmark}{\thesection.\ {\scshape {#1}}} }

\fancypagestyle{new}{
\lhead[\rmfamily \thepage \fancyplain{}]{{\slshape \rightmark}}
\chead{}
\rhead[{\slshape \rightmark}]{\fancyplain{} {\rmfamily \thepage}}
\lfoot{}
\cfoot{}
\rfoot{}

}

\newcommand{\delete}[1]{}
\delete{
}



\delete{ \section{Counter and Comments}
}


\newcommand{\E}[0]{{\mathbbm{E}}}
\newcommand{\e}{\varepsilon}


    \newcommand{\bitem}{\begin{enumerate}}
    \newcommand{\eitem}{\end{enumerate}}

\newcommand{\lbc}[0]{\| \phi^- \|_\infty}

\newcommand{\GibK}[3]{G (#1)}
\newcommand{\tGibK}[3]{G^t (#1)}

\newcommand{\betaD}{}


\newcommand{\Pmea}[1]{\mathcal{P}_{#1}}	
\newcommand{\Pmead}[2]{\mathcal{P}_{#1}^{#2}}  
\newcommand{\PmeaK}[1]{\mathcal{G}_{#1}}  
\newcommand{\PmeaKd}[2]{\mathcal{G}_{{#2}, #1}} 
\newcommand{\Gmea}{{\mathcal{G}_\theta}} 

\newcommand{\Ggmea}[1]{\mu}


\newcommand{\pot}[0]{\phi}	
\newcommand{\potf}[2]{\pot (#1,#2)} 

\newcommand{\losy}[0]{{\mathfrak{l}}}
\newcommand{\lomagsy}[0]{\mathfrak{m}}
\newcommand{\lomagIgA}[4]{ #3 (#2)}

\newcommand{\lomagIg}[3]{ \lomagIgA {#1} {#2} {#3} {\ }}

\newcommand{\lomagf}[2]{#2 (#1)}

\newcommand{\lomagIf}[2]{\lomagIg {\hXm}{#1} {#2}}


\newcommand{\floma}{\losy}

\newcommand{\funcL}[6]{ 
    \begin{eqnarray}
			{#1} \colon& {#2} & \rightarrow\   {#3} \notag \\
   				 				&{#4} & \mapsto\   {#5}
	\label{#6}
      \end{eqnarray}}



\newcommand{\lnorm}[1]{#1 (\Xm)}
\newcommand{\lnormd}[2]{#1 (#2)}
\newcommand{\anorm}[2]{M_{#1}(#2)}

\newcommand{\oset}[2]{\overset{\text{#1}}{#2}}



\newcommand{\B}{\mathcal{B}}

\newcommand{\Css}{\Gamma}
\newcommand{\Csa}[1]{{\Css (#1)}}

\newcommand{\Csfsad}[2]{\Css_{f}(#2)}
\newcommand{\Cspa}[1]{\Css_{p}(\hXm)}

\newcommand{\Csqfg}[1]{\Css(\hXm)}
\newcommand{\Csqfd}[1]{{\Kas}(\hXd)}
\newcommand{\Csqfsg}[1]{{\Kas}(\hXm)}


\newcommand{\Kug}[1]{\Kas_{#1} ({\Xm})}
\newcommand{\Ksad}[0]{\K}

\newcommand{\Kas}[0]{\mathbbm{K}} 

\newcommand{\K}[0]{\Ka \Xm} 
\newcommand{\Ka}[1]{{\Kas(#1)}} 
\newcommand{\Ma}[0]{{\mathbbm{M}(\Xd)}} 
\newcommand{\fcone}[0]{{C_0^+(\Xd)}} 
\newcommand{\pMa}[0]{{\mathbbm{M}_+(\Xd)}} 
\newcommand{\pMah}[0]{{\mathbbm{M}_+(\hXd)}} 
\newcommand{\tKa}[1]{{\Kas^{\text{\ttfamily{t}}}(#1)}} 
\newcommand{\Kad}{\Ka \Xd} 
\newcommand{\tKad}[2]{\tKa \Xd} 
\newcommand{\M}{{\RR_+^*}}		

\newcommand{\X}{{\Xd}}
\newcommand{\Xd}{{{\RR}^d}}  
	\newcommand{\hXd}{ {\hat{\RR} ^d}}  
	\newcommand{\hXm}[0]{ \hXd}
\newcommand{\Xm}{\X}

\newcommand{\RR}{\mathbb{R}}

\newcommand{\ZZ}{\mathbb{Z}^d}
\newcommand{\NN}{\mathbb{N}}

\newcommand{\refeq}[1]{(\ref{#1})}


\newcommand{\prVer}{\today}

\delete{\part{Start}}
 \title{Gibbs states over the cone of discrete measures}
 \date{\prVer}
	

\author{Dennis Hagedorn\footnote{dhagedor@math.uni-bielfeld.de}, 
Yuri Kondratiev \footnote{kondrat@math.uni-bielefeld.de} , Tanja Pasurek \footnote{tpasurek@math.uni-bielefeld.de} , Michael Röckner \footnote{roeckner@math.uni-bielefeld.de} \\ Fakult{ä}t f{ü}r Mathematik, Bielefeld Universität, D 33615 Bielefeld, Germany}

\delete{
\author[bi]{Dennis Hagedorn\corref{cor1}\fnref{e1}}
\ead{dhagedor@math.uni-bielfeld.de}
\fntext[e1]{dhagedor@math.uni-bielfeld.de}

\author[bi]{Yuri Kondratiev\fnref{e2}}
\ead{kondrat@math.uni-bielfeld.de}
\fntext[e2]{kondrat@math.uni-bielfeld.de}

\author[bi]{Tanja Pasurek\corref{cor1}\fnref{e3}}
\ead{pasurek@math.uni-bielfeld.de}
\fntext[e3]{tpasurek@math.uni-bielfeld.de}

\author[bi]{Michael Röckner\fnref{e4}}
\ead{roeckner@math.uni-bielfeld.de}
\fntext[e4]{roeckner@math.uni-bielfeld.de}

\cortext[cor1]{Corresponding author}
}


\begin{document}				

\bibliographystyle{plain} 

\delete{\part{START}}
\maketitle

\begin{abstract}
We construct Gibbs perturbations of  the Gamma process on $\Xd$, which may be used in applications to model systems of densely distributed particles. First we propose a  definition of  Gibbs measures over the cone of discrete Radon measures on $\Xd$ and then analyze conditions for their existence. Our approach works also for general Lévy processes instead of Gamma measures. To this end, we need only the assumption that the first two moments of the involved Lévy intensity measures are finite. Also uniform moment estimates for the Gibbs distributions are obtained, which are essential for the construction of related diffusions. Moreover, we prove a Mecke type characterization for the Gamma measures on the cone and an FKG inequality for them.

Keywords: Gamma process, Poisson point process, discrete Radon measures, Gibbs states, DLR equation, Mecke identity, FKG inequality, marked configuration spaces, interacting particle systems.
\newline 
	2010 Mathematical Subject Classicfication: Primary 82B21; Secondary 28C20, 60G57, 60K35, 82B05.
\end{abstract}

\pagenumbering{arabic}

\section{Introduction}
 Analysis and stochastics on configuration spaces in the continuum may be considered, in particular, as a mathematical 
 background  of several  statistical physics models.
For example, the equilibrium states of classical free gases are given by Poisson measures (Poisson point processes) on configuration spaces.  The states of interacting  gases may be defined as Gibbs measures which are \guillemotleft singular perturbations\guillemotright\ of Poisson measures 
 in the framework of the well-known Dobrushin-Lanford-Ruelle (DLR) formalism, see,  e.g.,  \cite{dob70b,dob70,koparo10,laru69}. For complex systems  with a non-trivial internal structure of their elements (like, e.g., ecological systems in the presence of biological diversity), the notion of 
 a free system shall be specified in any particular case.  As possible candidates for the role of equilibrium states here, \textit{Lévy processes} on corresponding location spaces may be used. Then a construction of equilibrium states in the presence of interactions needs a proper generalization of the DLR approach.  The latter is the main aim of our paper.  

In particular, this framework is well-suitable to model a new class of interacting particle systems in the continuum $\Xd$, $d\in \NN$, in which to each particle $x \in \Xd$ one attaches an additional positive characteristic (mark) $s_x$ being distributed according to some \textit{infinite Lévy} measure $\lambda(ds)$ on $\M :=(0,\infty)$. A drastic difference between marks and position is reflected in the special properties of the corresponding Gibbs states. A new topological issue is that these states are supported by locally finite, postive \textit{discrete} measures on the location space $\Xd$. The cone $\K$ of such measures constitutes an intermediate \guillemotleft coordinate\guillemotright\  space between the spaces $\Gamma(\Xd)$ and $\mathbbm{M} (\Xd)$ of locally finite configurations resp. Radon measures over $\Xd$. 

\begin{minipage}[b][5 cm][b]{10 cm}
\resizebox{9.6cm}{!}{\begin{picture}(0,0)%
\includegraphics{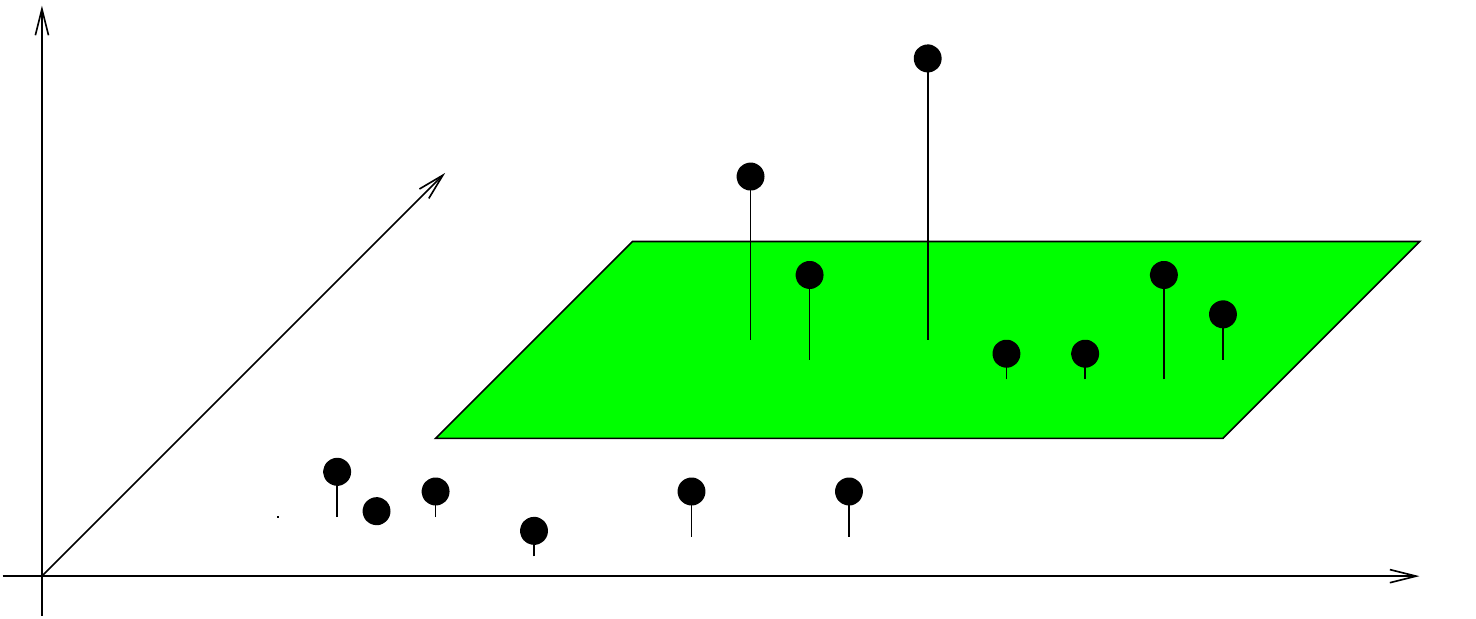}%
\end{picture}%
\setlength{\unitlength}{4144sp}%
\begingroup\makeatletter\ifx\SetFigFont\undefined%
\gdef\SetFigFont#1#2#3#4#5{%
  \reset@font\fontsize{#1}{#2pt}%
  \fontfamily{#3}\fontseries{#4}\fontshape{#5}%
  \selectfont}%
\fi\endgroup%
\begin{picture}(6732,2814)(2509,-6553)
\put(8461,-5776){\makebox(0,0)[lb]{
\textnormal{$\Delta$}
}}
\put(9226,-4606){\makebox(0,0)[lb]{
\textnormal{$\Xd$}
}}
\put(2926,-4246){\makebox(0,0)[lb]{
\textnormal{$\M$}
}}
\end{picture}%
}
\end{minipage}

Each measure $\eta \in \K$ can be written in the form 
$$
	\eta = \sum_{i} s_i \delta_{x_i} \quad \text{ with $s_i >0$.}
$$ 
Because of a possibly high concentration of the intensity measure $\lambda(ds)$ near zero, the positions of particles $x \in \Xd$ form typically a \textit{dense} countable set in $\Xd$, i.e., in each open $\Delta \subset \Xd$ there are a.s. infinitely many $x_i$'s. This is principally different from the case of marked configuration spaces with \emph{finite} measures on marks, which was mostly studied in the literature (cf., e.g., \cite{alkoro98anaGeo,alkoro98anaGibs,kosist98,kunphd} and the references therein).

Although the results of our considerations below hold for quite general Lévy processes  (cf. Section \ref{s4a}),
for simplicity, we focus here on the particular case of so-called Gamma measures corresponding to the choice of 
$$
	\lambda(ds) := \theta e^{-s}/s ds \quad \text{ with $\theta >0$.}
$$
This case is especially interesting in applications and has
several  additional analytic  properties, e.g., related with the quasi-invariance of Gamma measures.

Gamma processes on general location spaces $X$ and associated Gamma measures on the corresponding cone $\mathbbm{K}(X)$ appear in quite different areas of modern analysis and probability. One of the most impressive highlights here is an essential role of Gamma measures in the respresentation theory of big groups, as it was discovered first by Vershik, Gelfand and Graev in \cite{vegegr75}. Furthermore, Gamma measures are closely related to additive and multiplicative Lebesgue measures in infinite dimensions (cf. \cite{tsveyo01,ver07c}). On the other hand, they deliver an example of measures admitting a closed form of the analytic generating functional for orthogonal polynomials. The latter yields the corresponding chaos decomposition and leads to a well developed version of the white noice analysis (cf., e.g., \cite{kssu98}). Along with Gaussian measures on linear spaces and Poisson measures on configuration spaces, Gamma measures on the cone constitute the third prominent example of measures in infinite dimensions due to a version of Meixner's classification (cf. \cite{mei34}).
And, finally, Gamma measures (and more general compound Poisson measures) form an important class of random fields in statistics and applications to the theory of interacting particle systems. So, in \cite{sta03}, the Gamma measures appear as examples of ``invariant probability measures for a class of continuous state branching processes with immigration''\footnote{This is cited from \cite[Abstract]{sta03}.}.\\

The paper is organized as follows: In Section \ref{s2}  we introduce a family of Gamma measures $\Gmea$, $\theta>0$, on the cone $\K$ and discuss their basic properties. In Section \ref{s3}, we fix a (not necessarily nonnegative) stable pair potential $\pot(x,y)$ and define the corresponding Gibbs reconstructions $\mu \in \GibK {\pot} \theta \Xm$ of the \guillemotleft free\guillemotright\ measure $\Gmea$ as a solution of the DLR equation \refeq{B18}. We further reduce our considerations to a proper subset $\tGibK \pot \theta \Xm$ of so called \textit{tempered} Gibbs measures with controlled growth (cf. Eq. \refeq{B9c3a1}). In Section \ref{s4}  we prove the main Theorems \ref{Bthm9c4E} and \ref{Bthm4c8} on existence and à-priori moment bounds for $\mu \in \tGibK \pot \theta \Xm$. Note that, just as in most of the continuous particle systems, the existence problem for Gibbs states on the cone $\K$ is far from being evident (even in the simpler case of $\phi \geq 0$). Since the interacting potentials appearing here void, in general, the usual assumptions of integrability and translation invariance, Ruelle's technique of superstability estimates (cf. \cite{alkoro98anaGibs,mas00,rue69,rue70}) does not apply directly. Another basic method, which relies on a fundamental Dobrushin's existence criterium (cf. \cite{dob70b,dob70}), is neither applicable because of missing regularity properties of the interaction (cf. Remark \ref{rem41}). Therefore, we develop an analytic approach to the existence problem of Gibbs measures in this situation. It employs Lyapunov functionals, weak dependence of the Gibbs specification on boundary conditions and a proper topology of local setwise  convergence. On configuration spaces, this approach was first applied in \cite{koparo10} to construct Gibbs perturbations of Poisson point fields with spatially irregular intensity measures.  
	In Section \ref{s5}, we comment on the results obtained so far and outline some core extensions of the initial model. Finally, in Section \ref{s6}, we have a closer look at the intrinsic relation between Gamma and compound Poisson measures and derive a Mecke type identity and an FKG inequality for $\Gmea$.

\section{Gamma measures}\label{s2}

As a location space, let us fix the $d$-dimensional Euclidean space $(\Xd, |\cdot|)$. It is endowed with the Lebesgue measure $m(dx)$ on the Borel $\sigma$-algebra $\B(\Xd)$. By $\B_c(\Xd)$ we denote the ring of all bounded (i.e., those with compact closure) sets from $\B(\Xd)$. The continuous and compactly supported functions $\varphi: \Xd \rightarrow \RR$ form a locally convex vector space $C_0(\Xd)$, which is given a natural topology of uniform convergence on sets from $\B_c(\Xd)$. By the Riesz representation theorem, the dual space of $C_0(\Xd)$ can be identified with the space $\Ma$ of all \textit{signed Radon} (i.e., locally finite) measures on $(\Xd, \B(\Xd))$. By definition, each $\nu \in \Ma$ is finite on all $\Delta \in \B_c(\Xd)$. The space $\Ma$ will be equipped with the \textit{vague topology}, which is the coarsest topology making all mappings 
\begin{align}
	\Ma \ni \nu \mapsto \langle \varphi, \nu \rangle := \int_{\Xd} \varphi(x) \nu(dx), \quad \varphi \in C_0(\Xd),
\label{eq1}
\end{align}
continuous. It is well known (see e.g. \cite[15.7.7]{kal83}) that $\Ma$ is \textit{Polish}, i.e., there exists some separable and complete metric on $\Ma$ generating the vague topology. By $\B(\Ma)$ we denote the corresponding Borel $\sigma$-algebra on $\Ma$; the one-point sets (e.g., $\{ \nu = 0\}$) clearly belong to $\B(\Ma)$.
Let us abbreviate $\RR_+ := [0,+\infty)$ and $\M := (0,+\infty)$. By $\fcone$ resp. $\pMa$ we denote the cone of all nonnegative functions $\varphi \in C_0(\Xd)$ resp. the dual cone of all nonnegative measures $\nu \in \Ma$. \\

The Gibbs states considered below will be supported by the \emph{cone of (nonnegative) discrete Radon measures} over $\Xd $ defined as 
\begin{align}                                                          
        \Ka \Xd:= \Big\{ \eta = \sum_i s_i \delta_{x_i} \in \Ma \Big|& s_i \in \M, \thinspace x_i \in {\Xd} 
		\Big\}
     \label{2z0a}.
\end{align}
Here, $\delta_{x_i}$ are Dirac measures, the atoms $x_i$ are assumed to be distinct and their total number is at most countable. 
By convention, the cone $\Ka \Xd$ contains the null mass $\eta = 0$, which is represented by the sum over the empty set of indixes $i$.
We will refer to each $s_i$ as a \emph{mark} and to each $x_i$ as a \emph{position}. This terminology is motivated by marked configuration spaces (cf. e.g. \cite{kssu98} and Section \ref{s6} below). However, our setting does not fit in that framework because  the set of all positions of an arbitrarily chosen $\eta \in \K$, i.e., its \emph{support} 
\begin{align}
  \tau(\eta) := \{ x \in \Xd | \thinspace 0 < \eta(\{x\}) =: s_x (\eta) \},
\label{eq3}
\end{align}
is typically not a (locally finite) configuration in $\Xm$. Whenever it is clear which discrete measure $\eta \in \K$ is meant, we write for short just $s_x$ instead of $s_x(\eta)$.\\ 

The closure of $\K$ w.r.t. the vague topology is the whole space $\pMa$.
By $\B(\K)$ resp. $\B(\pMa)$ we denote the trace $\sigma$-algebra of $\B(\Ma)$ on the cone $\Ka \Xm$ resp. $\pMa$. Note (see e.g. \cite[Lemmas 2.1 and 2.3]{kal83}) that each $\nu \in \pMa$ obeys a unique decomposition $ \nu = \nu_0+\eta$ into a diffusive (i.e., non-atomic) component $\nu_0 \in \pMa$ and a discrete one $\eta \in \K$. Furthermore, the mappings $\nu \mapsto \nu_0$ and $\nu \mapsto \eta$ are measurable, which implies that $\K = \{ \nu \in \pMa | \nu_0 = 0 \} \in \B(\Ma)$ and $\B(\K) \subset \B(\pMa) \subset \B(\Ma)$. The latter also yields that $(\K, \B(\K))$ is a \emph{standard Borel space} (cf. \cite[Theorem V.2.2]{par67}).

\begin{rem} 
It is an open problem whether one can introduce a metric on $\K$ making it a Polish space and being compatible with the vague topology inherited from $\Ma$. A possible way would be to show that $\K$ is a $G_\delta$-set in $\pMa$ and then to apply the Alexandrow-Hausdorff theorem. It was shown in \cite{bara97b} that all probability measures from $\K$ constitute an $F_{\sigma, \delta}$-set in $\pMa$, which still does not solve the metrization problem. 
\end{rem}

By $\mathcal{P}(\Ma)$, resp. by $\mathcal{P}(\pMa)$ and $\mathcal{P}(\K)$, we denote the space of all probability measure on $\Ma$, resp. $\pMa$ and $\K$. (In the terminology of \cite{kal83} they are called \emph{random measures}.)\\

Our basic example of a measure on $\Ka \Xd$ is a Gamma measure. A Gamma measure $\Gmea$, $\theta > 0$ being a fixed parameter, is characterized by its Laplace transform (cf. \cite[Theorem 3.7]{bechre76}) \begin{align}
      \mathbb{E}_{\Gmea} \left[ \exp \left( - \langle \varphi, \cdot\rangle \right) \right]
     = \exp \left[ - \theta \int_{\Xd} \log (1 + \varphi(x)) m(dx) \right],
     \quad \varphi \in \fcone.
    \label{2z0b}
   \end{align}     
Note that Eq.  \refeq{2z0b} extends to any bounded, compactly supported Borel function $\varphi: \Xd \rightarrow (-1,\infty)$ (for which, of course, $\log( 1 + \varphi) \in L^1(\Xd,m)$).

\begin{rem}
\bitem \item
Fix $\Delta \in \B_c(\Xd)$. In the later proofs, we also use the cone $\Ka \Delta \in \B( \Ka \Xd)$ which consists of those discrete measures $\eta \in \Ka \Xd$ which are supported by $\Delta$. There is a canonical projection
\begin{align}
	\mathbbm{P}_\Delta: \Kad \ni \eta \mapsto \eta_\Delta:= \sum_{x \in \tau(\eta) \cap \Delta} s_x \delta_x \in \Ka \Delta
	\label{eq24b}.
	\end{align}
	Respectively, we consider the Gamma measure $\PmeaKd \theta \Delta := \PmeaK \theta \circ \mathbbm{P}_\Delta^{-1}$, which has full support on $\Ka \Delta$. It is also characterized via its Laplace transform, whose formula one obtains by replacing $\Xd$ by $\Delta$ in \refeq{2z0b}.
\item
Each Gamma measure is uniquely determined by \refeq{2z0b} because the exponents
$\K \ni \eta \mapsto \exp(\langle - \eta, \varphi \rangle )$, $\varphi \in \fcone$, constitute a measure defining class on $\B(\Ma)$.
The existence (and uniqueness) of $\mu \in \mathcal{P}(\Kad)$ with the Laplace transform \refeq{2z0b} follows by the general result \cite[Theorem 3.7]{bechre76}. Alternatively, one can apply Minlos' theorem giving the existence of the corresponding $\mu$ on the nuclear space of Schwartz distributions $\mathcal{D}'(\Xd) \supset \Ma$ and then prove that $\mu$ is indeed supported by the cone $\K$ (see e.g. \cite{gevi64d}).
\item There is an explicit construction of the Gamma measure $\Gmea$ observed first in  \cite{tsveyo01}. Namely, one deduces its existence as an image measure of a Gamma-Poisson measure $\mathcal{P}_\theta$ on the configuration space $\Gamma(\M \times \Xd)$ over the product space $\M \times \Xd$ with intensity measure $\theta \frac 1 s e^{-s} ds \otimes m(dx)$, where $dt$ is the Lebesgue measure on $\M$. For more details see Section \ref{s6}.
\item 
Consider the probability space $(\K, \B(\K),\Gmea)$. The support 
$$
	\K \ni \eta \mapsto \tau(\eta) \in \B(\Xd)
$$
can be seen as a  \emph{stationary countable dense} random set in $\Xd$. By the general theory, it has quite specific properties, see \cite{alba81,ken00}.
	Furthermore, for each nonempty $\Delta \in \B_c(\Xd)$ we have 
	\begin{align}
		\int_{\K} \left| \tau(\eta) \cap \Delta \right| \Gmea (d\eta) = \infty,
		\notag
	\end{align}
	which manifests an additional distinction from Lebesgue-Poisson measures on the configuration space $\Gamma(\Xd)$ (cf. \cite{alkoro98anaGeo} for their detailed properties).
		
\eitem
\end{rem}
From the explicit form of the Laplace transform (cf. Eq. \refeq{2z0b}) one obtains two important properties of $\Gmea$: 
\begin{itemize}
	\item All \emph{local polynomial moments} exist, i.e., for $n \in \NN$ and for each bounded Borel function $\varphi :\Xd \rightarrow \RR$  being supported by $\Delta \in \B_c(\Xd)$, we have 
	\begin{align}
        \E_\Gmea [|\langle \varphi, \cdot \rangle|^n] \leq n! {\| \varphi \|_\infty} m (\Delta)^n \theta^n < \infty
    \label{2z0c}.
    \end{align}
    For such $\varphi$, the right-hand side of Eq. \refeq{eq1} defines a measurable linear functional $\Ma \ni \nu \mapsto \langle \varphi, \nu \rangle \in \RR$. 
   \item The random measure $\Gmea$ has \emph{independent increments} (or the \emph{locality property}) in the sense that $\eta(\Delta_1), \dots, \eta(\Delta_N)$ are independent for any $N \in \NN$ and disjoint $\Delta_1, \dots, \Delta_N \in \B_c(\Xd)$. In other words, 
   \begin{align}
   	\int_{\K} \prod_{i=1}^N \varphi_i(\eta(\Delta_i)) \Gmea (d \eta)
   	= \prod_{i=1}^N \int_{\K} \varphi_i(\eta(\Delta_i)) \Gmea (d \eta)
   \label{eq27}
   \end{align}
   for any collection of $\varphi_i \in L^\infty (\RR)$, $1 \leq i \leq N$.
\end{itemize}
The latter property will be crucial for constructing Gibbs perturbations of the Gamma measure, which will be done in the next section.

	\section{Gibbsian formalism on $\K$}
\label{Bs2}\label{s3}
Fixing a proper pair potential,  we introduce the notion of related Gibbs measures via a local Gibbs specification. We proceed in the spirit of the Dobrushin-Lanford-Ruelle (\emph{\textbf{DLR}}) approach to Gibbs states in statistical physics (see ,e.g., the monograph \cite{geo88}). 

\begin{description} \item[\textit{\textbf{Assumption $\mathbf{(\pot)}$}}] Let us be given a symmetric pair potential  
\begin{align}
	\pot: \Xm \times \Xm \rightarrow \RR 
	\label{Bex7}
\end{align}
being a \emph{bounded} and $\B(\Xm \times \Xm)$-measurable function such that the following conditions hold:
		 \begin{description}		 	
		 	\item[$\mathbf{(FR)}$] 
		\textit{\textbf{Finite} \textbf{range}\textbf{:}}
	   \textit{There exists $R\in (0,\infty )$ such that} 
   \begin{align}    
      \phi (x ,y) = 0, \quad \text{ if $|x-y| > R$}
      \notag. 
    \end{align}
   \item[$\mathbf{(LB)}$] \textit{\textbf{Lower bound constant}}: 
	  	\begin{gather}
				\lbc  := - \inf_{ x, y \in \Xd} 
					\{ \phi(x, y) \wedge 0 \}
					  < \infty
				\notag. 
			\end{gather}
		\item[$\mathbf{(RC)}$] \textit{\textbf{Repulsion condition}:}		
	  		\textit{There exists $\delta > 0$ such that} 
				\begin{align}				  			 
				 A_\delta := \inf_{\substack{x, y \in \Xd \\ | x - y | \leq \delta}} 
					\phi(x,y)
					>  
					2 {m^{\phi}_\delta \lbc }
						\label{Bex20a},
				\end{align}
				\textit{with interaction parameter (cf. \refeq{eq34b} below)}
				\begin{align}
					m^{\phi}_\delta := \nu_d d^{d/2} \left[ R/\delta +1 \right]^d
				\label{Bex21c},
				\end{align}    
				\textit{where $\nu_d := \frac {\pi^{d/2}} {\Gamma(d/2 +1)}$ is the volume of the unit ball in $\Xd$.}
    \end{description}
    \end{description}
		Merely speaking, the relation \refeq{Bex20a} means that the repulsion part $\phi^+ := \phi \vee 0$ of $\phi$ dominates its  attraction part $\phi^- := - \phi \vee 0$.
				Note that neither \emph{translation invariance} nor\emph{\ continuity}
		of $\pot$ need to be assumed.

		\subsection{Partition of the space $\Xm$}\label{Bss2b}     
Let $\delta >0$ be such that the repulsion condition $\mathbf{(RC)}$ holds and define the parameter $g:=\delta /\sqrt{d}$. Consider the cubes indexed by $k \in \ZZ$
    \begin{align}
     {Q_k} :=&   	 
    			\big[ - 1/2 g, 1/2 g \big)^d +  g k \subset \Xd 
    \notag
    \end{align}%
constituting a partition of $\Xd$. Each cube ${Q_k}$ is centered at the point $g k$ and has edge length $g>0$, Lebesgue volume $m(Q_k)= g^d$ and diameter
		\begin{equation}
			\mathrm{diam}\left({Q_k}\right)
				:=
				\sup_{{} x, {} y \in {} Q_k}
							|x - y|_{\Xd} = \delta
				\notag.
		\end{equation}%
The latter implies that $\phi( x, y) \geq A_\delta$ 
		for all ${} x, {} y \in {{} Q_k}$.
		To explain the choice of the constant $m_\delta^\phi$ in \refeq{Bex21c}, we introduce some more concepts and notation. 
For each $k \in \ZZ$, the family of \guillemotleft\emph{neighbor}\guillemotright\  {cubes} of ${} Q_k$ (i.e., those ${} Q_j$, $j \neq k$, having a point $y \in {} Q_j$ that interacts with a point $x \in {} Q_k$) is indexed by 
	 \begin{align}
        \mathrm{\partial }^{\phi}_\delta k
        :=
            \left\{ j \in \ZZ \backslash \{k\} \ | \ 
              	\exists x \in {{} Q_k}, \thinspace \exists y \in {{} Q_j}: 
            	\quad \phi(x,y) \neq 0 
        \right\}  
    \label{B21bc}.
    \end{align}    
   	The number of such \guillemotleft\emph{neighbor}\guillemotright\ {cubes} for every ${} Q_k$, $k \in \ZZ$, can be roughly estimated by 
   	\begin{gather}
    	  	\sup_{k \in \ZZ} 
    	  		|\mathrm{\partial }^{\phi}_\delta k| \leq m^{\phi}_\delta,
    	  \label{eq34b}
    	\end{gather}	
where $m^{\phi}_\delta$ was defined in \refeq{Bex21c}.
  		
    To each index set $\mathcal{K}\Subset \ZZ$ (this notation means that $\mathcal{K}$ is a non-void \emph{finite} subset of $\ZZ$) there corresponds
    \begin{equation}
    \Delta _{\mathcal{K}}:=\bigsqcup_{k\in \mathcal{K}}{{} Q_k}
    	\in \B(\Xm)
   \label{eq34c};
    \end{equation}   
    the family of all such domains is denoted by \emph{$\mathcal{Q}_{c}(\Xm)$}. Respectively, for $\Delta \in \B(\Xm)$ we define 
    \begin{align}
    	\begin{array}{l}
    		{\mathcal{K}_\Delta} := \{ j \in \ZZ \ | \ {} Q_j \cap \Delta \neq \varnothing\};
    	\end{array}
    \label{eq34d}
    \end{align}
	then $|\mathcal{K}_{\Delta }|$ is the number of {cubes} ${{} Q_k}$ having non-void intersection with $\Delta$.
Note that
	\begin{align}
		 |{\mathcal{K}_\Delta}| < \infty, 
		 		\qquad \forall \Delta \in \B_c(\Xm)
		\notag.
	\end{align}

	\subsection{Local Gibbs specification}
	\label{Bss2}
For each $\eta = \sum_{x \in \tau(\eta)} s_x \delta_x, \thinspace \xi= \sum_{y \in \tau(\xi)} s_y \delta_y \in \K$ and $\Delta \in \mathcal{B}_c(\Xm)$, we define the \emph{relative energy} (\emph{Hamiltonian})
\begin{align}
     H_\Delta(\eta|\xi)
    &:=
     \int_{\Delta} \int_{\Delta} \phi(x,y) \eta(dx) \eta(dy)
       +
	   2 \int_{\Delta^c} \int_{\Delta} \phi(x,y) \eta(dx) \xi(dy)
  \label{3c0a}.
\end{align}
In the particle picture, the Hamiltonian can be written as 
\begin{align}
		H_\Delta (\eta|\xi) 
		= 
     	\sum_{x,x' \in \tau(\eta) \cap \Delta} \phi(x,x') s_x s_{x'}
			 + 2 \sum_{\substack{x \in \tau(\eta) \cap \Delta \\ y \in \tau(\xi)\cap{\Delta^C}}}
					\phi(x,y) s_x s_y.
 	\notag
 	\end{align}
 	\begin{lemma}\label{lemBa1a}
	  The relative energy is finite, i.e., 
	\begin{align}
		|H_{\Delta} (\eta|\xi)| 
		< \infty, \quad \text{for all $\eta, \xi \in \Ksad  $ and $\Delta \in \B_c (\Xm)$.}	
	\notag
	\end{align}
	\end{lemma}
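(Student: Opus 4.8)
The plan is to exploit the two defining features of $\phi$ from Assumption $(\phi)$: it is \emph{bounded}, say $\|\phi\|_\infty =: C < \infty$, and it has \emph{finite range} $R$, so that $\phi(x,y) = 0$ whenever $|x-y| > R$. Combining these reduces the estimate of $H_\Delta(\eta|\xi)$ to controlling the total mass $\eta(\cdot)$ over bounded sets, which is finite because $\eta \in \K \subset \Ma$ consists of Radon (locally finite) measures. Concretely, I would bound the first (internal) term in \refeq{3c0a} by
\begin{align}
	\left| \int_\Delta \int_\Delta \phi(x,y)\, \eta(dx)\, \eta(dy) \right|
	\leq C\, \eta(\Delta)^2 < \infty,
	\notag
\end{align}
since $\Delta \in \B_c(\Xm)$ and $\eta$ is a Radon measure.

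For the second (boundary) term, the finite-range property is essential, because $\Delta^c$ is unbounded and $\xi(\Delta^c) = \infty$ in general. The point is that the inner integrand $\phi(x,y)$ with $x \in \Delta$ vanishes unless $y$ lies in the $R$-neighbourhood of $\Delta$; so I would introduce the enlarged bounded set $\Delta_R := \{ y \in \Xd : \operatorname{dist}(y,\Delta) \leq R \}$, which belongs to $\B_c(\Xm)$ since $\Delta$ has compact closure, and estimate
\begin{align}
	\left| 2 \int_{\Delta^c} \int_\Delta \phi(x,y)\, \eta(dx)\, \xi(dy) \right|
	= \left| 2 \int_{\Delta_R \setminus \Delta} \int_\Delta \phi(x,y)\, \eta(dx)\, \xi(dy) \right|
	\leq 2 C\, \eta(\Delta)\, \xi(\Delta_R) < \infty,
	\notag
\end{align}
again using that both $\eta$ and $\xi$ are Radon measures and $\Delta, \Delta_R$ are bounded. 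Adding the two bounds and invoking the triangle inequality gives $|H_\Delta(\eta|\xi)| \leq C\eta(\Delta)^2 + 2C\eta(\Delta)\xi(\Delta_R) < \infty$, which is the claim.

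I do not expect any serious obstacle here; the lemma is essentially a bookkeeping statement ensuring that the Hamiltonian \refeq{3c0a} is well-defined, and every ingredient (boundedness of $\phi$, finite range, local finiteness of discrete Radon measures) is already in place. The only mild point of care is the reduction of the $\Delta^c$-integral to an integral over the bounded shell $\Delta_R \setminus \Delta$: one must check that $\Delta_R \in \B_c(\Xm)$, which follows because the closed $R$-neighbourhood of a relatively compact set is relatively compact in $\Xd$, and that the integrand indeed vanishes off this shell, which is exactly condition $(\mathbf{FR})$. One could alternatively phrase the whole argument in the particle picture using the representations $\eta = \sum_{x} s_x \delta_x$, $\xi = \sum_y s_y \delta_y$ and the fact that $\tau(\eta) \cap \Delta$, $\tau(\xi) \cap \Delta_R$ carry only finitely much mass, but the measure-theoretic formulation above is cleaner and avoids any discussion of whether the position sets themselves are locally finite (which, as emphasized in the text, they typically are not).
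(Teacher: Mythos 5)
Your proof is correct and is essentially the paper's own argument: the paper bounds $|H_\Delta(\eta|\xi)|$ by $\|\phi\|_\infty\,\eta(\Delta)^2 + 2\|\phi\|_\infty\,\eta(\Delta)\,\xi(\mathcal{U}_\Delta)$, where $\mathcal{U}_\Delta \in \B_c(\Xm)$ is the cube-built analogue of your shell $\Delta_R\setminus\Delta$, and then concludes from local finiteness of $\eta,\xi$. The only difference is cosmetic (metric $R$-neighbourhood versus a union of partition cubes within interaction range).
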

	
	\begin{proof}
	Note that 
	\begin{align}
		H_\Delta (\eta|\xi) 
		\leq 
		\lomagf \Delta \eta \lomagf \Delta \eta \|\phi\|_\infty + 2 \lomagf \Delta \eta \lomagf {{\mathcal{U}_\Delta}} \xi \|\phi\|_\infty,
		\notag
	\end{align} 
		where 
	\begin{align}
	      \mathcal{U}_\Delta :=& 
	      	\bigsqcup_{k \in \ZZ} \left\{  {{} Q_k} \ \big| \  
	      			\partial^{\phi}_\delta k \cap {\mathcal{K}_\Delta} \neq \varnothing \right\} \cap \Delta^c
	      		\quad \in \mathcal{B}_c(\Xm)
		\notag
	 \end{align}  
	 and $\mathcal{K}_\Delta$ was defined in \refeq{eq34d}. Since $\eta, \xi \in \Ksad $, the assertion follows. 
	 \end{proof}
	 \delete{\subsubsection{Lyapunov functional}}
	 To show upper and lower bounds for the partition function, the following \guillemotleft stability\guillemotright\ estimate for local Hamiltonians is essential.    
	 Everywhere below, we suppose that Assumption \emph{\textbf{$\mathbf{(\pot)}$}} holds with a fixed $\delta >0$ and in general omit the index $\delta$. 
	 
    \begin{lemma}\label{Bl2a}
     Let Assumption \emph{\textbf{$\mathbf{(\pot)}$}} hold. Then for each $\eta, \xi \in \Kad$ and $\Delta \in \B_c(\Kad)$
	   \begin{align}
	   	H_\Delta(\eta|\xi) 
    	\geq
    	   \left[ A - 2  m^\phi \lbc \right]  \sum_{\substack{j \in {\mathcal{K}_\Delta}}}
						\lnormd {\eta_\Delta} {Q_j }^2
		  			-  m^\phi \lbc  
	    			\sum_{l \in  \mathcal{K}_{\mathcal{U}_\Delta}} 
									\lnormd {\xi_{\Delta^c}} {Q_l}^2
		 \label{BeqB9aa}.
	   \end{align}
	   More precisely, we have for each $k \in \mathbbm{Z}^d$
	   	   \begin{align}
	   	H_{Q_k}(\eta|\xi) 
    	\geq
    	  \left[A -  m^\phi \lbc \right] \lnormd {\eta} {Q_k }^2
		  			- \lbc 
	    			\sum_{j \in  \partial^\phi k} 
									\lnormd {\xi_{{Q_k}^c}} {Q_j}^2
		 \label{37b}
	   \end{align}
		 	and, choosing $\xi=0$,
	   \begin{align}
	   		H_{{{} Q_k}} (\eta_k) := H_{Q_k}(\eta_k|0) \geq   \left[ A - 2  m^\phi \lbc \right]  \lnormd {\eta} {Q_k}^2
	   	\label{B28h}.
	   \end{align}
	  \end{lemma}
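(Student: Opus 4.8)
The plan is to prove the refined cube estimate \refeq{37b} first by a direct algebraic manipulation of the two terms in $H_{Q_k}(\eta|\xi)$, and then to derive \refeq{B28h} by setting $\xi=0$ and \refeq{BeqB9aa} by summing \refeq{37b} over $k\in\mathcal{K}_\Delta$. Starting from the particle form of the Hamiltonian, the first term $\sum_{x,x'\in\tau(\eta)\cap Q_k}\phi(x,x')s_xs_{x'}$ is bounded below by $A\sum_{x,x'\in\tau(\eta)\cap Q_k}s_xs_{x'}=A\,\lnormd{\eta}{Q_k}^2$, because $\mathrm{diam}(Q_k)=\delta$ forces $|x-x'|\le\delta$ for all $x,x'\in Q_k$, so the repulsion condition $\mathbf{(RC)}$ gives $\phi(x,x')\ge A_\delta=A$ on $Q_k\times Q_k$. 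For the cross term $2\sum_{x\in\tau(\eta)\cap Q_k,\,y\in\tau(\xi)\cap Q_k^c}\phi(x,y)s_xs_y$, I would use $\phi\ge-\lbc$ and the finite-range property $\mathbf{(FR)}$: the only $y$'s that contribute lie in neighbour cubes $Q_j$ with $j\in\partial^\phi k$. This bounds the cross term below by $-2\lbc\sum_{j\in\partial^\phi k}\sum_{x\in\tau(\eta)\cap Q_k}s_x\sum_{y\in\tau(\xi)\cap Q_j}s_y=-2\lbc\sum_{j\in\partial^\phi k}\lnormd{\eta}{Q_k}\lnormd{\xi_{Q_k^c}}{Q_j}$.

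Next I would apply Young's inequality $2ab\le a^2+b^2$ to each product $\lnormd{\eta}{Q_k}\lnormd{\xi_{Q_k^c}}{Q_j}$, obtaining a lower bound $-\lbc\sum_{j\in\partial^\phi k}\big(\lnormd{\eta}{Q_k}^2+\lnormd{\xi_{Q_k^c}}{Q_j}^2\big)$. Since $|\partial^\phi k|\le m^\phi$ by \refeq{eq34b}, the coefficient of $\lnormd{\eta}{Q_k}^2$ coming from the cross term is at least $-m^\phi\lbc$, and combining with the diagonal term yields exactly \refeq{37b}. Setting $\xi=0$ kills the second sum and gives the even cruder bound $H_{Q_k}(\eta_k)\ge(A-m^\phi\lbc)\lnormd{\eta}{Q_k}^2$; to reach the stated \refeq{B28h} with the factor $2$ one simply notes $A-m^\phi\lbc\ge A-2m^\phi\lbc$ (as $m^\phi\lbc\ge0$), so \refeq{B28h} follows a fortiori.

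For the global estimate \refeq{BeqB9aa}, I would write $H_\Delta(\eta|\xi)=\sum_{k\in\mathcal{K}_\Delta}H_{Q_k}(\eta|\xi)$ — here one must be a little careful: the decomposition of $H_\Delta$ over the cubes comprising $\Delta$ is exact because the double sum over $\tau(\eta)\cap\Delta$ splits as a sum over pairs of cubes, and the boundary term against $\xi$ on $\Delta^c$ likewise decomposes, the $x$-cube ranging over $\mathcal{K}_\Delta$. Applying \refeq{37b} to each summand and collecting: the diagonal contributions give $\sum_{k\in\mathcal{K}_\Delta}(A-m^\phi\lbc)\lnormd{\eta}{Q_k}^2$, while summing the negative boundary terms $-\lbc\sum_{k\in\mathcal{K}_\Delta}\sum_{j\in\partial^\phi k}\lnormd{\xi_{Q_k^c}}{Q_j}^2$ and reindexing by $l=j$, each $Q_l$ with $l\in\mathcal{K}_{\mathcal{U}_\Delta}$ is counted at most $m^\phi$ times (the number of its own neighbours inside $\mathcal{K}_\Delta$ is bounded by $|\partial^\phi l|\le m^\phi$). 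This produces an extra factor $m^\phi$ on the $\xi$-sum and an additional $-m^\phi\lbc\lnormd{\eta}{Q_k}^2$ per cube when we also absorb the cross-neighbour diagonal overcounting, giving the stated $A-2m^\phi\lbc$ coefficient in front of $\lnormd{\eta_\Delta}{Q_j}^2$ and $-m^\phi\lbc$ in front of $\lnormd{\xi_{\Delta^c}}{Q_l}^2$.

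The main obstacle is the bookkeeping in the last step: one has to organize the Young-inequality splitting so that the $\lnormd{\eta}{Q_k}^2$ terms arising from \emph{both} the contribution at cube $k$ \emph{and} the contributions where $k$ appears as a neighbour $j\in\partial^\phi l$ of some other cube $l\in\mathcal{K}_\Delta$ are correctly aggregated, and to verify that the total multiplicity never exceeds the claimed $2m^\phi$. Using the symmetric splitting $2ab\le a^2+b^2$ on the \emph{ordered} neighbour pairs $(k,j)$ automatically makes the accounting symmetric, so each interior cube picks up at most $m^\phi$ from "its own side" plus at most $m^\phi$ from "the neighbour's side", which is precisely where the $2$ in $A-2m^\phi\lbc$ comes from; the boundary cubes $Q_l\subset\mathcal{U}_\Delta$ only ever appear on the neighbour side, hence the single $m^\phi$ factor there. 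Everything else is a routine application of $\mathbf{(FR)}$, $\mathbf{(LB)}$, $\mathbf{(RC)}$, and $\mathrm{diam}(Q_k)=\delta$.
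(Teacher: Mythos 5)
Your proof of the single--cube estimate \refeq{37b} and of \refeq{B28h} is correct and is essentially the paper's own argument: split $H_{Q_k}(\eta|\xi)$ into the diagonal part (bounded below by $A\,\eta(Q_k)^2$ via $\mathbf{(RC)}$ and $\mathrm{diam}(Q_k)=\delta$) and the cross part (bounded via $\mathbf{(LB)}$, $\mathbf{(FR)}$, Young's inequality and the neighbour count \refeq{eq34b}), and note $A-m^\phi\lbc\ge A-2m^\phi\lbc$ for \refeq{B28h}. The gap is in your derivation of the global bound \refeq{BeqB9aa}: the claimed identity $H_\Delta(\eta|\xi)=\sum_{k\in\mathcal{K}_\Delta}H_{Q_k}(\eta|\xi)$ is false. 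From the definition \refeq{3c0a}, each $H_{Q_k}(\eta|\xi)$ contains only the $Q_k\times Q_k$ self--interaction of $\eta$ plus the interaction of $\eta|_{Q_k}$ with the boundary datum $\xi$ on \emph{all} of $Q_k^c$, so that
\begin{align}
 H_\Delta(\eta|\xi)
 \;=\;
 \sum_{k\in\mathcal{K}_\Delta} H_{Q_k}(\eta|\xi)
 \;+\;\sum_{\substack{j,l\in\mathcal{K}_\Delta\\ j\neq l}}\int_{Q_j}\int_{Q_l}\phi(x,y)\,\eta(dx)\,\eta(dy)
 \;-\;2\sum_{k\in\mathcal{K}_\Delta}\int_{\Delta\setminus Q_k}\int_{Q_k}\phi(x,y)\,\eta(dx)\,\xi(dy)
 \notag.
\end{align}
Both correction terms are of indefinite sign (the intra--$\Delta$ $\eta$--$\eta$ pairs between distinct cubes may be attractive, and the spurious $\eta$--$\xi$ terms inside $\Delta$ involve $\xi$ restricted to $\Delta\setminus Q_k$, which is unrelated to $\eta$ there), so summing \refeq{37b} over $k$ bounds $\sum_k H_{Q_k}(\eta|\xi)$ but not $H_\Delta(\eta|\xi)$. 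Your bookkeeping paragraph does not repair this: summing \refeq{37b} only ever produces $\xi(Q_j)^2$ terms on the ``neighbour side'', whereas the $\eta(Q_j)^2$ contributions ``from the neighbour's side'' that you invoke to explain the factor $2$ in $A-2m^\phi\lbc$ never arise, precisely because the single--cube Hamiltonians see $\xi$, not $\eta$, inside $\Delta\setminus Q_k$.

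The repair is to prove \refeq{BeqB9aa} directly, as the paper does: expand $H_\Delta(\eta|\xi)$ over pairs of cubes as in \refeq{eq18}; bound the diagonal pairs below by $A\,\eta_\Delta(Q_j)^2$ using $\mathbf{(RC)}$, the intra--$\Delta$ off--diagonal $\eta$--$\eta$ pairs by $-\lbc\,\eta_\Delta(Q_j)\eta_\Delta(Q_l)$ and the cross $\eta$--$\xi$ pairs by $-2\lbc\,\eta_\Delta(Q_j)\xi_{\Delta^c}(Q_l)$ using $\mathbf{(LB)}$, with $l$ restricted to $\partial^\phi j$ by $\mathbf{(FR)}$; then apply $ab\le\tfrac12(a^2+b^2)$ and \refeq{eq34b}. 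The coefficient $A-2m^\phi\lbc$ then appears with one $m^\phi\lbc$ coming from the intra--$\Delta$ $\eta$--$\eta$ pairs and one from the cross terms, and re--indexing the remaining $\xi$--sum, in which each $l\in\mathcal{K}_{\mathcal{U}_\Delta}$ occurs at most $m^\phi$ times, yields the term $-m^\phi\lbc\sum_{l\in\mathcal{K}_{\mathcal{U}_\Delta}}\xi_{\Delta^c}(Q_l)^2$. With that replacement your argument becomes complete; note also that the paper obtains \refeq{37b} and \refeq{B28h} as the special case $\Delta=Q_k$ of the same computation, exactly as in your first paragraph.
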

	  
	  \begin{proof}
    By definition and obvious calculations
      \begin{align}
    	    H_\Delta(\eta |\xi)
    	=  H_\Delta(\eta_\Delta| \xi_{\Delta^c}) 
    	=&
    		   \sum_{\substack{j \in {\mathcal{K}_\Delta}
    												 \\ l \in {\mathcal{K}_\Delta}
    										}}
    		\int_{Q_j} \int_{Q_l} 
    			\potf x y {\eta_\Delta} (dx) {\xi_{\Delta^c}}(dy)
    	 \notag \\ &
    	 + 2   \sum_{\substack{j \in {\mathcal{K}_\Delta}\\ l \in \mathcal{K}_{\Delta^c}}}
  	\int_{Q_l} \int_{Q_j}  	\potf x y {\eta_\Delta}(dx) {\xi_{\Delta^c}}(dy)
    \label{eq18}.
    \end{align} 
    By $\mathbf{(LB)}$ and $\mathbf{ (RC)}$ the right-hand side of Eq. \refeq{eq18} is not less than 
    \begin{align}
     	 &    {A} \sum_{\substack{j \in \mathcal{K}_\Delta}}
    				\lnormd {\eta_\Delta} {Q_j }^2
    			-    \lbc
    					\sum_{\substack{j \in {\mathcal{K}_\Delta}}} 
    					\sum_{\substack{ l \in {\mathcal{K}_\Delta} \cap {\partial^{\phi}}  j}}
    					\lnormd {\eta_\Delta} {Q_j } \lnormd {\eta_\Delta} {Q_l }
    		\notag \\ &	\qquad	
    		- 2   \lbc \sum_{j \in {\mathcal{K}_\Delta}} 
    		\sum_{l\in \mathcal{K}_{\Delta^c} \cap  {\partial^{\phi}}  j} 
    		\lnormd {\eta_\Delta} {Q_j}
    				\lnormd {\xi_{\Delta^c}} {Q_l}
    	\notag. \\
    	\intertext{Because of \refeq{B21bc} and the elementary inequality $a b \leq 1/2 (a^2 + b^2)$ for $a,b \geq 0$, the above term is bounded below by}
   	&			    {A} \sum_{\substack{j \in {\mathcal{K}_\Delta}}}
    					\lnormd {\eta_\Delta} {Q_j}^2
    		-  \lbc \sum_{\substack{j \in {\mathcal{K}_\Delta}}}
	    					{m^{\phi}} \lnormd {\eta_\Delta} {Q_j}^2
	    	\notag \\ & 
	    			- {  \lbc}
	    					\bigg( 
	    					 \sum_{\substack{j \in {\mathcal{K}_\Delta}}}
									{m^{\phi}} \lnormd {\eta_\Delta} {Q_j}^2
								+  \sum_{j \in {\mathcal{K}_\Delta}} 
									\sum_{l \in  \mathcal{K}_{\Delta^c} \cap {\partial^{\phi}}  j}
									\lnormd {\xi_{\Delta^c}} {Q_l }^2
							\bigg)
	\notag \\ =&					 
						  \left[ A -  2  m^\phi \lbc  \right]
						  \sum_{\substack{j \in {\mathcal{K}_\Delta}}}
						\lnormd {\eta_\Delta} {Q_j}^2
		  			-  {  \lbc}
							 \sum_{j \in {\mathcal{K}_\Delta}} 
									\sum_{l \in  \mathcal{K}_{\Delta^c} \cap {\partial^{\phi}}  j}
									\lnormd {\xi_{\Delta^c}} {Q_l }^2
			\label{BeqB9a1}.
    	\end{align}     	
     	By \textbf{(FR)} and \refeq{eq34b}, the last summand in \refeq{BeqB9a1} dominates
   	$$
   		-   m^\phi \lbc 
   		\sum_{l \in  \mathcal{K}_{\mathcal{U}_\Delta}} 
									\lnormd {\xi_{\Delta^c}} {Q_l }^2.
    $$    
    Note that for $\Delta= Q_k$ we even have
    \begin{align}
     H_{Q_k} (\eta_k|\xi) \geq 
     	A |\eta_k|^2 - 2 \lbc |\eta_k| \sum_{l \in \partial^{\phi} k} |\xi_{Q_k^c}(Q_l)| 
     	\notag \\ \geq
     		\left[ A -  m^\phi \lbc \right] |\eta_k|^2 - \lbc \sum_{l \in \partial^\phi k} \xi_{Q_k^c} (Q_l)^2
     	\notag,
    \end{align}
    which proves \refeq{37b} and \refeq{B28h}.
		\end{proof}
		
\delete{		\subsubsection{Partition function}}
		For each $\Delta \in \B_c (\Xm)$ and $\xi \in \Ksad$, we define the \emph{partition function}
 		\begin{equation}
			Z_{{\Delta}}(\xi) :=
    		\int_{ \Ka {\Delta}} \exp \left\{
    						- \betaD H_{\Delta} (\eta_{\Delta}|\xi)
    					\right\}
    		 \PmeaKd \theta {\Delta}(d\eta_{\Delta}).
		\notag
	\end{equation}		
		\begin{lemma}\label{Bl2b}
		Let Assumption \emph{\textbf{$\mathbf{(\pot)}$}} hold. For any ${\Delta} \in \B_c (\Xm)$ and $\xi \in \Kad$ 
		\begin{align}
			0 <& Z_{\Delta}(\xi) 
   		< \infty.
			\notag
			\end{align}			
If $\phi \geq 0$, then obviously $Z_\Delta(\xi) \leq 1$.				
		\end{lemma}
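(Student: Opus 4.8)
The plan is to prove the two inequalities separately: finiteness will come from the stability estimate of Lemma~\ref{Bl2a}, while strict positivity is essentially automatic.

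For the upper bound I would start from \refeq{BeqB9aa}. The repulsion condition $\mathbf{(RC)}$, i.e.~\refeq{Bex20a}, was imposed precisely so that $A - 2m^\phi\lbc > 0$; hence in $-H_\Delta(\eta|\xi)$ the self-energy term $-\bigl(A - 2m^\phi\lbc\bigr)\sum_{j\in\mathcal{K}_\Delta}\lnormd{\eta_\Delta}{Q_j}^2$ is nonpositive and may be dropped, leaving $-H_\Delta(\eta|\xi)\leq m^\phi\lbc\sum_{l\in\mathcal{K}_{\mathcal{U}_\Delta}}\lnormd{\xi_{\Delta^c}}{Q_l}^2=:C(\Delta,\xi)$. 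This bound does not depend on $\eta$ and is finite: $\mathcal{K}_{\mathcal{U}_\Delta}$ is a finite index set because $\mathcal{U}_\Delta\in\B_c(\Xm)$, and each $\lnormd{\xi_{\Delta^c}}{Q_l}=\xi(Q_l\cap\Delta^c)$ is finite because $\xi$ is a Radon measure. Consequently $\exp\{-H_\Delta(\eta_\Delta|\xi)\}\leq e^{C(\Delta,\xi)}$ pointwise, and integrating against the probability measure $\PmeaKd\theta\Delta$ gives $Z_\Delta(\xi)\leq e^{C(\Delta,\xi)}<\infty$.

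For the lower bound I would use that, by Lemma~\ref{lemBa1a}, $H_\Delta(\eta_\Delta|\xi)\in\RR$ for every $\eta\in\Ka\Delta$, so $\eta\mapsto\exp\{-H_\Delta(\eta_\Delta|\xi)\}$ is an everywhere strictly positive measurable function; the integral of such a function against the probability measure $\PmeaKd\theta\Delta$ is necessarily strictly positive, whence $Z_\Delta(\xi)>0$. (A quantitative version is obtained by restricting the integral to the set $\{\eta\in\Ka\Delta:\eta(\Delta)\leq 1\}$, which carries positive $\PmeaKd\theta\Delta$-mass since the law of $\eta(\Delta)$ has full support on $(0,\infty)$ by \refeq{2z0b}, and bounding the integrand from below on that set via the elementary estimate $H_\Delta(\eta_\Delta|\xi)\leq\|\phi\|_\infty\bigl(\eta(\Delta)^2+2\,\eta(\Delta)\,\xi(\mathcal{U}_\Delta)\bigr)$ from the proof of Lemma~\ref{lemBa1a}.) Finally, if $\phi\geq 0$ then both double integrals in \refeq{3c0a} are nonnegative, so $H_\Delta(\eta|\xi)\geq 0$ and $\exp\{-H_\Delta(\eta_\Delta|\xi)\}\leq 1$, whence $Z_\Delta(\xi)\leq\PmeaKd\theta\Delta(\Ka\Delta)=1$.

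I do not expect a genuine obstacle here: the only substantive point is noticing that $\mathbf{(RC)}$ forces $A-2m^\phi\lbc>0$, so that the $\eta$-dependent self-energy term in the stability estimate only lowers $-H_\Delta$ and the integrand is therefore uniformly bounded by the fixed, boundary-condition-dependent constant $e^{C(\Delta,\xi)}$ — everything else is routine.
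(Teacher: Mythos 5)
Your argument is correct, and the upper bound is essentially the paper's: both start from the stability estimate \refeq{BeqB9aa} of Lemma \ref{Bl2a}, observe that $\mathbf{(RC)}$ makes $A-2m^{\phi}\lbc>0$ so the $\eta$-dependent self-energy term can only decrease $-H_\Delta$, and are left with the finite, $\xi$-dependent constant $m^{\phi}\lbc\sum_{l\in\mathcal{K}_{\mathcal{U}_\Delta}}\xi_{\Delta^c}(Q_l)^2$. Where you diverge is the lower bound: the paper replaces $H_\Delta$ by $H^+_\Delta$ (built from $\phi^+=\phi\vee 0$) and applies Jensen's inequality together with the moment bound \refeq{2z0c}, which yields the \emph{explicit} quantitative estimate \refeq{Beq10ab}, $Z_\Delta(\xi)\geq\exp\{-\|\phi\|_\infty[C_\Delta^2+C_\Delta\,\xi(\mathcal{U}_\Delta)]\}$; you instead argue qualitatively that a strictly positive measurable integrand (finiteness of $H_\Delta$ coming from Lemma \ref{lemBa1a}) has strictly positive integral against the probability measure $\PmeaKd\theta\Delta$. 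That is a perfectly valid and more elementary proof of the lemma as stated, but note that the paper's Jensen bound \refeq{Beq10ab} is not just a by-product: it is cited again in the proofs of Lemma \ref{Bl2} and Proposition \ref{Bp9c4c}, where an explicit lower bound on $Z_{\tilde\Delta}(\eta)$ in terms of $\eta(\mathcal{U}_{\tilde\Delta})$ is needed. Your parenthetical quantitative variant (restricting to $\{\eta(\Delta)\leq 1\}$ and using the crude bound $H_\Delta\leq\|\phi\|_\infty(\eta(\Delta)^2+2\eta(\Delta)\xi(\mathcal{U}_\Delta))$) would serve the same purpose with a different constant, so nothing essential is lost, only the specific constants the later arguments quote. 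The $\phi\geq 0$ case is identical in both treatments.
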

	
		\begin{proof}
			We define for each ${\Delta} \in \mathcal{B}_c (\Xm)$ and $\eta, \xi \in \Ksad$
			\begin{align}
					H^+_{\Delta} (\eta|\xi)
				:=&
				\int_\Delta \int_\Delta 
				\phi^+ ({} x, {} y)  
				\eta(dx) \eta (dy)
				+ 2
				\int_{\Delta^c} \int_\Delta 
				\phi^+ ({} x, {} y) 
				\eta(dx) \xi(dy)
			\notag,
			\end{align}
			where $\phi^+ := \phi \vee 0$. 
			By Jensen's inequality  
			\begin{align}
					Z_{\Delta} (\xi)
				\geq&
					\int_{ \Ka {\Delta}} \exp \left\{ -\betaD H^+_{\Delta} (\eta|\xi) \right\}
					 \PmeaKd \theta {\Delta}(d\eta)
				\geq  
					\exp \left\{ - \betaD \int_{ \Ka {\Delta}} H^+_{\Delta} (\eta| \xi) \PmeaKd \theta {\Delta} (d\eta) \right\}
			\notag \\  
				\geq & 
						\exp \left\{ - \betaD \|\phi\|_\infty \int_{\Ka \Delta}
								 \Big[ \lomagf {\Delta}  \eta ^2  
						  + 2 \lomagf {\Delta} \eta
						  \lomagf {\mathcal{U}_{\Delta}} {\xi_{\Delta^c}} \Big] 
						 \PmeaKd \theta {\Delta}(d\eta) \right\}
				\notag.
			\end{align}
		Using \refeq{2z0c}\label{Brem10a14}, we get that
			\begin{align}
			Z_{\Delta}(\xi)
				\geq
						\exp\bigg\{ &- \betaD \|\phi\|_\infty \Big[
								\left( C_\Delta\right)^2
						  + C_\Delta \cdot 
						\lomagIf {\mathcal{U}_{\Delta}} {\xi_{\Delta^c}}
						 \Big] \bigg\}
						 >0
				\label{Beq10ab}
			\end{align}
			with 
					\begin{align}
						C_\Delta
					:=&
						2 \theta \left| {\mathcal{K}_{\Delta} }\right| g^d < \infty
				\label{eq311b}.
				\end{align}	
			Moreover, by \refeq{BeqB9aa} we deduce an upper bound 
    \begin{align}
    	Z_{\Delta}(\xi) 
    	\leq&
    		\int_{\K} \exp 
	    			\bigg\{
	    					- \left[ A - 2  m^\phi \lbc \right] 
	    					\sum_{\substack{j \in {\mathcal{K}_{\Delta}}}}
						\lnormd {\eta_\Delta} {Q_j}^2 						
			 			\bigg\}
					 \PmeaKd \theta {\Delta}(d\eta_{\Delta})
			\notag \\ &\quad \times
    		\exp \bigg\{	
						 {\betaD {m^{\phi}} \lbc 
							}  \sum_{l \in  {\mathcal{K}_{\mathcal{U}_{\Delta}}} }
									\lnormd {\xi_{\Delta^c}} {Q_l }^2
							\bigg\}
				<\infty
			\label{311c},
			\end{align}				
		which completes the proof.
		\end{proof}

\delete{\subsubsection{Local specification}  }
For each $\Delta \in \mathcal{B}_c(\Xm)$, the \emph{local Gibbs measures} with boundary conditions $\xi \in \Kad$ are given by 
\begin{align}
		\mu_\Delta(d\eta|\xi) &:= 
				\frac 1 {Z_\Delta (\xi)}
				e^{-\betaD H_\Delta(\eta|\xi)} \PmeaKd \theta \Delta (d\eta).
			\notag
	\end{align}
	Lemma \ref{Bl2b} guarantees that each $\mu_\Delta(d\eta|\xi)$ is well-defined as a probability measure on $\Ka \Delta$.

    \begin{defi}\label{def3c1}
    The \emph{local specification} $\mathit{\Pi }=\{\pi _{\Delta }\}_{\Delta
    \in \mathcal{B}_{c}(\Xm)}$ on $\K$ is a family of stochastic kernels
    \begin{equation}
    \mathcal{B}(\K)\times \K \ni (B,\xi )\mapsto \pi
    _{\Delta }(B|\xi )\in \lbrack 0,1]  \label{3c1a}
    \end{equation}%
    given by $\pi _{\Delta }(B|\xi ):=\mu_{\Delta }(B_{\Delta ,\xi }|\xi )$, where 
    \begin{align}
        B_{\Delta ,\xi }&:=\left\{ \eta _{\Delta }\in K(\Delta) \left\vert \text{\thinspace }\eta _{\Delta }\cup \xi _{\Delta ^{c}}\in
        B\right. \right\} \in \mathcal{B}(K(\Delta)). \notag
    \end{align}
  \end{defi}

\begin{rem}
	The family \refeq{3c1a} obeys the \emph{consistency} (or \emph{Markovian}) \emph{property}, which means that for all $\Delta, \tilde {\Delta} \in \mathcal{B}_c(\Xm)$ with ${\tilde {\Delta}} \subseteq {\Delta}$
	\begin{equation}
		\int_{\K}
				\pi _{{\tilde {\Delta}} }(B|\eta )
		\pi _{{\Delta} }(\mathrm{d}\eta |\xi )
		=\pi _{{\Delta} }(B|\xi ),
		\label{B16a}
		\end{equation}
	for all	$B\in \mathcal{B}(\K)$ and $\xi \in \K$. 
	By the additive structure of the relative energy (cf. Eq. \refeq{3c0a}) and the independency  property of the Gamma measure (cf. Eq. \refeq{eq27}), this property immediately follows by the construction of the family $\Pi$ (cf. \cite[Proposition 6.3]{pre76} or   \cite[Proposition 2.6]{pre05}).
	\end{rem}
	
\delete{\subsubsection{Gibbs measures}}

\begin{defi}\label{Bdef1} 
		A probability measure $\mu$ on 
    $\K$ is called a \textbf{Gibbs measure} (or \textbf{state}) with pair potential $\pot$ if it satisfies the \emph{Dobrushin-Lanford-Ruelle} (\textbf{DLR}) equilibrium equation 
    \begin{equation}
				\int_{\K} \pi _{{\Delta}}(B|\eta
	    )\mu (\mathrm{d}\eta )=\mu (B)
		\label{B18}
    \end{equation}%
    for all ${\Delta} \in \B_c (\Xm)$ 
    and $B\in \B(\K)$. The associated set of all Gibbs states will be denoted by ${ \GibK {\pot} \theta \Xm}$.\smallskip
 \end{defi}
 
 We will mainly be interested in the subset $\tGibK \pot \theta \Xd$ of \emph{tempered} Gibbs measures which are supported by  $\tKad \floma \Xm$, i.e., 
 \begin{align}
 	\tGibK \pot \theta \Xd := \tGibK \pot \theta \Xd \cap \mathcal{P} (\tKad \floma \Xm), 
 	\label{eq314b}
 	\end{align}
 	where
 	\begin{align}
 		\mathcal{P}(\tKad \floma \Xm) := \{ \mu \in \mathcal{P}(\Kad) | \mu (\tKad \floma \Xm)=1 \}.
 	\end{align}
 	Here, we define the set of \emph{tempered} discrete Radon measures by
		\begin{align}
			\tKad \floma \Xm :=&  
				\bigcap_{\alpha>0} \Kug {{\alpha}},
				\label{B9c3a1} 
		\end{align}
		where
		\begin{align}
			\Kug {{\alpha}}  :=&
					\left\{ \eta \in \K \Big| 
							\anorm { \alpha} \eta
							< \infty \right\}
			\in \B(\Kad)
			\label{B9c3a1BB},
		\end{align}	
		and
		\begin{align}
		 \anorm { \alpha} \eta
					:= \left( \sum_{k \in \ZZ} \lnormd {\eta} {Q_k}^{2} e^{-\alpha | k |}
						\right)^{1/2}
		\label{36e}.
		\end{align}
		Note that  $M_\alpha$ extends to a seminorm on the vector space of all \emph{tempered} Radon measures from $\Ma$.
		The above property means that the tail of the discrete measure $\eta$ does not increase too much.

 Clearly, by the consistency poperty it is sufficient to check the DLR equation just for some order generating (i.e., ordered by inclusion and exhausting the whole $\Xd$) sequence $\{ \Delta_N \}_{N \in \NN} \subset \mathcal{Q}_c(\Xd)$.
		
\section{Existence for general potentials}\sectionmark{Existence}\label{Bs10b}\label{s4}
The existence problem for Gibbs measures of continuous particle systems is by non mean trivial, even in the (more familiar) case of configuration spaces, each of which admits a Polish metrization.

\begin{rem}\label{rem41}
Obviously, we can transform the existence problem of a Gibbs measure on $\K$ to the configuration space $\Gamma(\M \times \Xd)$ over $\M \times \Xd$ (cf. Section \ref{s6} for the precise definition) via the canonical map
$$ 
\K \ni \eta \mapsto \sum_{x \in \tau(\eta)} \delta_{(\eta(\{x\}),x)} \in \Gamma(\M \times \Xd).
$$
Anyhow, the existence problem cannot be covered by the results previously known on (marked) configuration spaces (\cite{alkoro98anaGibs,kunphd,kokusi98, mas00}). First of all, the involved potential 
$$
	\tilde \phi \big( (t,x), (s,y) \big) := t s \phi(x,y), \quad (t,x), \thinspace (s,y) \in \M \times \Xd,
$$ 
is not translation invariant on $\M \times \Xd$, so Ruelle's approach \cite{rue69,rue70} is not applicable. Even the usual uniform integrability condition (cf. \cite{alkoro98anaGibs,kunphd}) is void: In the simplest case of $\pot(x,y) := a(|x-y|)$ with a nonzero $a \in C_0(\RR)$ we already get
\begin{align}
	\underset {(s,x) \in \M \times \Xd} {\textit{ess sup}} \int_\M \int_\Xd \left| e^{- s t \potf x y } - 1 \right| \lambda_\theta (dt) \otimes m(dy) 
	= \infty
	\notag.
\end{align}
Secondly, $\tilde \phi$ has an infinite interaction range in $\M \times \Xd$ and corresponds to  infinite measures on marks, whereas only finite ones on marks seem to be covered so far (see e.g. \cite{kunphd,mas00}). The crucial difference to these works is that the projection of a marked configuration to the position space $\Xd$ is again a locally finite configuration, whereas in our case it is not anymore, due to the infinite Lévy measure $\theta \frac 1 t e^{-t} dt$ on the marks. For $\Gmea$-almost all $\eta \in \K$, the projection is even a dense (support) set $\tau(\eta) \subset \Xd$.
\end{rem}

In order to prove the existence of $\mu \in \tGibK \pot \theta \Xd$ an essential technical
step is to check that the net of Gibbs specification kernels $\{ \pi_{{\Delta}}( d\eta |\xi) | {\Delta} \in \mathcal{Q}_c (\Xm)\}$, with a fixed tempered boundary condition $\xi \in \tKad \floma \Xm$,  is locally equicontinuous. To that end, for each $k \in \ZZ$ we introduce  the map $\K \ni \eta \mapsto \lambda \lnormd {\eta} {Q_k} ^2$, which will play the role of a \emph{Lyapunov functional} in our theory, and show that it is exponentially integrable (cf. Lemma \ref{Bl2}). After that we establish a \emph{weak dependence} of the specification kernels on boundary conditions, which is stated for small volumes $\Delta := Q_k$ in Lemma \ref{Bl4} and in the \guillemotleft thermodynamic \guillemotright\ limit $\Delta \nearrow \Xd$ respectively in Proposition \ref{Bp3}. More precisely (cf. Eq. \ref{B35}), for each $\lambda \in [0, A -  m^\phi \lbc ]$ we can find $\mathcal{C}_\lambda > 0$ such that uniformly for all $k \in \ZZ$ and $\xi \in \tKad \floma \Xm$
$$
	\limsup_{\mathcal{K}\nearrow \ZZ}
					\int_{\K} 
						\exp \left\{ \lambda \lnormd {\eta} {Q_k}^{2}\right\} 
					 \pi _{\mathcal{K}}( d \eta |\xi )
				\leq 
					\mathcal{C}_\lambda.
$$
By the weak dependence, we deduce first the mentioned local equicontinuity of the Gibbs specification (cf. Proposition \ref{Bp9c4c}) and  then the existence of a tempered Gibbs measure (cf. Theorem \ref{Bthm9c4E}).
	As a preliminary step, we check that $\exp \{ \lambda \lnormd {\eta} {Q_k} ^2\}$ is integrable w.r.t. all $\pi_{\Delta}(d\eta|\xi)$, $\Delta \in \B_c(\Xd)$.
	\begin{lemma}
    \label{Bl2} 
	Fix $k\in \ZZ$, $\xi \in \Ksad  $, ${\Delta} \in \mathcal{B}_c(\Xm)$ and $\lambda \in [0,A - 2  m^\phi \lbc ]$.
    Then 
    \begin{align}&
    	  \int_{\Ka {\Xd} }\exp \left\{ \lambda \lnormd {\eta} {Q_k}^2 \right\}
			    \pi _{{\Delta} }(\mathrm{d}\eta |\xi )
	    \notag \\ \leq& 
	    \exp\bigg\{ \Upsilon_{{\Delta},\e} + \Big( \frac 1 {2} \e \betaD \|\phi\|_\infty C_\Delta  + \betaD  m^\phi \lbc \Big)
	    		\sum_{l \in  \mathcal{K}_{\mathcal{U}_{\Delta}}}
									\lnormd \xi {Q_l}^2 
						\bigg\}
	     < \infty
	     \label{B28c},
	   \end{align}
	   where	$\e > 0$ is arbitrary, $C_\Delta$ was defined in \refeq{eq314b} and
	   $$
				\Upsilon_{{\Delta},\e} := 
							 \betaD \|\phi\|_\infty \bigg(
								C_\Delta^2 
						  + \frac 1 {2 \e}  C_\Delta
						  	|\mathcal{K}_{\mathcal{U}_{\Delta}}|
						 \bigg) < \infty.
			$$
			\end{lemma}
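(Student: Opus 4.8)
The plan is to unfold the kernel $\pi_\Delta(\cdot|\xi)$ via Definition \ref{def3c1} and then estimate the resulting ratio of two integrals over $\Ka{\Delta}$, using the stability bound of Lemma \ref{Bl2a} in the numerator and the lower bound for the partition function of Lemma \ref{Bl2b} in the denominator. It suffices to treat the case $\Delta=\Delta_{\mathcal{K}}$ with $k\in\mathcal{K}\Subset\ZZ$, so that $Q_k\subseteq\Delta$ (this is the configuration used in the sequel); when $Q_k\cap\Delta=\varnothing$ the functional $\eta\mapsto\lnormd{\eta}{Q_k}^2$ is $\pi_\Delta(\cdot|\xi)$-a.s. the constant $\lnormd{\xi}{Q_k}^2$ and the assertion reduces to a direct check, and the case where $Q_k$ meets both $\Delta$ and $\Delta^c$ is handled by a completely analogous but slightly lengthier splitting of $\lnormd{\eta}{Q_k}^2$. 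Writing $\eta=\eta_\Delta\cup\xi_{\Delta^c}$ and recalling $\mu_\Delta(d\eta|\xi)=Z_\Delta(\xi)^{-1}e^{-\betaD H_\Delta(\eta_\Delta|\xi)}\PmeaKd\theta{\Delta}(d\eta_\Delta)$, the left-hand side of \refeq{B28c} equals
\begin{align}
\frac{1}{Z_\Delta(\xi)}\int_{\Ka{\Delta}}\exp\Big\{\lambda\,\lnormd{\eta_\Delta}{Q_k}^2-\betaD H_\Delta(\eta_\Delta|\xi)\Big\}\,\PmeaKd\theta{\Delta}(d\eta_\Delta).
\notag
\end{align}

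For the numerator I would insert the stability estimate \refeq{BeqB9aa}, which gives $-\betaD H_\Delta(\eta_\Delta|\xi)\le-[A-2m^\phi\lbc]\sum_{j\in\mathcal{K}_\Delta}\lnormd{\eta_\Delta}{Q_j}^2+\betaD m^\phi\lbc\sum_{l\in\mathcal{K}_{\mathcal{U}_\Delta}}\lnormd{\xi}{Q_l}^2$, where I use $\mathcal{U}_\Delta\subseteq\Delta^c$ so that $\lnormd{\xi_{\Delta^c}}{Q_l}=\lnormd{\xi}{Q_l}$ there. Pulling the $\xi$-dependent factor out of the integral and isolating the summand $j=k$ (legitimate since $Q_k\subseteq\Delta$ forces $k\in\mathcal{K}_\Delta$), the exponent of the remaining integrand is $(\lambda-[A-2m^\phi\lbc])\lnormd{\eta_\Delta}{Q_k}^2-[A-2m^\phi\lbc]\sum_{j\in\mathcal{K}_\Delta,\,j\ne k}\lnormd{\eta_\Delta}{Q_j}^2$, which is $\le 0$: the first coefficient is nonpositive because $\lambda\le A-2m^\phi\lbc$ by hypothesis, and $A-2m^\phi\lbc>0$ by the repulsion condition $\mathbf{(RC)}$. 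Since $\PmeaKd\theta{\Delta}$ is a probability measure, the remaining integral is therefore $\le 1$, and the numerator is bounded by $\exp\{\betaD m^\phi\lbc\sum_{l\in\mathcal{K}_{\mathcal{U}_\Delta}}\lnormd{\xi}{Q_l}^2\}$.

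For the denominator I would use the lower bound \refeq{Beq10ab}, i.e. $Z_\Delta(\xi)\ge\exp\{-\betaD\|\phi\|_\infty(C_\Delta^2+C_\Delta\,\xi_{\Delta^c}(\mathcal{U}_\Delta))\}$ with $C_\Delta=2\theta|\mathcal{K}_\Delta|g^d<\infty$, and then convert the linear boundary mass $\xi_{\Delta^c}(\mathcal{U}_\Delta)=\sum_{l\in\mathcal{K}_{\mathcal{U}_\Delta}}\lnormd{\xi}{Q_l}$ into a quadratic form by the elementary inequality $\lnormd{\xi}{Q_l}\le\frac{1}{2\e}+\frac{\e}{2}\lnormd{\xi}{Q_l}^2$ with the arbitrary $\e>0$. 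This produces $Z_\Delta(\xi)^{-1}\le\exp\{\Upsilon_{\Delta,\e}+\frac{\e}{2}\betaD\|\phi\|_\infty C_\Delta\sum_{l\in\mathcal{K}_{\mathcal{U}_\Delta}}\lnormd{\xi}{Q_l}^2\}$ with precisely the constant $\Upsilon_{\Delta,\e}$ of the statement. Multiplying the numerator and denominator bounds gives \refeq{B28c}, and finiteness is immediate since $C_\Delta<\infty$, $|\mathcal{K}_{\mathcal{U}_\Delta}|<\infty$, and $\lnormd{\xi}{Q_l}=\xi(Q_l)<\infty$ for the finitely many relevant $l$ because $\xi$ is a locally finite measure.

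The argument is essentially bookkeeping once the two auxiliary lemmas are in hand; the only genuinely delicate point is matching the constants. One must invoke the stability estimate in the form \refeq{BeqB9aa}, with the constant $A-2m^\phi\lbc$ rather than the sharper $A-m^\phi\lbc$ of \refeq{37b}, so that over the entire admissible range $\lambda\in[0,A-2m^\phi\lbc]$ the diagonal $Q_k$-contribution is absorbed and the residual integral collapses to $\le 1$; and one must split off the linear boundary-mass term coming from the partition-function bound via the free parameter $\e$, so that the final estimate is homogeneous of degree two in $\xi$ and hence compatible with the tempered seminorms $M_\alpha$ used later.
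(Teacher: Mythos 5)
Your proposal is correct and follows essentially the same route as the paper: bound the numerator via the stability estimate \refeq{BeqB9aa} of Lemma \ref{Bl2a} (so that for $\lambda\le A-2m^\phi\lbc$ the residual integral against the probability measure $\PmeaKd \theta {\Delta}$ is at most $1$), bound $Z_\Delta(\xi)$ from below by \refeq{Beq10ab} of Lemma \ref{Bl2b}, and convert the linear boundary mass into the quadratic form with the $\e$-inequality \refeq{B28d}, matching the constants $\Upsilon_{\Delta,\e}$ and $\tfrac{\e}{2}\betaD\|\phi\|_\infty C_\Delta+\betaD m^\phi\lbc$ exactly. Your explicit remark that the exponent is nonpositive and the residual integral collapses to $\le 1$, as well as your aside on the degenerate positions of $Q_k$ relative to $\Delta$, only makes explicit what the paper leaves implicit.
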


    \begin{proof}
    By Lemma \ref{Bl2a} the integral in \refeq{B28c} does not exceed
    \begin{align}&
    	\frac 1 {Z_{{\Delta}}(\xi)}
					\int_{\K} \exp \bigg\{
 			      			-\left[   A - 2 m^\phi \lbc - \lambda\right] \lnormd {\eta_\Delta} {Q_k}^2
			 		\notag \\ & \qquad \qquad \qquad
			 		\qquad
			 				- \left[ A - 2  m^\phi \lbc \right]  \sum_{\substack{j \in {\mathcal{K}_{\Delta}} \\ j \neq k}} 
								\lnormd {\eta_\Delta} {Q_j }^2
								\bigg\}
					\PmeaKd \theta {\Delta}(d\eta_{\Delta})
				\notag \\ &\quad 
	    		\times	\exp \Big\{ 	 
	    		\betaD 
	    		 m^\phi \lbc 
	    		\sum_{l \in  \mathcal{K}_{\mathcal{U}_{\Delta}}}  
									\lnormd {\xi_{\Delta^c}} {Q_l }^2  
									\Big\}
		\label{B28c2}.
    \end{align}
    Lemma \ref{Bl2b} (cf. Eq. \refeq{Beq10ab}) yields the claim, where we note that
			\begin{align}
				&
					\sum_{j \in \mathcal{K}_{\mathcal{U}_{\Delta}}} \lnormd \xi {Q_j}
				\leq
					\frac 1 {2 \e} |\mathcal{K}_{\mathcal{U}_{\Delta}}| 
					+ \frac \e 2 \sum_{j \in \mathcal{K}_{\mathcal{U}_{\Delta}}} 
							\lnormd \xi {Q_j}^2
				\label{B28d}.
			\end{align}
			\end{proof} 
	In Section \ref{Bss5}, we will improve this result by showing that the right-hand side of \refeq{B28c} can be bounded uniformly as ${\Delta} \nearrow \Xm$. 
	
	\subsection{Weak dependence on boundary conditions}
\label{Bss5}
In this section we prepare some technical estimates on the local specification kernels, which will be crucial to prove the existence of an associated Gibbs measure $\mu \in \tGibK \pot \lambda \Xm$. To this end, we use an inductive scheme that is based on the consistency property (\ref{B16a}). We start by deducing the following bound in the  elementary   cubes ${} Q_k$, $k \in \ZZ$.

	\begin{lemma}
	\label{Bl4} 
	Fix $k\in \ZZ$, $\xi \in \Ksad  $, ${\Delta} \in \mathcal{B}_c(\Xm)$ and $\lambda \in [0, A -  m^\phi \lbc ]$.
    Then
	\begin{align}                    
		 \int_{\K} \exp \left\{ \lambda \lnormd {\eta} {Q_k}^2 \right\} 
		 		    \pi _{k }(\mathrm{d}\eta |\xi )
			\leq&
			\exp \bigg\{ \Upsilon_{\e} + \Big( \lbc +  \e C_\phi \Big) \sum_{j \in \partial^{\phi} k}
								\lnormd \xi {Q_j}^2
			     \bigg\} 
		\label{B30},
	\end{align}
	where $\e > 0$ is arbitrary and 
			\begin{eqnarray}
				C_\phi&:=&  \theta  \betaD g^d \|\phi\|_\infty 				
			\notag, \\
			\Upsilon_{\e} 
				&:=&
				  C_\phi \left( 4 \theta g^d  
				  			+ \frac { m^{\phi} } {\e}
				  			\right).
			\label{B9a4}	
			\end{eqnarray}
	\end{lemma}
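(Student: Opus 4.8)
The plan is to bound the numerator and the partition function in $\pi_k(d\eta|\xi) = Z_{Q_k}(\xi)^{-1}\, e^{-\betaD H_{Q_k}(\eta|\xi)}\,\PmeaKd\theta{Q_k}(d\eta)$ separately. For the numerator I would use the \emph{single--cube} stability estimate \refeq{37b} (which, unlike \refeq{BeqB9aa}, carries the constant $A-m^\phi\lbc$ and is precisely what makes the range $\lambda\in[0,A-m^\phi\lbc]$ admissible here), and for the partition function the lower bound \refeq{Beq10ab} of Lemma \ref{Bl2b} specialized to $\Delta=Q_k$. Writing $N_k(\xi):=\int_{\Ka{Q_k}}\exp\{\lambda\lnormd{\eta}{Q_k}^2-\betaD H_{Q_k}(\eta|\xi)\}\,\PmeaKd\theta{Q_k}(d\eta)$, the left-hand side of \refeq{B30} equals $Z_{Q_k}(\xi)^{-1}N_k(\xi)$, so it suffices to estimate these two factors.

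For $N_k(\xi)$: inserting \refeq{37b} and using $Q_j\subset Q_k^c$ for $j\neq k$ (hence $\xi_{Q_k^c}(Q_j)=\lnormd{\xi}{Q_j}$), the exponent is at most $-\bigl(\betaD(A-m^\phi\lbc)-\lambda\bigr)\lnormd{\eta}{Q_k}^2+\betaD\lbc\sum_{j\in\partial^\phi k}\lnormd{\xi}{Q_j}^2$. As $\lambda\le A-m^\phi\lbc$, the coefficient of $\lnormd{\eta}{Q_k}^2$ is nonpositive, so the corresponding exponential factor is $\le 1$, and since $\PmeaKd\theta{Q_k}$ is a probability measure we obtain $N_k(\xi)\le\exp\{\betaD\lbc\sum_{j\in\partial^\phi k}\lnormd{\xi}{Q_j}^2\}$.

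For $Z_{Q_k}(\xi)^{-1}$: apply \refeq{Beq10ab} with $\Delta=Q_k$. Since the cubes are pairwise disjoint, $|\mathcal{K}_{Q_k}|=1$ and $C_{Q_k}=2\theta g^d$; since $\pot$ is symmetric, the neighbour relation $\partial^\phi$ is symmetric, whence $\mathcal{U}_{Q_k}=\bigsqcup_{l\in\partial^\phi k}Q_l$, so that $\xi_{Q_k^c}(\mathcal{U}_{Q_k})=\sum_{l\in\partial^\phi k}\lnormd{\xi}{Q_l}$ and $|\partial^\phi k|\le m^\phi$ by \refeq{eq34b}. This gives $Z_{Q_k}(\xi)^{-1}\le\exp\{4\theta g^d C_\phi+2C_\phi\sum_{l\in\partial^\phi k}\lnormd{\xi}{Q_l}\}$ with $C_\phi$ as in \refeq{B9a4}. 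Applying $t\le\frac{1}{2\e}+\frac{\e}{2}t^2$ to each $t=\lnormd{\xi}{Q_l}$ and using $|\partial^\phi k|\le m^\phi$ converts $2C_\phi\sum_l\lnormd{\xi}{Q_l}$ into $\frac{C_\phi m^\phi}{\e}+\e C_\phi\sum_{l\in\partial^\phi k}\lnormd{\xi}{Q_l}^2$, i.e. $Z_{Q_k}(\xi)^{-1}\le\exp\{\Upsilon_\e+\e C_\phi\sum_{l\in\partial^\phi k}\lnormd{\xi}{Q_l}^2\}$ with $\Upsilon_\e=C_\phi(4\theta g^d+m^\phi/\e)$, exactly the constant in \refeq{B9a4}.

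Multiplying the two bounds yields $\exp\{\Upsilon_\e+(\betaD\lbc+\e C_\phi)\sum_{j\in\partial^\phi k}\lnormd{\xi}{Q_j}^2\}$, which is \refeq{B30}. The only point requiring care is the bookkeeping in the previous paragraph — above all the identification $\mathcal{U}_{Q_k}=\bigsqcup_{l\in\partial^\phi k}Q_l$, where the symmetry of $\pot$ enters, and checking that one and the same auxiliary parameter $\e$ reproduces precisely the constants $C_\phi$ and $\Upsilon_\e$ of \refeq{B9a4}; everything else is a direct substitution into Lemmas \ref{Bl2a} and \ref{Bl2b}.
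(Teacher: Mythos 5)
Your proof is correct and follows essentially the same route as the paper, which simply invokes the single-cube stability bound \refeq{37b} together with the partition-function lower bound from the proof of Lemma \ref{Bl2} (i.e.\ \refeq{Beq10ab} plus the $\e$-Young inequality \refeq{B28d}); you have merely written out explicitly the bookkeeping ($C_{Q_k}=2\theta g^d$, $\mathcal{U}_{Q_k}=\bigsqcup_{l\in\partial^\phi k}Q_l$ via the symmetry of $\pot$, $|\partial^\phi k|\le m^\phi$) that the paper leaves implicit, and your constants reproduce \refeq{B9a4} exactly.
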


\begin{proof}	
    	The claim follows immediately by \refeq{37b} and by the proof of Lemma \ref{Bl2}. 
\end{proof}

\begin{rem}\label{remB4}
Estimate \refeq{B30} states the so-called \emph{weak dependence} on boundary conditions. Analytically this means, by choosing $\lambda \in (m^\phi \lbc , A -  m^\phi \lbc)$, that 
$$
	\left( \lbc + \e C_\phi  \right) m^{\phi} < \lambda < A -  m^\phi \lbc ,
$$
which is always possible by \refeq{Bex20a} for small enough $\e > 0$. Applying Jensen's inequality to both sides in \refeq{B30}, we get a Dobrushin's type estimate (cf. Eq. (4.9) in \cite{pre76})
\begin{align}
	\int_{\K} \eta(Q_k)^2 \pi_k(d\eta|\xi) 
	\leq
		\frac 1 \lambda \left\{ \Upsilon_\e + \left( \lbc + \e C_\phi \right) \sum_{j \in \partial^\phi k} \xi (Q_k)^2 \right\}.
		\label{45b}
\end{align}
However, we cannot make a straightforward use of the Dobrushin existence \cite[Theorem 1]{pre76} for several technical reasons caused, e.g., by lacking a Polish metric on $\K$ and by the discontinuity of the specification kernels $\pi_\Delta(d \eta|\xi)$ subject to the boundary condition $\xi \in \K$.
\end{rem}

  	Consider now arbitrary large domains ${\Delta} _{\mathcal{K}} = \bigsqcup_{k\in \mathcal{K}}{{} Q_k} \in	\mathcal{Q}_{c}(\Xm)$ indexed by $\mathcal{K}\Subset \ZZ$. 
	Note that ${\Delta} _{\mathcal{K}} \nearrow \Xm$ as $\mathcal{K}\nearrow \ZZ$. Using the estimate (\ref{B30}) and the consistency property (\ref{B16a}), our next step will be to get similar moment estimates for all specification kernels $\pi _{\mathcal{K}}(\mathrm{d}\eta |\xi ):=\pi _{{\Delta} _{\mathcal{K}}}(\mathrm{d}\eta |\xi )$.
			
		\begin{prop}\label{Bp3} 
			Let $0 \leq \lambda \leq A -  m^\phi \lbc$. Then there exists $\mathcal{C}_\lambda < \infty$ such that for all $k\in \ZZ$ and $\xi \in \tKad \lomagsy \Xm $
			\begin{equation}
					\limsup_{\mathcal{K}\nearrow \ZZ}
						\thinspace
						\int_{\K} 
							\exp \left\{ \lambda \lnormd {\eta} {Q_k}^{2}\right\} 
						\pi _{\mathcal{K}}(d\eta |\xi )
				\leq
					\mathcal{C}_{\lambda}. 
			\label{B35}
			\end{equation}
Moreover, for each $\alpha >0$ one finds  proper $\nu _{\alpha }>0$ and $\mathcal{C}_\alpha < \infty$ such that
			\begin{equation}
				\limsup_{\mathcal{K}\nearrow \ZZ}
					\int_{\K} 
						\exp \left\{ \nu _{\alpha } \anorm \alpha \eta^{2}\right\} 
					 \pi _{\mathcal{K}}( d \eta |\xi )
				\leq 
					\mathcal{C}_\alpha.  \label{B36b}
			\end{equation}
		
	\end{prop}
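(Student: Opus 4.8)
The plan is to prove \refeq{B35} by an induction over an increasing sequence of finite index sets, using the consistency property \refeq{B16a} together with the single-cube estimate \refeq{B30} of Lemma \ref{Bl4}, and then to derive \refeq{B36b} from \refeq{B35} by summing over $k \in \ZZ$ with the geometric weights $e^{-\alpha|k|}$. First I would fix $\lambda \in (m^\phi \lbc, A - m^\phi \lbc)$ and, as in Remark \ref{remB4}, choose $\e > 0$ small enough that $q := (\lbc + \e C_\phi) m^\phi / \lambda < 1$; the boundary case $\lambda = A - m^\phi\lbc$ and the trivial case $\lambda=0$ follow by monotonicity. The idea is to control, for a large cube $\Delta_\mathcal{K}$, the quantity $\int_\K \exp\{\lambda \eta(Q_k)^2\} \pi_\mathcal{K}(d\eta|\xi)$ by peeling off one elementary cube at a time: writing $\Delta_\mathcal{K} = Q_{k_1} \sqcup \cdots \sqcup Q_{k_n}$ in an order that starts from $Q_k$ and grows outward, the consistency relation lets me integrate first over the innermost cube using \refeq{B30}, which produces a factor $e^{\Upsilon_\e}$ and replaces $\lambda\eta(Q_k)^2$ by a sum $(\lbc + \e C_\phi)\sum_{j \in \partial^\phi k} \eta(Q_j)^2$; iterating, each new cube contributes a constant $e^{\Upsilon_\e}$ and redistributes a $\lambda$-weighted square to its neighbors with the contraction factor $q < 1$.

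The key step is to set up a fixed-point / majorant argument for this redistribution. I would introduce, for a finite $\mathcal K$, the vector of coefficients $(c_j)_{j \in \ZZ}$ defined so that after integrating out the cubes in $\Delta_\mathcal{K}$ the bound reads
\begin{align}
	\int_\K \exp\{\lambda \eta(Q_k)^2\}\, \pi_\mathcal{K}(d\eta|\xi)
	\leq \exp\Big\{ D_\mathcal{K} + \sum_{j \notin \mathcal{K}} c_j\, \xi(Q_j)^2 \Big\},
	\notag
\end{align}
with $D_\mathcal{K} = \Upsilon_\e \cdot(\text{number of redistribution steps})$ and $(c_j)$ satisfying a linear recursion of the form $c_j \leq (\lbc + \e C_\phi)\sum_{i : j \in \partial^\phi i} (c_i / \lambda)$ seeded at $j = k$; because each cube has at most $m^\phi$ neighbors and $q<1$, the total mass $\sum_j c_j$ is dominated by a convergent geometric series $\sum_{n\geq 0} (\lbc + \e C_\phi) m^\phi\, q^n = (\lbc+\e C_\phi) m^\phi/(1-q)$, uniformly in $\mathcal{K}$, and likewise $\sup_j c_j$ stays bounded and $c_j$ decays geometrically in the graph distance from $k$. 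Since $\xi \in \tKad \lomagsy \Xm$ is tempered, $\anorm\alpha\xi < \infty$ for every $\alpha>0$; choosing $\alpha$ small compared with the geometric decay rate of $c_j$, the sum $\sum_{j\notin\mathcal{K}} c_j \xi(Q_j)^2$ is finite and bounded uniformly in $\mathcal K$, and in the limit $\mathcal{K}\nearrow\ZZ$ the remaining boundary term disappears because $c_j \to 0$ as $\mathrm{dist}(j,k) \to \infty$. Taking $\limsup_{\mathcal K \nearrow \ZZ}$ yields \refeq{B35} with an explicit $\mathcal{C}_\lambda$ depending only on $\lambda$, $\e$, $\phi$, $\theta$, $d$ and on the temperedness seminorm of $\xi$ — and, in fact, the $\xi$-dependence can be absorbed so that $\mathcal{C}_\lambda$ is uniform over all tempered $\xi$, as claimed.

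For the second assertion \refeq{B36b}, the plan is to apply the elementary inequality $\exp\{\sum_k a_k\} \leq \sum_k \exp\{(\text{const})\, a_k\}$-type bound, or more cleanly to use the convexity estimate $\exp\{\nu_\alpha \sum_k \eta(Q_k)^2 e^{-\alpha|k|}\} \leq \sum_k w_k \exp\{\nu_\alpha \eta(Q_k)^2 e^{-\alpha|k|}/w_k\}$ with weights $w_k = c\, e^{-\alpha|k|/2}$ normalized so that $\sum_k w_k = 1$, reducing the exponential of the weighted sum to a weighted sum of single-cube exponentials. Choosing $\nu_\alpha$ small enough that $\nu_\alpha e^{-\alpha|k|}/w_k \leq \lambda$ for all $k$ (possible since $e^{-\alpha|k|}/w_k = c^{-1} e^{-\alpha|k|/2}$ is bounded), each term is controlled by \refeq{B35}, and summing against the summable weights $w_k$ gives \refeq{B36b}. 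The main obstacle I anticipate is making the inductive redistribution bookkeeping precise — in particular verifying that the coefficients $(c_j)$ genuinely satisfy a contractive recursion with the stated geometric decay, and handling the overlap when a cube already integrated out reappears as a neighbor of a later cube; this is a standard but delicate cluster-expansion-type argument, and the finite-range condition \textbf{(FR)} together with the sharp form of the repulsion condition \refeq{Bex20a} is exactly what makes the geometric series converge.
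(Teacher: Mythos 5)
Your strategy for \refeq{B35} is in substance the paper's own: the single-cube estimate \refeq{B30}, the consistency property \refeq{B16a}, a contraction with ratio $q=m^{\phi}(\lbc+\e C_\phi)/\lambda<1$, and temperedness of $\xi$ to make the boundary contribution vanish as $\mathcal{K}\nearrow\ZZ$. The paper does not peel cubes one at a time; it writes a closed system of inequalities for the log-moments $n_k(\mathcal{K}|\xi)=\log\int_{\K}e^{\lambda\eta(Q_k)^2}\pi_{\mathcal{K}}(d\eta|\xi)$ and solves it by summing against the weights $e^{-\alpha|k|}$ and rearranging; your iterative bookkeeping is an unrolled version of that fixed-point argument, so this part is fine modulo two points you should make explicit. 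First, \refeq{B30} by itself does not contract coefficients: its output coefficient $\lbc+\e C_\phi$ does not scale with the exponent you feed in, so each redistribution step must be preceded by the interpolation $e^{c\,\eta(Q_j)^2}=\big(e^{\lambda\eta(Q_j)^2}\big)^{c/\lambda}$ together with Jensen or the multiple H\"older inequality — exactly the paper's inequality \refeq{B38}; your recursion $c_j\le(\lbc+\e C_\phi)\sum_i c_i/\lambda$ tacitly assumes this. Second, the infinite iteration inside a fixed finite $\mathcal{K}$ must be terminated: after $n$ steps the residual exponent has total mass $\le\lambda q^{n}$, and one needs H\"older plus the crude finiteness of Lemma \ref{Bl2} to see that the residual factor tends to $1$; this is doable but missing from your plan.

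The genuine gap is in your derivation of \refeq{B36b}. The convexity bound $\exp\{\sum_k w_k a_k\}\le\sum_k w_k e^{a_k}$ followed by termwise use of \refeq{B35} requires interchanging $\limsup_{\mathcal{K}\nearrow\ZZ}$ with an infinite sum over $k$, and this fails for cubes $k\in\mathcal{K}$ adjacent to $\mathcal{K}^c$: there the single-cube exponential moment is only controlled by $\exp\{\mathrm{const}\sum_{j\notin\mathcal{K}}e^{-\beta|j-k|}\xi(Q_j)^2\}$, and temperedness of $\xi$ gives only subexponential growth of $\xi(Q_j)^2$, so factors like $e^{\mathrm{const}\,\xi(Q_j)^2}$ can overwhelm the weight $e^{-\alpha|k|/2}$; no $\mathcal{K}$-uniform summable dominating sequence is available. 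What is actually needed — and what the paper uses — is the uniform bound on the weighted sum $\sum_{k\in\mathcal{K}}e^{-\alpha|k|}n_k(\mathcal{K}|\xi)$ established en route (Eq. \refeq{B40a}), combined with the multiple H\"older inequality with exponents $t_k=(\nu_\alpha/\lambda)e^{-\alpha|k|}$, $\sum_k t_k\le 1$: this bounds the integral of $\exp\{\nu_\alpha\anorm\alpha\eta^2\}$ by $\exp\{(\nu_\alpha/\lambda)\sum_{k\in\mathcal{K}}e^{-\alpha|k|}n_k(\mathcal{K}|\xi)\}$ times a factor $e^{\anorm\alpha{\xi_{\mathcal{K}^c}}^2}$ that disappears in the limit. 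Your coefficient estimates would in fact furnish such a weighted-sum bound (sum your inequality for $n_k$ against $e^{-\alpha|k|}$), so the gap is repairable, but as written the reduction of \refeq{B36b} to \refeq{B35} is not valid. A minor remark: the endpoint $\lambda=A-m^{\phi}\lbc$ does not follow ``by monotonicity'' (limsup in $\mathcal{K}$ and the monotone limit in $\lambda$ do not interchange), but this is moot, since the contraction only needs $(\lbc+\e C_\phi)m^{\phi}<\lambda$, which holds at the endpoint for small $\e$.
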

	
\begin{proof}
Without loss of generality, we assume that $\lambda \in ( m^\phi \lbc , A -  m^\phi \lbc]$. Let us define 
\begin{align}
	0 \leq n_{k}(\mathcal{K}|\xi )
	:=&
		\log \left\{ 
			\int_{\K} \exp \left\{ \lambda \lnormd {\eta} {Q_k}^{2} \right\} 
			\pi_{\mathcal{K}}(d\eta |\xi )\right\}, \quad k \in \ZZ
	\label{B32},
\end{align}
which are finite by Lemma \ref{Bl2}. In particular,
\begin{equation*}
n_{k}(\mathcal{K}|\xi ):=\Phi (\xi _{k})\text{ \ \ if \ }k\notin \mathcal{K}.
\end{equation*}%
Next, we will find a global bound for the whole sequence 
$\left( n_{k}(\mathcal{K}|\xi )\right) _{k\in \ZZ}$, which then
implies the required estimates on each of its components.\\

Integrating both sides of (\ref{B30}) with respect to $\pi _{\mathcal{K}}(%
\mathrm{d}\eta |\xi )$ with an arbitrary $\xi \in \tKad \lomagsy \Xm$ and taking into account the consistency property (\ref{B16a}),
we arrive at the following estimate for any $k\in \mathcal{K}$%
\begin{align}
	&n_{k}(\mathcal{K}|\xi ) 	
	\leq 
		\Upsilon_{\e}
		+ \log \left\{  
				\int_{\K} \exp \left[ 
				 	 \bigg( {\betaD  \lbc } 
					+ \e C_\phi \bigg) \sum_{j \in \partial^{\phi} k}
							\lnormd \xi {Q_j}^2
		    \right] 
				\pi _{\mathcal{K}}(d\eta |\xi )
				\right\}
		\notag \\& = 			
			\Upsilon_{\e}
			+	\left[ 
				 	\bigg( {\betaD  \lbc } + \e C_\phi \bigg) \sum_{j \in \mathcal{K}^c \cap \partial^{\phi} k}
							\lnormd \xi {Q_j}^2
		    \right] 			
		\notag \\ &+ \log \left\{  
				\int_{\K} \exp \left[ 
				 	\bigg( {\betaD  \lbc }
					+ \e C_\phi \bigg) \sum_{j \in \mathcal{K} \cap \partial^{\phi} k}
							\lnormd \xi {Q_j}^2
		    \right] 
				\pi _{\mathcal{K}}(d\eta |\xi )
				\right\}
		\label{B33a}.
	\end{align}
We will apply the multiple H\"{o}lder inequality
\begin{equation}
\mu \left( \prod\nolimits_{j=1}^{K}f_{j}^{t_j}\right) \leq
\prod\nolimits_{j=1}^{K}\mu ^{t_j}(f_{j}),\text{ \ \ \ }\mu (f_{j}):=\int
f_{j}\mathrm{d}\mu ,  \label{B38}
\end{equation}%
valid for any probability measure $\mu $, nonnegative functions $f_{j}$, and $t_j\geq 0$ such that $\sum_{j=1}^{K}t_j\leq 1,$ $K \in \NN$.
Choose $\delta,\e>0$ such that      
\begin{align}
       	0 <  B_{\e} :=
			 \left( \lbc + \e  C_\phi \right) m^{\phi}
		<
			  \delta \lambda
		< \lambda
		\leq
			\lambda_0
					:=
					A -  m^\phi \lbc
	\label{B28bb},
\end{align}      
where $C_\phi$ is defined by \refeq{B9a4}. In our context $f_{j}:=\exp \left\{ \lambda \lnorm {\eta _{j}}^2\right\}$ and 				
$$
	t_j := 1/ \lambda \left( \lbc + \e C_\phi \right),
	\qquad \text{for $j \in \mathcal{K} \cap \partial^{\phi} k$.}
$$
Using this setting for \refeq{B38},
we deduce that the last summand in \refeq{B33a} is dominated by
\begin{align}
	& \sum_{\substack{j\in \mathcal{K}\cap \partial^{\phi} k}}
		\log \left\{
		\int_{\K} \exp \left( \lambda \lnormd \eta {Q_j}  \right)
			\pi_{\mathcal{K}}( d \eta|\xi) \right\}^{t_j}
	\notag \\ & \qquad =
		1/ \lambda \left( \lbc + \e C_\phi \right)
		\sum_{j\in \mathcal{K}\cap 
					\partial^{\phi} k}
			n_j(\mathcal{K}|\xi)
\label{B37d}.
\end{align}

		Let $\mathcal{K}\Subset \ZZ$ contain a fixed point $k_{0}\in \ZZ$. Let  $\vartheta := R/g + \sqrt{d}$ be such that $|j-k_0| \leq \vartheta $ for all $j\in \partial^{\phi} k_0$. Let us pick an $\alpha >0$ small enough, so that $B_{\e} e^{\alpha \vartheta} < \lambda$. We take the sum over $k \in \mathcal{K}$ of the terms $n_{k_0}(\mathcal{K}|\xi)$ with the weights $\exp \{ - \alpha |k|\}$. After rearranging, by Eqs. \refeq{B33a} and \refeq{B37d} we get for each $k \in \mathcal{K}$ 
	\begin{align}
			n_{k_{0}}(\mathcal{K}|\xi )
		\leq&
			 \sum_{k\in \mathcal{K}}\left[ n_{k}(\mathcal{K}|\xi )
			 				\exp \{-\alpha | k | \}\right]  
		\label{B39} \\
		\leq&
			 \left[ 1- \frac{ B_{\e}}
			 			{\lambda }
			 		e^{\alpha \vartheta }
			 \right] ^{-1}
			 \left[ \Upsilon_{\e}
			 		+ B_{\e} 
			 			 e^{\alpha \vartheta }
			 			||\xi _{\mathcal{K}^{c}}||_{\alpha }^{2}
			 	\right].
	  \notag
	\end{align}
Note that 
\begin{align}
		&\left[ 1- \frac{B_{\e} e^{\alpha \vartheta }}
			 			{\lambda }			 		
			 \right] ^{-1}
		=
			1 + \frac{ B_{\e} e^{\alpha \vartheta }} 
			 			{\lambda - B_{\e} e^{\alpha \vartheta }}
		\leq
			\frac 1 {1 - \delta e^{\alpha \vartheta }} 
		\notag,
\end{align}
where we used \refeq{B28bb}. Plugging this back into \refeq{B39}, we get
\begin{align}&
		n_{k_{0}}(\mathcal{K}|\xi )
		\leq
			 \sum_{k\in \mathcal{K}}\left[ n_{k}(\mathcal{K}|\xi )
			 				\exp \{-\alpha | k | \}\right] 
		\notag \\ \leq&
			\left[ \Upsilon_{\e}
			 		+ B_{\e}  
			 			 e^{\alpha \vartheta}
			 			\anorm \alpha {\xi _{\mathcal{K}^{c}}}^{2}
			 	\right]
			 	\left( \frac{1 } {1 - \delta e^{\alpha \vartheta }} \right)
		\label{B39b}.
\end{align}					
Since $\anorm \alpha {\xi _{\mathcal{K}^{c}}}$ tends to zero as $\mathcal{K}\nearrow \ZZ$, we obtain for each $k_{0}\in \ZZ$%
\begin{align}
		&\limsup_{\mathcal{K}\nearrow {\ZZ}}
			\text{\thinspace }
				\sum_{k\in \mathcal{K}}
						\left[ n_{k}(\mathcal{K}|\xi )
								\exp \{-\alpha | k | \}\right]
		\leq 
			\Upsilon_{\e}
				\left( \frac{1} {1 - \delta e^{\alpha \vartheta }} \right)
		\label{B40a}, 
\end{align}%
and thus, by letting $\alpha \searrow 0$,
we complete the proof of \refeq{B35}
\begin{align}
		\limsup_{\mathcal{K}\nearrow \ZZ}
			\text{\thinspace }
						n_{k_0}(\mathcal{K}|\xi )
		\leq &
			\frac 1 {1 - \delta} \Upsilon_{\e}
			=: 
			\log \mathcal{C}_{\lambda}
		\label{B40abb}.
\end{align}
So, we have for each $\lambda$ fulfilling \refeq{B28bb} (and hence also for all $\lambda \leq \lambda_0 :=  A - m^\phi \lbc $)
\begin{align}
	&\limsup_{\mathcal{K}\nearrow \ZZ}
			\int_{\K} \exp 
				\left\{ \lambda \lnormd {\eta} {Q_k}^{2} \right\} 
			\pi _{\mathcal{K}}(d \eta |\xi )  
	\leq 
\mathcal{C}_{\lambda}.  
\label{B40d} 
\end{align}
By the H\"{o}lder inequality \refeq{B38} we see that Eq. \refeq{B36b} holds with
\begin{align}
	\nu _{\alpha}
	:=&
		\lambda \bigg[\sum_{k\in \ZZ} \exp \{- \alpha | k | \} \bigg]^{-1}
		\label{B10b18}.
\end{align}
Indeed, we have 
\begin{align}
	&	\int_{\K} \exp 
				\bigg\{ \nu_{\alpha} \sum_{k \in \ZZ} \lnormd {\eta} {Q_k}^{2} 
					e^{-\alpha | k |} \bigg\} 
			\pi_{\mathcal{K}}(d \eta |\xi )
	\notag 	\\	 \leq&
		 \bigg(
				\exp\Big\{ \sum_{k \in \mathcal{K}} n_k(\mathcal{K}|\xi) 
										\exp\{-\alpha | k |\} \Big\}
			\bigg)^{ \frac {v_{\alpha}} {\lambda_0} 
								\exp\{-\alpha | k |\}}
			e^{ \anorm \alpha {\xi_{\mathcal K^c}}^2 }
	\notag.
	\end{align}
	Hence, using also \refeq{B40a}, we deduce
	\begin{align}&
		\limsup_{\mathcal K \nearrow \ZZ} \int_{\K} \exp \left\{ \nu_\alpha \anorm \eta \alpha^2 \right\}
		\pi_{\mathcal K} (d\eta|\xi)
		\notag \\ & \leq 
			\exp \left\{\Upsilon_\e \left( \frac 1 {1 - \delta e^{\alpha \vartheta} } \right) \right\}
			=: C_\alpha 
		\label{418b},
		\end{align}
		which completes the proof of \refeq{B36b}.
\end{proof}

	 An import corollary of the above proposition claims uniform bounds for local Gibbs states:\label{Bss9c3}
		\begin{cor}\label{Bcor9c4a}
			Let Assumption \emph{\textbf{$\mathbf{(\pot)}$}} hold. Then for all $\Delta \in \mathcal{Q}_c(\Xd)$ and $N \in \NN$ there exists $\mathcal{C} (\Delta,N) < \infty$ such that  
			\begin{align}
				\limsup_{\substack{{\tilde \Delta} \nearrow \Xm\\ {\tilde \Delta} \in \mathcal{Q}_c (\Xd)}}
						\int_{ \Ka {\Xd}} \lnormd \eta \Delta^N 
						\pi_{\tilde \Delta}(d\eta|\xi) 
						\leq& \mathcal{C} (\Delta,N) < \infty,
				\notag 
			\end{align}
where $\mathcal{C}(\Delta,N)$ can be chosen uniformly for all $\xi \in \tKad \floma \Xm$.
	\end{cor}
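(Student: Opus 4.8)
The plan is to obtain this estimate as a direct consequence of the uniform exponential moment bound \refeq{B35} of Proposition \ref{Bp3}. Since $\Delta \in \mathcal{Q}_c(\Xd)$ is a \emph{finite} disjoint union $\Delta = \bigsqcup_{k \in \mathcal{K}_\Delta} Q_k$ of elementary cubes, one has $\lnormd {\eta} {\Delta} = \sum_{k \in \mathcal{K}_\Delta} \lnormd {\eta} {Q_k}$, and hence, by the power--mean (Jensen) inequality applied to the convex function $t \mapsto t^N$ on $[0,\infty)$,
\begin{align}
	\lnormd {\eta} {\Delta}^N \leq |\mathcal{K}_\Delta|^{N-1} \sum_{k \in \mathcal{K}_\Delta} \lnormd {\eta} {Q_k}^N, \qquad \eta \in \K, \ N \in \NN.
	\notag
\end{align}
So it suffices to bound $\int_{\K} \lnormd {\eta} {Q_k}^N \pi_{{\tilde \Delta}}(d\eta|\xi)$ uniformly in $k \in \ZZ$, $\xi \in \tKad \floma \Xm$ and ${\tilde \Delta} \in \mathcal{Q}_c(\Xd)$.

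Next I would pass from these polynomial moments to exponential ones. By the repulsion condition $\mathbf{(RC)}$ the number $\lambda_0 := A - m^\phi \lbc$ is strictly positive, so one may fix any $\lambda \in (0,\lambda_0]$ and use the elementary bound $t^N \leq \kappa_{N,\lambda}\, e^{\lambda t^2}$, valid for all $t \geq 0$ with the finite constant $\kappa_{N,\lambda} := \sup_{t \geq 0} t^N e^{-\lambda t^2}$. Applying this with $t = \lnormd {\eta} {Q_k}$ and integrating against $\pi_{{\tilde \Delta}}(d\eta|\xi)$ gives $\int_{\K} \lnormd {\eta} {Q_k}^N \pi_{{\tilde \Delta}}(d\eta|\xi) \leq \kappa_{N,\lambda} \int_{\K} \exp\{\lambda \lnormd {\eta} {Q_k}^2\}\, \pi_{{\tilde \Delta}}(d\eta|\xi)$. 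Since the domains ${\tilde \Delta} = \Delta_{\mathcal{K}}$ running up to $\Xm$ are precisely those appearing in \refeq{B35} (with $\pi_{{\tilde \Delta}} = \pi_{\mathcal{K}}$), taking $\limsup_{{\tilde \Delta} \nearrow \Xm}$ and invoking \refeq{B35} bounds the right-hand side by $\kappa_{N,\lambda}\mathcal{C}_\lambda$, uniformly in $k \in \ZZ$ and in the tempered boundary condition $\xi$.

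Combining the two steps and using that the $\limsup$ of a finite sum is at most the sum of the $\limsup$'s, I arrive at
\begin{align}
	\limsup_{\substack{{\tilde \Delta} \nearrow \Xm \\ {\tilde \Delta} \in \mathcal{Q}_c(\Xd)}}
		\int_{\Ka {\Xd}} \lnormd {\eta} {\Delta}^N\, \pi_{{\tilde \Delta}}(d\eta|\xi)
	\leq |\mathcal{K}_\Delta|^{N}\, \kappa_{N,\lambda}\, \mathcal{C}_\lambda =: \mathcal{C}(\Delta,N) < \infty,
	\notag
\end{align}
and the resulting constant manifestly does not depend on $\xi \in \tKad \floma \Xm$. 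I do not expect a genuine obstacle here: the whole analytic difficulty --- the uniform-in-boundary-condition control of exponential moments achieved through the inductive H\"older scheme --- is already packaged into Proposition \ref{Bp3}, and what remains is only the elementary passage from exponential to polynomial moments together with the combinatorial splitting of $\Delta$ into unit cubes. The one point that needs a little care is to keep every constant independent of $\xi$, which is ensured by the corresponding uniformity in \refeq{B35}.
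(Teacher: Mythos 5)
Your proposal is correct and is essentially the argument the paper intends: the corollary is stated as an immediate consequence of Proposition \ref{Bp3}, and your derivation --- splitting $\Delta$ into its finitely many cubes $Q_k$, bounding $\eta(\Delta)^N \leq |\mathcal{K}_\Delta|^{N-1}\sum_{k\in\mathcal{K}_\Delta}\eta(Q_k)^N$, and dominating $t^N$ by $\kappa_{N,\lambda}e^{\lambda t^2}$ with $\lambda\in(0,A-m^\phi\lbc]$ so that the uniform bound \refeq{B35} applies --- is exactly the intended passage from exponential to polynomial moments, with the uniformity in $\xi$ inherited from \refeq{B35}.
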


	\subsection{Local equicontinuity}\label{Bsss4c4a}
	A Gibbs measure $\mu \in \GibK \pot \lambda \Xm$ will be constructed as a cluster point of the net of specification kernels $\{ \pi_\Delta\}_{\Delta \in \B_c(\Xd)}$. To this end, an important step is to establish the equicontinuity of the local specification, which yields the existence of limit points in a proper topology. 
	
    \begin{defi}\label{Bdef9c4Ea}
    On the space of all probability measures ${\mathcal{P}}(\K)$ we introduce the topology of \emph{$\mathcal{Q}$-local convergence}. This topology, which we denote by $\mathcal{T}_{\mathcal{Q}} $, is
    defined as the coarsest topology making the maps ${\mathcal{P}}(\K) \ni \mu
    \mapsto \mu (B)$ continuous for all sets $B \in \B_{\mathcal{Q}} (\K)$.
    Here,
     \begin{align}
			\B_{\mathcal{Q}} ({\K}) := \bigcup_{{\Delta} \in \mathcal{Q}_c (\Xm)}
				\B_{\Delta} ({\K})
		\label{B80fb}
		\end{align}
    denotes the \emph{algebra of all local events} associated with the partition cubes,  where
      $\B_\Delta (\K) := \mathbbm{P}^{-1}_\Delta \B(\Ka \Delta)$ and the canonical projections $\mathbbm{P}_\Delta$ were defined by \refeq{eq24b}.
   \end{defi} 

Note that the topology of local convergence is \emph{not metrizable} (see  \cite[p. 57]{geo79}. So, to describe the corresponding convergence one has to consider nets instead of sequences. Let us recall that a subset of measures from $\mathcal{P}(\Kad)$ is relative compact iff each of its net has a cluster point in $\mathcal{P}(\Kad)$; furthermore, every cluster point can be obtained as a limit of a certain subnet. A sufficient condition fo the existence of cluster points is the so-called \emph{equicontinuity} property. 

More precisely, we modify \cite[Definition 4.6]{geo88} to fit our setting: 
    \begin{defi} \label{Bdef9c4b}
       Fix $\xi \in \tKad \lomagsy \Xm$. The net $
       	\{\pi_{\Delta}(d\eta |\xi )| {\Delta} \in \mathcal{Q}_c (\Xm)\}
       $
      is called \emph{$\mathcal{Q}$-locally equicontinuous} 
      if for all 
      ${\tilde {\Delta}} \in \mathcal{Q}_c(\Xm)$ and for each sequence $
      \{B_N\}_{N \in \mathbb{N}}\subset \mathcal{B}_{\tilde {\Delta}} ({\K})$ 
      with $B_{N}\downarrow \varnothing $%
      \begin{equation}
			  \lim_{N\rightarrow \infty }\ 
			  \limsup_{\substack{ {\Delta} \nearrow \Xm\\ {\Delta} \in \mathcal{Q}_c (\Xm)}}
				  \pi _{{\Delta} }(B_{N}|\xi )=0.  
				 \label{B80f}
      \end{equation}
    \end{defi}
A crucial issue (resulting from \cite[Proposition 4.9]{geo88}) is that each $\mathcal{Q}$-local equicontinuous net has at least one $\tau_{\mathcal{Q}}$-cluster point in $\mathcal{P}(\K)$ (cf. \cite[Proposition 5.3]{pre05}). 
As indicated above, our first aim is to prove the $\mathcal{Q}$-local equicontinuity for the specification kernels.
    \begin{prop} \label{Bp9c4c}
Let Assumption \emph{\textbf{$\mathbf{(\pot)}$}} hold.
      Then for each fixed $\xi \in \tKad \lomagsy \Xm$ the net
    $\{\pi_{{\Delta} }(\mathrm{d}\eta |\xi )|$ ${\Delta} \in \mathcal{Q}_c (\Xm)\}$
    is locally equicontinuous.
    \end{prop}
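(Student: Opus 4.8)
The plan is to verify the defining condition of $\mathcal{Q}$-local equicontinuity (Definition \ref{Bdef9c4b}) directly, using the uniform moment bounds of Proposition \ref{Bp3} (or Corollary \ref{Bcor9c4a}) as the key ingredient. Fix a tempered boundary condition $\xi \in \tKad \lomagsy \Xm$, fix $\tilde\Delta \in \mathcal{Q}_c(\Xm)$, and let $\{B_N\}_{N\in\NN}\subset \B_{\tilde\Delta}(\K)$ be a decreasing sequence of local events with $B_N \downarrow \varnothing$. We must show that $\lim_{N\to\infty}\limsup_{\Delta\nearrow\Xm}\pi_\Delta(B_N|\xi)=0$.

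First I would reduce the local events to subsets of the finite-dimensional cone $\Ka{\tilde\Delta}$: since $B_N \in \B_{\tilde\Delta}(\K)$ we may write $B_N = \mathbbm{P}_{\tilde\Delta}^{-1}(\hat B_N)$ with $\hat B_N \in \B(\Ka{\tilde\Delta})$ and $\hat B_N \downarrow \varnothing$. The point of the proof is to produce a single finite (sub-probability) reference measure $\rho$ on $\Ka{\tilde\Delta}$ that dominates all the restricted kernels $\pi_\Delta(\mathbbm{P}_{\tilde\Delta}^{-1}(\cdot)|\xi)$ uniformly in $\Delta$ (at least in the limit $\Delta\nearrow\Xm$), so that $B_N\downarrow\varnothing$ forces $\rho(\hat B_N)\to 0$ and hence $\sup_\Delta \pi_\Delta(B_N|\xi)\to 0$. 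A convenient candidate is $d\rho(\eta_{\tilde\Delta}) := C\,\exp\{-c\sum_{k\in\mathcal K_{\tilde\Delta}}\lnormd{\eta_{\tilde\Delta}}{Q_k}^2\}\,\PmeaKd\theta{\tilde\Delta}(d\eta_{\tilde\Delta})$ for suitable constants, or simply the Gamma measure $\PmeaKd\theta{\tilde\Delta}$ weighted by a Lyapunov-type density; tightness of $\rho$ on $\Ka{\tilde\Delta}$ follows because $\PmeaKd\theta{\tilde\Delta}$ has all local moments (Eq. \refeq{2z0c}) and $\{M_\alpha \leq c\}$-type level sets are compact in the vague topology restricted to $\Ka{\tilde\Delta}$.

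The heart of the argument is the domination estimate. By the consistency property (\refeq{B16a}), $\pi_\Delta(B_N|\xi) = \int_\K \pi_{\tilde\Delta}(B_N|\eta)\,\pi_\Delta(d\eta|\xi)$ whenever $\tilde\Delta\subseteq\Delta$. Writing out $\pi_{\tilde\Delta}(B_N|\eta)$ via the density $Z_{\tilde\Delta}(\eta)^{-1} e^{-H_{\tilde\Delta}(\cdot|\eta)}$ against $\PmeaKd\theta{\tilde\Delta}$, and using the lower bound $Z_{\tilde\Delta}(\eta)\geq$ (the explicit positive constant from \refeq{Beq10ab}) together with the Hamiltonian lower bound \refeq{BeqB9aa}, one bounds $\pi_{\tilde\Delta}(B_N|\eta)$ by $\rho(\hat B_N)$ times an expression of the form $\exp\{c_1 + c_2\sum_{l\in\mathcal K_{\mathcal U_{\tilde\Delta}}}\lnormd{\eta}{Q_l}^2\}$. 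Then $\pi_\Delta(B_N|\xi)\leq \rho(\hat B_N)\cdot \int_\K \exp\{c_1 + c_2\sum_l \lnormd\eta{Q_l}^2\}\,\pi_\Delta(d\eta|\xi)$, and the last integral is $\limsup$-bounded uniformly in $\Delta$ (and uniformly in $\xi$) by Proposition \ref{Bp3}, provided $c_2$ is small enough to lie in the admissible range $[0,A-m^\phi\lbc]$ — this is exactly where the repulsion condition $\mathbf{(RC)}$ is used, via \refeq{Bex20a}, possibly after first passing to small cubes and iterating as in the proof of Proposition \ref{Bp3}. Taking $\limsup_{\Delta\nearrow\Xm}$ and then $N\to\infty$ and invoking $\rho(\hat B_N)\to 0$ finishes the proof.

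**The main obstacle** I anticipate is the uniform control of the finitely many "boundary" terms $\sum_{l\in\mathcal K_{\mathcal U_{\tilde\Delta}}}\lnormd\eta{Q_l}^2$ appearing in the density of $\pi_{\tilde\Delta}(\cdot|\eta)$: one needs the coupling constant $c_2$ multiplying these terms to fall inside the range where Proposition \ref{Bp3} applies, and $\mathcal U_{\tilde\Delta}$ consists of several cubes adjacent to $\tilde\Delta$, so a naive bound may overshoot. The remedy is to split the exponential over the individual cubes $Q_l$ and apply the multiple Hölder inequality \refeq{B38} exactly as in Proposition \ref{Bp3}, distributing the exponent so that each factor sits at an admissible $\lambda\leq A-m^\phi\lbc$; the finiteness of $|\mathcal K_{\mathcal U_{\tilde\Delta}}|$ keeps the resulting constant finite. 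A secondary technical point is verifying measurability and the compactness of the relevant level sets in $\Ka{\tilde\Delta}$ with the inherited vague topology, but since $\tilde\Delta$ is bounded this is routine (the restriction of the vague topology to measures supported in a fixed bounded set with bounded total variation is metrizable and its bounded sets are relatively compact).
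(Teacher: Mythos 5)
Your overall skeleton (consistency \refeq{B16a}, the partition--function lower bound \refeq{Beq10ab}, the Hamiltonian bound \refeq{BeqB9aa}, and domination of the restricted kernels by one fixed finite measure on $\Kas(\tilde\Delta)$) matches the paper's, but the step you yourself flag as the main obstacle is a genuine gap, and your proposed remedy does not close it. After inserting \refeq{BeqB9aa} and \refeq{Beq10ab} you are left with the $\eta$-dependent factor $\exp\{c_2\sum_{l\in\mathcal K_{\mathcal U_{\tilde\Delta}}}\eta(Q_l)^2+\dots\}$ with $c_2=m^{\phi}\lbc$, a fixed constant determined by the potential, not a parameter you can shrink (redistributing the cross terms via $2ab\le\e a^2+\e^{-1}b^2$ lowers the boundary coefficient only at the price of making the coefficient of $\sum_j\rho(Q_j)^2$ negative, which is fatal since the Gamma measure has no exponential square moments). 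To integrate this factor against $\pi_\Delta(d\eta|\xi)$ you invoke the multiple H\"older inequality \refeq{B38}, but H\"older works the wrong way here: with weights $t_l$, $\sum_l t_l\le1$, each factor carries the exponent $c_2/t_l\ge c_2|\mathcal K_{\mathcal U_{\tilde\Delta}}|$, so Proposition \ref{Bp3} applies only if $m^{\phi}\lbc\,|\mathcal K_{\mathcal U_{\tilde\Delta}}|\le A-m^{\phi}\lbc$. Equicontinuity must be proved for every $\tilde\Delta\in\mathcal Q_c(\Xm)$, hence for arbitrarily many boundary cubes, and the repulsion condition \refeq{Bex20a} only guarantees $A>2m^{\phi}\lbc$; so the requirement fails whenever $\lbc>0$ and $\tilde\Delta$ is large. (Your scheme does go through in the purely repulsive case $\phi\ge0$, where $c_2=0$.)

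The paper avoids exponential moments of the boundary masses altogether by truncating: it introduces the sets $\Kas[\mathcal U,T]=\{\eta:\eta(\mathcal U)\le T\}$ and splits $\pi_\Delta(B_N|\xi)$ accordingly using \refeq{B16a}. On the complement, Chebyshev's inequality together with the uniform \emph{first} moment bound of Corollary \ref{Bcor9c4a} gives the $\Delta$-uniform bound $\mathcal C(\mathcal U,1)/T$ (cf. \refeq{B90b}); on $\Kas[\mathcal U,T]$ the boundary-dependent prefactor is bounded \emph{deterministically} by a constant times $e^{4\|\phi\|_\infty T^2}$, so that piece is at most $C e^{4\|\phi\|_\infty T^2}\,\PmeaKd \theta {\tilde \Delta}(B_N)$, which vanishes as $B_N\downarrow\varnothing$ by continuity from above of the finite reference measure (no tightness or compactness of level sets is needed, so that part of your argument is a harmless detour). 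Letting first $N\to\infty$ and then $T\to\infty$ yields \refeq{B80f}. This truncation-plus-double-limit device is the idea missing from your argument.
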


\begin{cor}\label{cor46}
Let Assumption \emph{\textbf{$\mathbf{(\pot)}$}} hold and fix  some order generating sequence $\{\Delta_N \}_{N \in \NN} \subset \mathcal{Q}_c (\Xm)$. Then, for each boundary condition $\xi \in \tKad \lomagsy \Xm$,  (a subsequence of) $\{ \pi_{\Delta_N}(\cdot|\xi) \}_{N \in \NN}$ converges $\mathcal{Q}$-locally to a probability measure $\mu \in \mathcal{P}(\Kad)$, that means for all $B\in \B_{\mathcal{Q}}(\K)$
        \begin{equation}
	      \pi _{{{\Delta}}_N}(B|\xi )\rightarrow \mu (B)\text{ \ \ as }%
        N\rightarrow \infty  
        \notag.
        \end{equation}
\end{cor}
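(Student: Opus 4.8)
The plan is to obtain Corollary \ref{cor46} as a sequential sharpening of Proposition \ref{Bp9c4c}. By that proposition the net $\{\pi_\Delta(\cdot|\xi)\}_{\Delta\in\mathcal{Q}_c(\Xm)}$ is $\mathcal{Q}$-locally equicontinuous; restricting to the cofinal, order generating sequence $\{\Delta_N\}_{N\in\NN}$ (so that one may replace $\limsup_{\Delta\nearrow\Xm}$ by $\limsup_N$ in Definition \ref{Bdef9c4b}), the sequence $\{\pi_{\Delta_N}(\cdot|\xi)\}_N$ is again $\mathcal{Q}$-locally equicontinuous. By the general principle recalled after Definition \ref{Bdef9c4b} (\cite[Prop.~4.9]{geo88}, \cite[Prop.~5.3]{pre05}) such a net has a $\mathcal{T}_{\mathcal{Q}}$-cluster point in $\mathcal{P}(\K)$; the remaining work is to realize this cluster point as the limit of a subsequence, which is not automatic since $\mathcal{T}_{\mathcal{Q}}$ is not metrizable.

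To extract the subsequence I would use the standard Borel structure. For each $\Delta\in\mathcal{Q}_c(\Xm)$ the space $(\Ka\Delta,\B(\Ka\Delta))$ is standard Borel, hence $\B(\Ka\Delta)$ is generated by a countable algebra $\mathcal{A}_\Delta$; put $\mathcal{D}:=\bigcup_{\Delta\in\mathcal{Q}_c(\Xm)}\mathbbm{P}_\Delta^{-1}(\mathcal{A}_\Delta)$. Since $\mathcal{Q}_c(\Xm)$ is countable, $\mathcal{D}$ is a countable subalgebra of $\B_{\mathcal{Q}}(\K)$, and $\sigma(\mathcal{D})=\sigma(\B_{\mathcal{Q}}(\K))=\B(\K)$ (for the last equality note that $\langle\varphi,\eta\rangle=\langle\varphi,\eta_\Delta\rangle$ whenever $\mathrm{supp}\,\varphi\subset\Delta$). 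A Cantor diagonal argument applied to the bounded sequences $(\pi_{\Delta_N}(D|\xi))_N\subset[0,1]$, $D\in\mathcal{D}$, then yields a subsequence $\{\Delta_{N_j}\}_j$ along which $\pi_{\Delta_{N_j}}(D|\xi)$ converges for every $D\in\mathcal{D}$.

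It then remains to promote convergence on $\mathcal{D}$ to genuine $\mathcal{T}_{\mathcal{Q}}$-convergence. Fix $\tilde\Delta\in\mathcal{Q}_c(\Xm)$; for $j$ large the measures $\mu_j:=\pi_{\Delta_{N_j}}(\cdot|\xi)$ restrict, via the consistency property \refeq{B16a}, to probability measures on the $\sigma$-algebra $\B_{\tilde\Delta}(\K)$, they converge on the generating algebra $\mathbbm{P}_{\tilde\Delta}^{-1}(\mathcal{A}_{\tilde\Delta})$, and they satisfy the uniform continuity-at-$\varnothing$ expressed by Definition \ref{Bdef9c4b}. The approximation argument underlying \cite[Prop.~4.9]{geo88} (cf.\ \cite[Prop.~5.3]{pre05}) then shows that $\mu_j(B)$ converges for every $B\in\B_{\tilde\Delta}(\K)$ and that the limit $\mu_{\tilde\Delta}$ is countably additive. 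Letting $\tilde\Delta$ run through a cofinal sequence in $\mathcal{Q}_c(\Xm)$ produces a consistent family $\{\mu_{\tilde\Delta}\}$ of probability measures on standard Borel spaces, which by the Kolmogorov (projective limit) theorem glues to a unique $\mu\in\mathcal{P}(\Kad)$ with $\mu|_{\B_{\tilde\Delta}(\K)}=\mu_{\tilde\Delta}$; by construction $\pi_{\Delta_{N_j}}(B|\xi)\to\mu(B)$ for all $B\in\B_{\mathcal{Q}}(\K)$, which is exactly the claim. As a by-product, the moment bounds of Proposition \ref{Bp3} and Corollary \ref{Bcor9c4a} pass to the limit, so $\mu\in\mathcal{P}(\tKad\floma\Xm)$ is in fact tempered, which is what one needs for the subsequent verification of the DLR equation.

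The main obstacle is the step in the third paragraph: upgrading setwise convergence on the countable algebra $\mathcal{D}$ to setwise convergence on each full local $\sigma$-algebra $\B_{\tilde\Delta}(\K)$ with a countably additive limit. For arbitrary bounded sequences of probability measures this fails — setwise limits are generically only finitely additive — and metrizability is unavailable; it is precisely the $\mathcal{Q}$-local equicontinuity furnished by Proposition \ref{Bp9c4c} that makes the approximation of a local event by members of $\mathcal{A}_{\tilde\Delta}$ uniform along the sequence, and hence legitimizes the required interchange of limits. The remaining ingredients — standard Borelness of $\Ka\Delta$ and of $\K$, the consistency \refeq{B16a}, and $\sigma(\B_{\mathcal{Q}}(\K))=\B(\K)$ — are soft facts already recorded in Sections \ref{s2}--\ref{s3}.
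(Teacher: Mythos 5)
Your argument is correct and is essentially the paper's own route: the paper proves this corollary by combining Proposition \ref{Bp9c4c} with \cite[Propositions 4.9 and 4.15]{geo88} and \cite[Theorem V.3.2]{par67}, and your three steps (diagonal extraction over a countable generating algebra, upgrading to setwise convergence with a countably additive limit on each $\B_{\tilde\Delta}(\K)$ via the local equicontinuity, and gluing the consistent family by the Kolmogorov extension theorem on standard Borel spaces) are precisely an unpacking of those cited results. The closing remark about temperedness of $\mu$ is not needed for the statement, but it does no harm.
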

\begin{proof}
The claim follows by combining Proposition \ref{Bp9c4c} with \cite[Propositions 4.9 and 4.15]{geo88} and \cite[Theorem V.3.2]{par67}.
\end{proof}

\begin{proof}[Proof of Proposition \ref{Bp9c4c}]   
    We will split the set $B_N$ and then use the support property 
    (cf. Corollary \ref{Bcor9c4a}), the consistency property 
    \refeq{B16a} and the lower bound for a partition 
    function (cf. Lemma \ref{Bl2b}) to estimate the 
    two summands. (The basic idea is given by an adaption 
    of the arguments used for proving \cite[ Theorem 4.12 
    and Corollary 4.13]{geo88} to the configuration space 
    setting, cf. also \cite{koparo10}.)\\

Let us fix any ${\tilde \Delta} \in \mathcal{Q}_c(\Xm)$, and let $\{B_{N}\}_{N\in \mathbb{N}}$ be any sequence of sets from $\mathcal{B}_{\tilde \Delta}({\K})$ such that $B_{N}\downarrow \varnothing $ as $N\rightarrow \infty .$ Consider the following Borel subsets in $\Ka \Xd$ consisting of those measures $\eta$ whose local masses over $\mathcal{U}$ are bounded by a given $T>0$,
        \begin{equation}
	         \Kas \lbrack \mathcal{U},T \rbrack:=
	        \left\{ \eta \in {\K}\ \left\vert \
	                    \lomagIf {\mathcal{U}} \eta
	                 \leq T\right. \right\} ,\text{ \ \ }T>0,
	                    \notag
        \end{equation}
        where we define, using \textbf{(FR)},  
        \begin{align}
        	\mathcal{U}:= 
        	\bigsqcup_j \left\{ Q_j \big| \exists \thinspace x  \in Q_j, \thinspace \exists y \in \tilde \Delta: |x-y| \leq R \right\} 
        	 \in \mathcal{Q}_c (\Xm)
        	\label{BB10b2b}.
        \end{align}

        For each $\xi \in  \Ksad  $ and ${{\Delta}} \in \mathcal{Q}_c (\Xm)$ which contains ${\tilde \Delta}$, we have by the consistency property 
        \begin{align}
                \pi _{{{\Delta}} }(B_{N}|\xi )
            & =  \pi _{{{\Delta}} }(B_{N}\cap \lbrack \Ka {\mathcal{U},T}]^{c}|\xi )
                + \int_{\K} \pi_{{{{\tilde \Delta}}}} (B_N \cap  \Kas \lbrack {\mathcal{U},T} \rbrack | \eta) \pi_{{\Delta}} (d\eta|\xi)
        \notag \\ &= 
                \pi _{{{\Delta}} }(B_{N}\cap \lbrack \Ka {\mathcal{U},T} \rbrack^{c}|\xi )
           \notag \\ &\quad
            +
                \int_{\K}   \frac 1 {Z_{{{{\tilde \Delta}}}} (\eta)} \int_{\Ka {{\tilde \Delta}}} \mathbbm{1}
                _{B_{N}\cap \Kas \lbrack {\mathcal{U},T}\rbrack }(\rho _{{{{\tilde \Delta}}}} \cup
                \eta_{{\tilde \Delta}^c})
            \notag \\ & \qquad
                    \exp \left\{ -\betaD H_{{{{\tilde \Delta}}}}(\rho_{{{{\tilde \Delta}}}} | \eta )\right\}
                   \PmeaKd \theta {{\tilde \Delta}}  
								(\mathrm{d}\rho_{{{{\tilde \Delta}}}})
                    \pi_{{\Delta}}(\mathrm{d}\eta |\xi ).
        \label{B90a}
            \end{align}
	  By Chebyshev's inequality and Corollary \ref{Bcor9c4a} we get  
    \begin{align}
      & \limsup_{\substack{{{\Delta}} \nearrow \Xd \\ {{\Delta}} \in \mathcal{Q}_c(\Xd)}} \pi_{{\Delta}} (\Kas \lbrack \mathcal{U},T \rbrack ^c) 
      = 
      \limsup_{\substack{{{\Delta}} \nearrow \Xd \\ {{\Delta}} \in \mathcal{Q}_c(\Xd)}} 
      \pi_{{\Delta}} \left( \left\{\eta \in {\K} :
          \left. \lomagIf {\mathcal{U}} \eta > T  \right\} \right| \xi \right)
      \notag \\  \leq&
      \limsup_{\substack{{{\Delta}} \nearrow \Xd \\ {{\Delta}} \in \mathcal{Q}_c(\Xd)}} 
          \int_{{\K}} \frac {\lomagIf {\mathcal{U}} \eta} {T} 
          	\pi_{{\Delta}} (d\eta |\xi)
    \leq 
    \frac 1 T \mathcal C (\mathcal{U},1) 
		< \infty      
     \label{B90b},
     \end{align}
     which vanishes as $T \nearrow \infty$.
    
    On the other hand, for all $\eta \in {\K}$ and ${{\Delta}} \in
    \mathcal{Q}_{c}( \Xm)$ containing ${\tilde \Delta}$, we estimate the 
    outer integrand (using \refeq{Beq10ab}, \refeq{BeqB9aa} and the elementary inequality $|ab| \leq 1/2 (a^2 + b^2)$ for $a,b \in \RR$) as follows
    \begin{align}
    	 &\frac 1 {Z_{{{\tilde \Delta}}} (\eta)}
				\int_{ \Ka {{{\tilde \Delta}} }}
					\mathbbm{1}_{B_N\cap  \Kas \lbrack {\mathcal{U},T} \rbrack} 
								(\rho_{\tilde \Delta}\cup\eta_{{\tilde \Delta}^{c}})
           \exp \left\{ -\betaD H_{{{\tilde \Delta}}}
								(\rho_{\tilde \Delta} |\eta )\right\}
                \PmeaK {{{\tilde \Delta}}} (d \rho_{\tilde \Delta})
     \notag \\ \leq &
       	\exp \bigg\{ \betaD \|\phi\|_\infty \Big( 
       			C_{{\tilde \Delta}}^2 + C_{{\tilde \Delta}} \cdot \lomagIf {\mathcal{U}} {\eta_{\tilde \Delta^c}}  \Big)
	       	\bigg\}
	   		\int_{ \Ka {\tilde \Delta}}
					\mathbbm{1}_{B_N\cap  \Kas \lbrack {\mathcal{U},T} \rbrack} 
								(\rho_{\tilde \Delta}\cup\eta_{{\tilde \Delta}^c})
				\notag \\ &
						\exp \left\{ \betaD \|\phi\|_\infty 
								\left( \lomagIf {{\tilde \Delta}} \rho ^2 + 2 \lomagIf {{\tilde \Delta}} \rho 
					\lomagIf {\mathcal{U}} {\eta_{\tilde \Delta^c}} \right)
								\right\}
				\PmeaKd \theta {{\tilde \Delta}}  (d\rho)				
	   	\notag \\ \leq&
	   		\exp \bigg\{ \|\phi\|_\infty \Big( 
       			C_{{\tilde \Delta}}^2 + 1/2 T^2 + 1/2 C_{{\tilde \Delta}}^2 \Big)
	       	\bigg\}
	      e^{3\betaD \|\phi\|_\infty T^2} 
	   		\PmeaKd \theta {{\tilde \Delta}}  \left(B_N \cap  \Kas \lbrack {\mathcal{U},T} \rbrack\right)
	  \notag \\ \leq&
	   		C e^{4\betaD \|\phi\|_\infty T^2} 
	   		\PmeaKd \theta {{\tilde \Delta}}  \left(B_N \right)
	   		< \infty,
	  \notag
		\end{align}
		  where $C$ is an appropriate constant. Summing up, we get 
		\begin{align}&%
	    \pi _{\Lambda }(B_N \cap \Ka {\mathcal{U},T} |\xi ) 
             \leq 
	    C e^{4\betaD \|\phi\|_\infty T^2} 
	   		\PmeaKd \theta {{\tilde \Delta}} \left(B_N \right)
	  \label{B90c},
    \end{align}%
    which tends to zero for $B_N \searrow \varnothing$.
   	Plugging (\ref{B90b}) and (\ref{B90c}) back into (\ref{B90a}), we get the equicontinuity of the family
    $\{\pi _{\Lambda}(\mathrm{d}\eta |\xi )|$
    $\Lambda \in \mathcal{Q}_c (\Xm) \}$ required in Eq. \refeq{B80f}.
\end{proof}

		   \subsection{Existence of Gibbs measures}
   \label{Bss4c5b}
   Now we are in position to deduce the main result. Namely, we show that each limit point that we obtained by the local equicontinuity proved above is indeed a Gibbs measure.
\label{Bp4c5b2}

\begin{thm} \label{Bthm9c4E}
	Let $\pot: \Xm \times \Xm \rightarrow \RR$ be such that Assumption  \emph{\textbf{$\mathbf{(\pot)}$}} holds. Then there exists a Gibbs measure $\mu$ (at least one) corresponding to the potential $\pot$ and the Gamma measure $\PmeaK {\theta}$, which is supported by $\tKad \lomagsy \Xm$.\label{Bthm3a2f}
	Therefore, 
	$$
		 \tGibK {\pot} \theta \Xm \neq \varnothing.
	$$
	Furthermore, the set $\tGibK {\pot} \theta \Xm$ is relatively compact in the topology $\mathcal{T}_{\mathcal{Q}}$.
\end{thm}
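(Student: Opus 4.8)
The plan is to realise a Gibbs measure as a $\mathcal{T}_{\mathcal{Q}}$-cluster point of the specification kernels along an exhausting sequence of cubes, in the spirit of the analytic approach of \cite{koparo10}, and then to verify successively the DLR equation, the temperedness of the limit, and the relative compactness of the whole set $\tGibK{\pot}\theta\Xm$. First I would fix once and for all a tempered boundary condition $\xi\in\tKad\lomagsy\Xm$ and an order generating sequence $\{\Delta_{N}\}_{N\in\NN}\subset\mathcal{Q}_{c}(\Xm)$. By Corollary~\ref{cor46} — which rests on the local equicontinuity of Proposition~\ref{Bp9c4c} — there is a subsequence, not relabelled, with $\pi_{\Delta_{N}}(\cdot|\xi)\to\mu$ in $\mathcal{T}_{\mathcal{Q}}$ for some $\mu\in\mathcal{P}(\Kad)$, i.e.\ $\pi_{\Delta_{N}}(B|\xi)\to\mu(B)$ for every $B$ in the local algebra $\B_{\mathcal{Q}}(\K)$.

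The next step is to check that $\mu$ satisfies the DLR equation \refeq{B18}. The key point is that for $B\in\B_{\tilde\Delta}(\K)$ with $\tilde\Delta\in\mathcal{Q}_{c}(\Xm)$ the map $\K\ni\eta\mapsto\pi_{\Delta}(B|\eta)$ is bounded and \emph{local}: by the finite range condition $\mathbf{(FR)}$, both $Z_{\Delta}(\cdot)$ and $H_{\Delta}(\cdot|\cdot)$ depend on the boundary configuration only through its restriction to the neighbouring cubes $\mathcal{U}_{\Delta}$ (cf.\ the proof of Lemma~\ref{lemBa1a} and Lemma~\ref{Bl2a}), while $\mathbbm{1}_{B}(\eta_{\Delta}\cup\xi_{\Delta^{c}})$ depends on it only through $\xi_{\tilde\Delta}$; hence $\pi_{\Delta}(B|\cdot)$ is $\B_{\Delta'}(\K)$-measurable with $\Delta':=\tilde\Delta\cup\mathcal{U}_{\Delta}\in\mathcal{Q}_{c}(\Xm)$. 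Since convergence in $\mathcal{T}_{\mathcal{Q}}$ restricts to genuine setwise convergence on the sub-$\sigma$-algebra $\B_{\Delta'}(\K)$, uniform approximation by simple functions gives $\int_{\K}\pi_{\Delta}(B|\eta)\,\pi_{\Delta_{N}}(d\eta|\xi)\to\int_{\K}\pi_{\Delta}(B|\eta)\,\mu(d\eta)$. On the other hand, once $\Delta\subseteq\Delta_{N}$ the consistency relation \refeq{B16a} gives $\int_{\K}\pi_{\Delta}(B|\eta)\,\pi_{\Delta_{N}}(d\eta|\xi)=\pi_{\Delta_{N}}(B|\xi)\to\mu(B)$. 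Comparing the two limits yields $\int_{\K}\pi_{\Delta}(B|\eta)\,\mu(d\eta)=\mu(B)$ for all $\Delta\in\mathcal{Q}_{c}(\Xm)$ and all $B$ in the algebra $\B_{\mathcal{Q}}(\K)$, which generates $\B(\K)$ (every $\varphi\in C_{0}(\Xd)$ being compactly supported, $\eta\mapsto\langle\varphi,\eta\rangle$ is $\B_{\Delta_{N}}(\K)$-measurable for large $N$); as both sides are probability measures, the identity extends to all $B\in\B(\K)$, and for an arbitrary $\Delta\in\B_{c}(\Xm)$ one picks $\Delta_{\mathcal{K}}\supseteq\Delta$ and uses consistency, $\mu\pi_{\Delta}=(\mu\pi_{\Delta_{\mathcal{K}}})\pi_{\Delta}=\mu\pi_{\Delta_{\mathcal{K}}}=\mu$. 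Hence $\mu\in\GibK{\pot}\theta\Xm$.

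For the temperedness I would invoke the second estimate of Proposition~\ref{Bp3}: for each $\alpha>0$ there are $\nu_{\alpha},\mathcal{C}_{\alpha}\in(0,\infty)$ with $\limsup_{N}\int_{\K}\exp\{\nu_{\alpha}\anorm\alpha\eta^{2}\}\,\pi_{\Delta_{N}}(d\eta|\xi)\le\mathcal{C}_{\alpha}$ (cf.\ \refeq{B36b}). For $n,L\in\NN$ the truncated function $g_{n,L}(\eta):=\exp\{\nu_{\alpha}\sum_{|k|\le n}\lnormd\eta{Q_{k}}^{2}e^{-\alpha|k|}\}\wedge L$ is bounded and local, so by the first step $\int g_{n,L}\,d\mu=\lim_{N}\int g_{n,L}\,d\pi_{\Delta_{N}}(\cdot|\xi)\le\mathcal{C}_{\alpha}$; monotone convergence, first in $L$ and then in $n$, gives $\int_{\K}\exp\{\nu_{\alpha}\anorm\alpha\eta^{2}\}\,\mu(d\eta)\le\mathcal{C}_{\alpha}<\infty$, so $\anorm\alpha\eta<\infty$ for $\mu$-a.e.\ $\eta$, i.e.\ $\mu(\Kug\alpha)=1$. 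Since $\Kug{1/m}\subseteq\Kug\alpha$ for $\alpha\ge1/m$, the set $\tKad\lomagsy\Xm=\bigcap_{\alpha>0}\Kug\alpha=\bigcap_{m\in\NN}\Kug{1/m}$ is a countable intersection of sets of full $\mu$-measure, whence $\mu(\tKad\lomagsy\Xm)=1$ and $\mu\in\tGibK{\pot}\theta\Xm$; in particular $\tGibK{\pot}\theta\Xm\neq\varnothing$.

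It remains to prove that $M:=\tGibK{\pot}\theta\Xm$ is relatively $\mathcal{T}_{\mathcal{Q}}$-compact. Here I would use the abstract criterion underlying Proposition~\ref{Bp9c4c} (\cite[Proposition~4.9]{geo88}, see also \cite[Proposition~5.3]{pre05}) together with the fact that $(\K,\B(\K))$ is standard Borel (\cite[Theorem~V.3.2]{par67}): it suffices to show that $M$ is uniformly equicontinuous at $\varnothing$, i.e.\ $\lim_{N}\sup_{\mu\in M}\mu(B_{N})=0$ for every $\tilde\Delta\in\mathcal{Q}_{c}(\Xm)$ and every sequence $\{B_{N}\}_{N}\subset\B_{\tilde\Delta}(\K)$ with $B_{N}\downarrow\varnothing$. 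Fixing such data and letting $\mathcal{U}\in\mathcal{Q}_{c}(\Xm)$ be as in \refeq{BB10b2b}, one uses the DLR equation together with the uniform bounds of Corollary~\ref{Bcor9c4a} to get $\bar{\mathcal{C}}:=\sup_{\mu\in M}\int_{\K}\eta(\mathcal{U})\,\mu(d\eta)<\infty$. Then, picking $\Delta\in\mathcal{Q}_{c}(\Xm)$ with $\Delta\supseteq\mathcal{U}$ and using $\mu=\mu\pi_{\Delta}$, one splits $B_{N}=(B_{N}\cap\Kas\lbrack\mathcal{U},T\rbrack^{c})\sqcup(B_{N}\cap\Kas\lbrack\mathcal{U},T\rbrack)$ for $T>0$ (notation as in the proof of Proposition~\ref{Bp9c4c}); Chebyshev's inequality bounds the first piece by $T^{-1}\int_{\K}\eta(\mathcal{U})\,\mu(d\eta)\le T^{-1}\bar{\mathcal{C}}$, and the estimate in the proof of Proposition~\ref{Bp9c4c} (via the lower bound on $Z_{\tilde\Delta}$ from Lemma~\ref{Bl2b} and the stability estimate \refeq{BeqB9aa}) bounds the second by $Ce^{4\|\phi\|_{\infty}T^{2}}\PmeaKd\theta{\tilde\Delta}(B_{N})$, both uniformly in $\mu\in M$; letting $N\to\infty$ and then $T\to\infty$ concludes. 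The hard part — and the only genuinely new ingredient beyond the earlier sections — is the uniform bound $\bar{\mathcal{C}}<\infty$: Corollary~\ref{Bcor9c4a} is phrased for the kernels with a fixed boundary condition, and transferring it to an arbitrary $\mu\in M$ through $\mu=\mu\pi_{\Delta}$ is not automatic because the approach to the $\limsup$ there need not be uniform in the boundary configuration. I expect this to be dealt with by rerunning the bootstrap of Proposition~\ref{Bp3} directly for $\mu$ — using the Dobrushin-type estimate \refeq{45b} on the cubes $Q_{k}$ together with the consistency/DLR identity — after first checking, via a truncation argument and the construction of the first step, that $\int_{\K}\lnormd\eta{Q_{k}}^{2}\,\mu(d\eta)<\infty$ for every $k\in\ZZ$.
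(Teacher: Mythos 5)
Your proposal is correct and follows essentially the same route as the paper: a $\mathcal{T}_{\mathcal{Q}}$-cluster point of the kernels $\pi_{\Delta_N}(\cdot|\xi)$ obtained from the local equicontinuity (Proposition \ref{Bp9c4c}, Corollary \ref{cor46}), the DLR equation via finite-range locality of $\pi_\Delta(B|\cdot)$ plus consistency, and temperedness and relative compactness from the uniform exponential bounds of Proposition \ref{Bp3}. The one point you hedge on — the moment bound uniform over $\mu\in\tGibK{\pot}\theta\Xm$ — is exactly the content of Theorem \ref{Bthm4c8}, and no uniformity in $\xi$ of the $\limsup$ is actually required: truncate the integrand, apply the DLR equation, and use dominated convergence/reverse Fatou together with the $\xi$-uniform constant $\mathcal{C}_\alpha$ of Proposition \ref{Bp3}, which is precisely how the paper argues.
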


\begin{proof}
Fix  some order generating sequence ${\Delta}_N \nearrow \Xm$, ${\Delta}_N \in \mathcal{Q}_c (\Xm)$, and a boundary condition $\xi \in \tKad \lomagsy \Xm$.
By Corollary \ref{cor46}, (a subsequence of) $\{ \pi_{\Delta_N}(\cdot|\xi) \}_{N \in \NN}$ converges $\mathcal{Q}$-locally to a probability measure $\mu \in \mathcal{P}(\Kad)$.

The limit point $\mu$ is surely Gibbs. Indeed, the DLR property can be derived easily: Fix ${\Delta} \in \mathcal{Q}_c(\Xm)$ and $B\in \B_{\mathcal{Q}}(\K)$ arbitrarily. By $\mathbf{(FR)}$ and \refeq{BB10b2b} we can justify  the following equations to get the $({\bf DLR})$ one.
   \begin{align}
       		\int_\K \pi_{{\Delta}} (B | \eta) \mu (d \eta)
       \oset {1.} = & 
	       \lim_{N \rightarrow \infty}
	       			\int_\K \pi_{{\Delta}} ( B | \eta_{\mathcal{U}_{{\Delta}}} ) \pi _{{{\Delta}}_N}(d \eta |\xi )
	    \notag \\ \oset {2.} = &
	       \lim_{N \rightarrow \infty}
	       \int_\K \pi_{{\Delta}} ( B | \eta_{\mathcal{U}_{{\Delta}}} ) 
	       \pi _{{{\Delta}}_N}(d \eta |\xi )
	   \notag \\ \oset {3.} =
               & \lim_{N \rightarrow \infty} \int_\K \pi_{{\Delta}} ( B | \eta) 
               \pi _{{{\Delta}}_N}(d \eta |\xi )
              \notag \\ \oset {4.} =&
	       \lim_{N \rightarrow \infty}
	         \pi _{{{\Delta}}_N}(B|\xi )
	      \notag \\ \oset{5.} = &
	        \mu (B).
	  \label{B9c4E2}
            \end{align}
            The first and third equality follow by the choice of             $\mathcal{U}_{\Delta}$, the second and fifth one by the definition of $\mu$ and the fourth one by the consistency of the local specifications (cf. \refeq{B16a}).\\
            
  The relative compactness of $\tGibK {\pot} \theta \Xm$ follows analogously. Indeed, similarly as above one shows that every net in $\GibK \pot \theta \Xm $ has a $\mathcal{T}_{\mathcal{Q}}$-cluster point in ${\mathcal{P}}(\K)$, which is (reasoning as in Eq. \refeq{B9c4E2}) a Gibbs measure. 
\end{proof}
 	
	\subsection{Moment estimates for Gibbs measures}\label{Bss10b4}
	\begin{thm}\label{Bthm4c8}
		Let Assumption \emph{\textbf{$\mathbf{(\pot)}$}} hold. Then, for each $\lambda \in \left[0, A -  m^\phi \lbc \right]$ there exists an (explicitely computable) $\mathcal{C}_\lambda < \infty$ such that uniformly for all $\mu \in \tGibK \pot \theta \Xm$ 
		\begin{align}
			\sup_{k \in \mathbbm{Z}^d} \int_{\K} \exp \left\{ \lambda \eta(Q_k)^2 \right\} \mu(d\eta) \leq \mathcal{C}_\lambda.
		\label{426b}
		\end{align}
		Furthermore, for each $\alpha >0$ one finds a certain $\nu _{\alpha }>0$ such that for all $\mu \in \tGibK {\pot} \theta \Xm$
			\begin{equation}
					\int_\K
						\exp \left\{ \nu _{\alpha } \anorm \alpha \eta^{2}\right\}
					 \mu( d \eta)
				\leq
					\mathcal{C}_\alpha.  \label{B36bx}
			\end{equation}
		Here, $\mathcal{C}_\lambda$ and $\mathcal{C}_\alpha$ are the same constants as in \refeq{B40d} resp. \refeq{418b}.
             \end{thm}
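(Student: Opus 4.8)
The plan is to deduce both \refeq{426b} and \refeq{B36bx} from the DLR equation \refeq{B18} satisfied by $\mu$, combined with the uniform boundary-condition bounds \refeq{B40d} and \refeq{418b} already obtained in Proposition \ref{Bp3}. The only genuinely new step is the passage from the specification kernels $\pi_{\Delta_N}(\cdot\,|\,\xi)$ to the Gibbs measure $\mu$ itself, and I would carry it out by a truncation argument.

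First I would fix $\mu \in \tGibK \pot \theta \Xm$ (if this set is empty there is nothing to prove; by Theorem \ref{Bthm9c4E} it is not) and recall that, by the definition of the set of tempered Gibbs measures (cf.\ \refeq{eq314b}), $\mu$ is supported by the set of tempered discrete measures, so that $\xi \in \tKad \floma \Xm$ for $\mu$-a.e.\ $\xi$. I fix an order generating sequence $\Delta_N := \bigsqcup_{|k| \le N} Q_k \in \mathcal{Q}_c(\Xd)$, $\Delta_N \nearrow \Xm$. Since $\mu$ solves the DLR equation \refeq{B18}, testing it against indicators and extending by linearity and monotone approximation, one has, for every $N$ and every bounded $\B(\K)$-measurable $f \ge 0$,
\[
  \int_\K f\, d\mu \;=\; \int_\K \Big( \int_\K f(\eta)\, \pi_{\Delta_N}(d\eta\,|\,\xi) \Big)\, \mu(d\xi) .
\]

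Next, fix $k \in \ZZ$ and $M > 0$ and apply this identity to $f_M := \min\{\exp(\lambda\,\eta(Q_k)^2),\, M\}$, writing $g_N(\xi) := \int_\K f_M\, d\pi_{\Delta_N}(\cdot\,|\,\xi)$, so that $\int_\K f_M\, d\mu = \int_\K g_N\, d\mu$ for all $N$. One has $0 \le g_N \le M$ and $g_N(\xi) \le \int_\K \exp(\lambda\,\eta(Q_k)^2)\, \pi_{\Delta_N}(d\eta\,|\,\xi)$, hence \refeq{B40d} (Proposition \ref{Bp3}) gives $\limsup_N g_N(\xi) \le \mathcal{C}_\lambda$ for $\mu$-a.e.\ $\xi$. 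Since the $g_N$ are dominated by the $\mu$-integrable constant $M$, the reverse Fatou lemma applies and yields $\int_\K f_M\, d\mu = \limsup_N \int_\K g_N\, d\mu \le \int_\K \limsup_N g_N\, d\mu \le \mathcal{C}_\lambda$, the first equality because $\int_\K g_N\, d\mu$ does not depend on $N$. Letting $M \to \infty$ and invoking monotone convergence gives \refeq{426b}; since $\mathcal{C}_\lambda$ is inherited verbatim from Proposition \ref{Bp3}, it depends on neither $k$ nor $\mu$, which is exactly the asserted uniformity. The estimate \refeq{B36bx} follows by the same scheme, now with $f_M := \min\{\exp(\nu_\alpha\, \anorm \alpha \eta^2),\, M\}$ and using \refeq{418b} in place of \refeq{B40d}.

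The step I expect to require care — the only real obstacle — is the interchange of the limit $N \to \infty$ with the outer $\mu$-integration. The un-truncated integrands $\int_\K \exp(\lambda\,\eta(Q_k)^2)\, \pi_{\Delta_N}(d\eta\,|\,\xi)$ are \emph{not} dominated uniformly in $N$, since the bounds of Lemma \ref{Bl2} deteriorate as $\Delta_N \nearrow \Xm$, so the reverse Fatou lemma cannot be applied to them directly. Truncating at level $M$ removes this difficulty: the truncated integrands are trivially bounded by the constant $M$, reverse Fatou applies, and the truncation is then stripped off for free by monotone convergence. A routine measurability check for $\xi \mapsto g_N(\xi)$ (standard, since $\pi_{\Delta_N}$ is a probability kernel and $f_M$ is bounded measurable) completes the picture.
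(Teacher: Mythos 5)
Your proposal is correct and follows essentially the same route as the paper: truncate the exponential at level $M$, use the DLR equation to pass to the specification kernels, interchange the thermodynamic limit with the outer $\mu$-integration (the truncation making the integrands uniformly bounded), invoke the uniform bounds of Proposition \ref{Bp3} on the $\mu$-full set of tempered boundary conditions, and remove the truncation by monotone convergence. Your use of the reverse Fatou lemma with $\limsup$ is just a slightly more careful phrasing of the dominated-convergence interchange the paper performs.
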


             \begin{proof}
             Let us only prove \refeq{B36bx}; the proof of \refeq{426b} is similar. 
         Using Beppo Levi, we have
\begin{align}
      	&\int_ {\K}  \exp \left\{ \nu _{\alpha } \anorm \alpha \eta^{2}\right\} \mu(d\eta)
            =
             \lim_{M \nearrow \infty}
	    \int_ {}   \exp \left\{ \nu _{\alpha } \anorm \alpha {\eta }^{2} \wedge M \right\}  \mu(d\eta)
	   \label{426c}. \\
	   \intertext{By the DLR equation  the right-hand side equals }
	    &
	   \lim_{M \nearrow \infty}
	   \lim_{\substack{\Delta \nearrow \Xm \\ \Delta \in {\mathcal Q}_c(\Xm)}}
					\int_ {\Ka  \Xm}
						\int_ {\Ka  \Xm}
						 \exp \left\{ \nu _{\alpha } \anorm \alpha {\eta }^{2} \wedge M \right\}
						\pi_{\Delta}(d\eta | \xi )
					\mu(d \xi)
	\notag. \\
		\intertext{By Lebesgue's dominated convergence theorem, the latter equals}
		&
             \lim_{M \nearrow \infty}
	   		\int_ {\Ka  \Xm}
	   		\bigg(  \lim_{\substack{\Delta \nearrow \Xm
	   						\\ \Delta \in {\mathcal Q}_c(\Xm)}}
					\int_ {\Ka  \Xm}
					 \exp \left\{ \nu _{\alpha } \anorm \alpha {\eta_{\Delta} }^{2} \wedge M \right\}
						\pi_{\Delta}(d\eta_{\Delta} | \xi )
               \bigg)
				\mu(d\xi)
     \notag \\ &\leq 
              \lim_{M \nearrow \infty}
	   		\int_ {\Ka  \Xm}
	   		\bigg(  \lim_{\substack{\Delta \nearrow \Xm
	   						\\ \Delta \in {\mathcal Q}_c(\Xm)}}
					\int_ {\Ka  \Xm}
					 \exp \left\{ \nu _{\alpha } \anorm \alpha {\eta_{\Delta} }^{2} \right\}
						\pi_{\Delta}(d\eta_{\Delta} | \xi )
               \bigg)
				\mu(d\xi)
		\notag.
   \end{align} 
   Since $\mu(\tKad \floma \Xm)= 1$, we may apply the uniform bound proven in Proposition \ref{Bp3} (cf.  \refeq{B36b}). Thus, the right-hand side in \refeq{426c} is dominated by $\mathcal C_\alpha$, which was to be shown. 
\end{proof}

\begin{cor}\label{cor4c9}
Let Assumption \emph{\textbf{$\mathbf{(\pot)}$}} be fulfilled. For each ${\Delta} \in \mathcal{Q}_c(\Xm)$ and $\lambda > 0$, there exists $C_\lambda({\Delta})>0$ such that for all $\mu \in  \tGibK {\pot} \theta \Xm$  
			\begin{align}
				\int_{\K} e^{\lambda \lnormd \eta \Delta} \mu(d \eta) < C_\lambda({\Delta}) 
				\label{B11a2b1}.
			\end{align}
\end{cor}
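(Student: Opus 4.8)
The plan is to obtain this as a soft consequence of the uniform exponential second-moment bound \refeq{426b} of Theorem \ref{Bthm4c8}, combined with the multiple H\"older inequality \refeq{B38} and an elementary linearization of the exponent. First I would use that $\Delta \in \mathcal{Q}_c(\Xm)$ is a finite disjoint union of partition cubes, $\Delta = \bigsqcup_{k \in \mathcal{K}_\Delta} Q_k$ with $\mathcal{K}_\Delta \Subset \ZZ$ (cf. \refeq{eq34d}), so that $\lnormd \eta \Delta = \sum_{k \in \mathcal{K}_\Delta} \lnormd \eta {Q_k}$ for every $\eta \in \K$ and hence $e^{\lambda \lnormd \eta \Delta} = \prod_{k \in \mathcal{K}_\Delta} e^{\lambda \lnormd \eta {Q_k}}$. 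Writing each factor as $\big(e^{\lambda |\mathcal{K}_\Delta| \lnormd \eta {Q_k}}\big)^{1/|\mathcal{K}_\Delta|}$ and applying \refeq{B38} to the probability measure $\mu$ with the $|\mathcal{K}_\Delta|$ equal exponents $t_k := 1/|\mathcal{K}_\Delta|$ (which add up to exactly $1$), the problem is reduced to bounding $\int_\K e^{\lambda |\mathcal{K}_\Delta| \lnormd \eta {Q_k}} \mu(d\eta)$ uniformly in $k \in \mathcal{K}_\Delta$ and in $\mu$.

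To control this last integral I would set $\lambda_0 := A - m^\phi \lbc$, which is strictly positive by the repulsion condition \textbf{(RC)} (indeed \refeq{Bex20a} gives $A > 2 m^\phi \lbc \ge m^\phi \lbc$), and apply the elementary inequality $a t \le \lambda_0 t^2 + a^2/(4\lambda_0)$, valid for $t \ge 0$, with $a := \lambda |\mathcal{K}_\Delta|$ and $t := \lnormd \eta {Q_k}$. This yields
\begin{align*}
	\int_\K e^{\lambda |\mathcal{K}_\Delta| \lnormd \eta {Q_k}} \mu(d\eta)
	&\le \exp\Big\{ \tfrac{\lambda^2 |\mathcal{K}_\Delta|^2}{4 \lambda_0} \Big\} \int_\K \exp\big\{ \lambda_0 \lnormd \eta {Q_k}^2 \big\}\, \mu(d\eta) \\
	&\le \exp\Big\{ \tfrac{\lambda^2 |\mathcal{K}_\Delta|^2}{4 \lambda_0} \Big\}\, \mathcal{C}_{\lambda_0},
\end{align*}
where the last step is \refeq{426b} applied with the admissible parameter $\lambda_0 \in [0, A - m^\phi \lbc]$. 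Since $\mathcal{C}_{\lambda_0}$ does not depend on $k$ nor on $\mu \in \tGibK \pot \theta \Xm$, substituting this bound into the H\"older estimate gives $\int_\K e^{\lambda \lnormd \eta \Delta} \mu(d\eta) \le \mathcal{C}_{\lambda_0} \exp\{ \lambda^2 |\mathcal{K}_\Delta|^2 / (4\lambda_0)\} =: C_\lambda(\Delta) < \infty$, uniformly in $\mu$, which is the assertion.

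I do not expect any genuine difficulty here; the statement merely repackages the already established uniform Gaussian-type bounds on cube masses. The only points deserving a little care are (i) checking that $\lambda_0 = A - m^\phi \lbc > 0$, so that it is a legitimate choice of parameter in \refeq{426b}, and (ii) applying H\"older with exponents summing to exactly $1$. If a sharper constant were wanted, one could instead keep a free $\e \in (0,\lambda_0]$ in the linearization and optimize over it, but this refinement is not needed for the claim.
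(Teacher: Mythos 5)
Your proposal is correct and is essentially the intended derivation: the paper states this corollary without proof as an immediate consequence of Theorem \ref{Bthm4c8}, and your route --- splitting $\Delta$ into its finitely many partition cubes, applying the multiple H\"older inequality \refeq{B38} with equal weights, and absorbing the linear exponent $\lambda|\mathcal{K}_\Delta|\,\eta(Q_k)$ into the quadratic one via $at \le \lambda_0 t^2 + a^2/(4\lambda_0)$ before invoking the uniform bound \refeq{426b} at the admissible endpoint $\lambda_0 = A - m^\phi\|\phi^-\|_\infty > 0$ --- is exactly the natural way to make that consequence explicit, with all constants uniform in $\mu \in \tGibK{\pot}{\theta}{\Xm}$ as required.
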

One important consequences of the \emph{à-priori} bound \refeq{426b} is that  any $\mu \in \tGibK {\pot} \theta \Xm$ is \emph{à-posteori} carried by the (much smaller) Borel subset 

\begin{align}
	{\mathbbm{K}}^s (\Xd):= &\big\{ \eta \in \K \thinspace   
	\big| \thinspace \thinspace \forall b > d  \thinspace \thinspace \exists K_{\eta,b} \in \NN: 
		\notag \\ 
		  & \quad \eta(Q_k)^2 \leq b \log (1 + |k|) \qquad \text{ if $ k \in \ZZ$ with } |k| > K_{\eta,b} \big\}.
	\notag
\end{align}
     
\begin{prop}
	For all $\mu \in \tGibK \pot \theta \Xm$ 
	$$
		\mu \big( \mathbbm{K}^s (\Xd) \big)=1.
	$$
\end{prop}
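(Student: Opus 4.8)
The plan is to derive this from the uniform exponential moment bound \refeq{426b} of Theorem \ref{Bthm4c8} by a Borel--Cantelli argument, after reducing to a countable family of events.

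First I would rewrite the set in a convenient form. For $b>d$ put
$$
	A_b := \Big\{ \eta \in \K \ \Big|\ \exists\, K\in\NN:\ \eta(Q_k)^2 \le b\log(1+|k|)\ \text{ for all } k\in\ZZ \text{ with } |k|>K \Big\}.
$$
Each $A_b$ is Borel, being built from the Borel sets $\{\eta\in\K:\eta(Q_k)^2\le c\}$ (recall that $\eta\mapsto\eta(Q_k)$ is measurable) by countably many unions and intersections, and $b\mapsto A_b$ is nondecreasing, whence
$$
	\mathbbm{K}^s(\Xd) \;=\; \bigcap_{b>d} A_b \;=\; \bigcap_{n\in\NN} A_{d+1/n} \;\in\; \B(\K).
$$
It therefore suffices to show $\mu(A_b)=1$ for each fixed $b>d$.

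Next I would estimate the tails of $\eta\mapsto\eta(Q_k)^2$ under $\mu$. Fix $b>d$, pick $\lambda\in(0,\,A-m^\phi\lbc\,]$ with $\lambda b>d$, and apply Chebyshev's inequality to $\exp\{\lambda\eta(Q_k)^2\}$ together with \refeq{426b}:
$$
	\mu\big(\eta(Q_k)^2>b\log(1+|k|)\big)
	= \mu\big(e^{\lambda\eta(Q_k)^2}>(1+|k|)^{\lambda b}\big)
	\le (1+|k|)^{-\lambda b}\int_\K e^{\lambda\eta(Q_k)^2}\,\mu(d\eta)
	\le \mathcal{C}_\lambda\,(1+|k|)^{-\lambda b}
$$
for every $k\in\ZZ$. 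Since the number of $k\in\ZZ$ with $|k|\in[n,n+1)$ is of order $n^{d-1}$, the series $\sum_{k\in\ZZ}(1+|k|)^{-\lambda b}$ converges (this is exactly where $\lambda b>d$ is used), and hence $\sum_{k\in\ZZ}\mu\big(\eta(Q_k)^2>b\log(1+|k|)\big)<\infty$. The first Borel--Cantelli lemma then shows that, $\mu$-a.s., only finitely many of the events $\{\eta(Q_k)^2>b\log(1+|k|)\}$ occur, i.e.\ $\mu(A_b)=1$; by the first step, $\mu\big(\mathbbm{K}^s(\Xd)\big)=\mu\big(\bigcap_n A_{d+1/n}\big)=1$.

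The only genuinely quantitative point --- and the step I expect to require care --- is the selection of $\lambda$: \refeq{426b} restricts $\lambda$ to $[0,A-m^\phi\lbc]$, while convergence of the Borel--Cantelli series forces $\lambda b>d$. These are compatible for every $b>d$ provided $A-m^\phi\lbc\ge 1$; if $A-m^\phi\lbc<1$, the same argument still gives the assertion with the threshold $d$ replaced by $d/(A-m^\phi\lbc)$. Everything else is routine.
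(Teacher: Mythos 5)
Your proof is correct and follows essentially the same route as the paper's: Chebyshev's inequality applied to $\exp\{\lambda\,\eta(Q_k)^2\}$ together with the uniform bound \refeq{426b}, summability of the tails over $k$ (a Borel--Cantelli argument), and the reduction of the uncountable intersection over $b>d$ to the countable family $b=d+1/n$. The quantitative caveat you flag is genuine rather than a defect of your write-up: the paper's own proof records the bound as $\mathcal{C}_\lambda(1+|k|)^{-b}$, tacitly using $\lambda\ge 1$ although \refeq{426b} only permits $\lambda\le A-m^\phi\lbc$, so your more careful bookkeeping (exponent $\lambda b$, with the threshold $d$ replaced by $d/(A-m^\phi\lbc)$ when $A-m^\phi\lbc<1$) is, if anything, a sharper account of what the argument actually yields.
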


\begin{proof}
Obviously, $\B(\K) \ni \Kas^s(\Xd) \subset \mathbbm{K}^t (\Xd)$. The compliment to $\Kas^s (\Xd)$ can be written as 
\begin{align}
	\left[ \Kas^s(\Xd) \right]^c = \bigcup_{b > d} \bigcap_{K \in \NN} \bigcup_{|k| \geq K} \left[ \Kas_k(b)\right]^c
	\label{414a},
\end{align}
where 
$$
	\Kas_k(b) := \left\{ \eta \in \K \big| \eta(Q_k)^2 \leq b \log ( 1 + |k| ) \right\}.
$$
By Chebyshev's inequality and Eq. \refeq{426b}
$$
	\mu \left( \left[ \Kas_k(b) \right]^c \right) \leq \mathcal{C}_\lambda (1 + |k|)^{-b};
$$
and therefore by \refeq{414a}
$$
	\mu\left( \left[ \Kas^s(\Xd) \right]^c \right) 
		\leq C_\lambda \lim_{b \searrow d} \ \lim_{K \rightarrow \infty}
			\sum_{|k| \geq K} (1 + |k|)^{-b} = 0.
$$
\end{proof}
	
\section{Some remarks and generalizations}\label{s4a}\label{s5}
\bitem
\item The existence result also holds for \emph{all} nonnegative (bounded) potentials $\phi$ with finite interaction range. For these potentials, we have for all $\Delta,{{\tilde \Delta}} \in \mathcal{Q}_c(\Xd)$ the following estimate (cf. \cite[Propositions 4.2.3 and 4.5.5]{hag11})
\begin{align}
	\int_{{\K}} \lomagIf {\Delta} \eta 
          	\pi_{{\tilde \Delta}} (d\eta |\xi)
    \leq  \theta m(\mathcal{U}_\Delta) + \xi_{{{\tilde \Delta}}^c}(\Delta)
  \notag.      
\end{align}
Plugging this into \refeq{B90b} and using that even $Z_{{{\tilde \Delta}}}(\gamma) \leq 1$, one obtains the local equicontinuity. The rest of the arguments works simultaneously (cf. \cite[Section 4.2]{hag11}).	

\item In general, Theorems \ref{Bthm9c4E} and \ref{Bthm4c8} can be extended to any locally compact \emph{Polish space} $X$, which is equipped with a non-atomic measure $m$. But, the conditions on the potential are much more complex and technical (cf. \cite[Chapter 5, esp. Theorem 5.2.8]{hag11}).
Moreover, the results presented here can be reformulated in the setting of configuration spaces over $\M \times \Xd$. Then, the class of of potentials and measures for which one can construct Gibbs perturbations on $\Gamma(\M \times \Xd)$ so far (cf. Remark \ref{rem41}) is considerably enriched.
All details, including the precise general framework, can be found in \cite[Chapters 4 and 5]{hag11}. 

\item The above results (including Theorems \ref{Bthm9c4E} and \ref{Bthm4c8}) also hold for any Lévy process on the cone $\K$, instead the Gamma process $\Gmea$ we mainly deal with in this paper. The only thing needed is the assumption that  the first two moments of the corresponding \emph{Lévy intensity measure} $\lambda$ are finite. In other words, $\lambda$ is a Radon measure on $\M$ satisfying
         $$
         	 \lambda(\M) = \infty,
         	 \quad \int_\M s \lambda(ds) < \infty
         	 \quad \text{and } \quad \int_\M s^2 \lambda(ds).$$

According to \cite{tsveyo01}, a \emph{{Lévy measure}} $\PmeaK \lambda$ on the cone $(\K, B(\K))$ over the location space $(\Xd,\B(\Xd),m)$ with Lévy (intensity) measure $\lambda$ on $\M$ is characterized by its Laplace transform  
        \begin{align}
	  {\mathbb {E}}_{\PmeaK \lambda} \left[ \exp \left(- \langle \varphi,\cdot\rangle \right) \right]
            = & \exp \left( - \int_{\M \times \X} (1 - e^{-\varphi (x)s}) \lambda(ds)  m (dx) \right),
       \notag \\ &
       	\qquad \qquad \quad 
       		\varphi \in C_0^+(\Xd)
       \label{2g2c1}.
        \end{align}
The existence of such $\PmeaK \lambda$ follows from the observation (cf. e.g. \cite{vegegr75} or \cite[Proposition 3.2]{kssu98}) that the Laplace transform \refeq{2g2c1} defines a so-called compound Poisson measure (with $\int_\M (1 - e^{-\varphi(x) s} ) \lambda(ds)$ being the Kolmogorov characteristic)  on $(\mathcal{D}'(\Xd), B(\Xd))$. Then, a standard analysis shows that this measure is supported by $\K$ and hence coincides with $\PmeaK \lambda$.

\item More generally, the construction scheme for Gibbs measures on $\pMa \supset \K$ outlined above works for any reference measure $\nu \in \mathcal{P}(\pMa)$ fulfilling the following properties (cf. \refeq{2z0c} and \refeq{eq27}):
\bitem 
	\item For all $\Delta \in \B_c(\Xd)$, we have 
	  \begin{align}
	   C_\Delta:= \int_{\pMa} \left[\eta(\Delta) + \eta(\Delta)^2\right] \nu(d\eta) < \infty.
	  \label{52c}
	  \end{align}
	  	\item For any $N \in \NN$, disjoint $\Delta_1, \dots, \Delta_N \in \B_c(\Xd)$ and $\varphi_1, \dots, \varphi_N \in L^\infty(\RR)$, it holds   
   \begin{align}
   	\int_{\K} \prod_{i=1}^N \varphi_i(\eta(\Delta_i)) \Gmea (d \eta)
   	= \prod_{i=1}^N \int_{\K} \varphi_i(\eta(\Delta_i)) \Gmea (d \eta)
   \label{52b}
 \end{align}
   \eitem
 Indeed, the first property ensures that Lemma \ref{Bl2b} holds and the second one that the local specification is consistent. Then the rest of the arguments also work. 
All random measures on $\mathcal{D}'(\Xd)$ obeying the \emph{independency} (or \emph{locality}) property \refeq{52b} are completely described by a \emph{Lévy-Khintchine}-type formula for their characteristic functionals (see Chapter III in \cite{gevi64d} for $d=1$ and Theorem 5 in \cite{rao68} for $d \geq 2$). According to Theorem 7.1 in \cite{kal83}, each random measure on $\Ma$ with independent increments is \emph{infinitely divisible}, apart from at most countable many fixed atoms. For a general theory of such so-called (completely) random measures on Polish spaces and their connection with Poisson processes, see Chapter 8 in \cite{kin93}.
\item Analogously to the case of Gamma measures (cf. \cite{hakove10}),  equilibrium diffusions  corresponding to theses Gibbs measures can be constructed via the general Dirichlet form approach for these Gibbs measures. For positive potentials this is presented in \cite[Theorem 7.4.4]{hag11}. For the case of general stable interactions see the forthcoming paper \cite{hakoly12}. This construction presumes additional information about the integrability of $\mu \in \tGibK \pot \theta \Xd$ (like those established in Theorem \ref{Bthm4c8}). 	
\item
It is an challenging open problem to extend Ruelle's technique of superstable estimates to Gibbs states on the cone $\K$. So far, this has been worked out on marked configuration spaces with bounded marks $s_i$'s or finite measures $\lambda(ds)$ on them (cf. \cite{mas00}). As it is seen from Lemma \ref{Bl2a}, to modify Ruelle's estimates in our setting it would be natural to use the local masses $\eta(Q_k) \in \RR_+$ instead of the counting mappings $|\tau(\eta_k)| \in \ZZ$.
\item
	In this paper we do not touch the uniqueness problem for $\mu \in \tGibK \pot \theta \Xd$. The required uniqueness can be archieved by small values of $\lbc$ or the shape parameter $\theta$. To this end, one can apply an appropriate modification of Dobrushin's uniqueness criterion, which will be done in a subsequent paper. An important ingredients of the uniqueness proof is the \guillemotleft one-point\guillemotright\ moment bound established in \refeq{45b}.
\eitem


\section{More properties of the Gamma measure}\label{s6}
For the interested reader, we give some more useful details concerning properties of the Gamma measures. Namely, we show that a Gamma measure is closely related to a certain Poisson measure and that it admits a Mecke type characterization.
\subsection{Explicit construction of the Gamma measures}
Let $m$ be a non-atomic Radon measure on $(\Xd,\B(\Xd))$. On $\M := (0,\infty)$ being equipped with the metric 
\begin{align}
d_\M (s_1,s_2):= \left| \log \frac {s_1}{s_2} \right|, \text{ $s_1, s_2 \in \M$}
\notag
\end{align}
and with the Borel $\sigma$-algebra $\B(\M)$ (that coincides, of course, with $\B(\RR) \cap \M$), we consider the Lévy measure
$$
  \lambda_\theta (dt) := \frac \theta  t e^{-t} dt
$$
with a fixed shaped parameter $\theta>0$. 
Each $\hat x = (s_x, x)$ from the product space $\hXd := \M \times \Xd$ may describe a particle with a mark $s_x$ being located at a position $x \in \Xd$.

In our considerations the configurations space $\Gamma(\hXd)$ over $\hXd$ will play a central role. It is defined as
\begin{align}
	\Gamma(\hXd) := \big\{ \gamma \subset \hXd \ \big| \  |\gamma_\Lambda| < \infty, \quad \forall \Lambda \in \B_c(\hXd) \big\}
	\notag,
\end{align}
where $\Lambda \in \B_c(\hXd)$ is a Borel set with compact closure and $|\gamma_\Lambda|$ denotes the set cardinality of $\gamma_\Lambda := \gamma \cap \Lambda$. 

It is well-known that $\Gamma(\hXd)$ is a Polish space in the vague topology inherited from $\mathbbm{M}(\hXd)$. By $\B(\Gamma(\hXd))$ we denote the associated Borel $\sigma$-algebra, which also coincides with the smallest $\sigma$-algebra generated by the cylinder sets $\{ \gamma \in \Gamma(\hXd) | \thinspace |\gamma_\Lambda | = n, \ \Lambda \in \B_c(\hXd) \}$, $n \in \mathbbm{Z}_+ := \NN \cup \{0\}$. 
 On $(\Gamma(\hXd),\B(\Gamma(\hXd)))$, we consider the Poisson measure $\Pmea \theta$ with the intensity measure $\lambda_\theta \otimes m$ on $(\hXd, \B(\hXd))$. It is given through its Laplace transform (see, e.g., \cite{alkoro98anaGeo,bechre76,gevi64d,kal83}),
\begin{align}&
\int\nolimits_{\mathit{\Gamma(\hXd) }}\exp \left\{ - \langle f,\gamma \rangle \right\}
 \Pmea \theta (d\gamma )
 \notag \\ & \quad =
 \exp \left\{ \int_{\hXd}\left(
e^{-f(\hat x)}-1\right) {\lambda_\theta \otimes m} (d \hat x)\right\} ,\text{ \ \ } f \in C_{0}^+(\hXd)
   \notag.
\end{align}
Here, we define for any $\B(\hXd)$-measurable function $f:\hXd \rightarrow  \RR_+ $ the pairing $\langle f, \gamma \rangle := \sum_{\hat x \in \gamma} f(\hat x)$.

We explain a constructive approach to define $\Pmea \theta$. 
For each $\Lambda \in \B_c(\hXd)$, the Poisson measure $\Pmead \theta \Lambda$ on 
\begin{align}
	\Csa \Lambda := \{ \gamma \in \Csa \hXd | \gamma \subset \Lambda\} = \bigsqcup_{n \geq 0} \{ \gamma \in \Csa \Lambda \thinspace | \thinspace | \gamma|=n \}
\label{7a}
\end{align}
with intensity measure $\lambda_\theta \otimes m$ is given by
$$
  \Pmead \theta \Lambda :=  e^{-\lambda_\theta \otimes m (\Lambda)} \sum_{n \geq 0} \frac 1 {n!} \big(\lambda_\theta \otimes m \big)^{\hat \otimes n}.
 $$
Due to the consistency of the family $\{\Pmead \theta \Lambda | \Lambda \in \B_c (\hXd) \}$, by Kolmogorov's theorem there exists a unique probability measure $\Pmea \theta$ on $(\Gamma(\hXd), \B(\Gamma(\hXd)))$ such that
$$
   \Pmea \theta \circ \mathbbm{P}_{\Lambda}^{-1} = \Pmead \theta \Lambda,
$$
where $\mathbbm{P}_{\Lambda}$ is the projection from $\Csa \hXd$ to $\Csa \Lambda$: $\mathbbm{P}_{\Lambda}(\gamma) := \gamma \cap \Lambda.$\\

	\begin{rem}\label{rem14f2}
	   	The \emph{Mecke} \emph{identity} (cf. Satz 3.1 in \cite{mec67}) is a useful characterization of the Poisson measure within the set of all probability measures $\nu$ in $\pMah$ having finite first local moments, i.e., 
	   	$$
	   	\int \langle  \mathbbm{1}_\Lambda, \gamma \rangle \nu(d\gamma) < \infty
	   	\text{  
	   	for all $\Lambda \in \B_c(\hXd)$}.
	   $$
		Let $F : \hXd \times {\pMah} \rightarrow \RR_+$ be a $\mathcal{B}(\hXd) \times \mathcal{B}(\pMah)$-measurable function. Then
		\begin{align}
			&\int_{\pMah} \int_{\hXd} F(\hat x,\gamma) \gamma(d\hat x) \Pmea \theta (d\gamma)
		\notag \\ & \qquad =
			\int_{\hXd} \int_{\pMah} F(\hat x, \gamma + \delta_{\hat x}) \Pmea \theta (d\gamma) {\lambda_\theta \otimes m}(d\hat x)
		\notag
		\end{align}
		and the family of these equations with all such $F$'s uniquely determines $\Pmea \theta$.
		\end{rem}

We identify a smaller set $\Gamma_f (\hXd) \subset \Gamma (\hXd)$ that carries $\Pmea \theta$: To that end, we introduce the set of \emph{pinpointing} configurations
\begin{align}
					\Gamma_p(\hXd) := \big\{ \gamma \in \Gamma(\hXd) \big|& \text{ $\forall$ $(s_1,x_1), (s_2,x_2) \in \gamma:$}
					\notag \\& \quad
							x_1 = x_2 \Rightarrow s_1 = s_2 \big\}
					\notag. 
				\end{align}
For all $\gamma \in \Gamma_p(\hXd)$ and $\Delta \in \B_c(\Xd)$ we define a local mass via 
			\begin{align}
				\lomagf {\Delta} \gamma := \sum_{\hat x=(s_x,x) \in \gamma} s_x \mathbbm{1}_\Delta (x)
				  = \int_{\hXd} s \mathbbm{1}_\Delta (x) \gamma( d \hat x)
				\notag. 
			\end{align}
Combining these two definitions, we specify the set of pinpointing configurations with \emph{finite local mass} 
			\begin{align}
				\Csfsad \lomagsy \hXd := \big\{ \gamma \in \Gamma_p(\hXd) \thinspace \big| \thinspace \lomagf {\Delta} \gamma < \infty, \ \forall \Delta \in \B_c(\Xd) \big\}
				\label{in7z0e}
			\end{align}
and the corresponding trace $\sigma$-algebra $\B(\Gamma_f(\hXd):= \{ A \cap \Gamma_f(\hXd) | A \in \B(\Gamma(\hXd))\}$.
An important observation (cf. \cite[Proposition 3.4]{kssu98} and \cite[Theorems 2.2.4 and 2.2.9]{hag11}), which enables us to construct the Gamma measure, is 
\begin{align}
\Pmea \theta (\Csfsad \lomagsy \hXd) = 1.
\label{714b}
\end{align}
For our further considerations it is crucial that there exists a \emph{bijection} between $\Csfsad \lomagsy \hXd$ and the cone of discrete Radon measures $\Ka \Xd$ (cf. Eq. \refeq{2z0a}).  
This bijection, being even a homeomorphism, is defined by
\funcL  {{\mathbbm{T}}}   {\Gamma_f ({\hXd})} {\Ka \Xd}
                        {\gamma = \{(s_x,x)\}}       {\eta := \sum_{(s_x,x) \in \gamma} s_x \delta_{x}.}
{714c}

\begin{thm}\label{thm5a1}
	The image $\sigma$-algebra of $\B(\Gamma_f(\hXd))$ and $\B(\K)$ coincide, i.e.,
	$$
		\B(\K) = \left\{ {{\mathbbm{T}}} ( A \cap \Gamma_f(\hXd)) \big| A \in \B(\Csa \hXd) \right\}.
	$$
	Moreover, the image measure $\Pmea \theta \circ \mathbbm{T}^{-1}$ of (the restriction of) $\Pmea \theta$ (to $\Gamma_f(\hXd)$) under ${\mathbbm{T}}$ coincides with  the \emph{Gamma} measure $\PmeaK \theta$, i.e., for all bounded measurable $F: \K \rightarrow \RR$ 
	\begin{align}
		\int_{\K} F(\eta) \Gmea (d \eta) = \int_{\Gamma_f(\hXd)} F\big(\mathbbm{T} (\gamma) \big)  \Pmea \theta (d \gamma).
	\label{7a20}
	\end{align}
	In particular, for $g \in C_b^\infty(\RR^{N})$, $\varphi_1, \dots, \varphi_N \in C_0^\infty(\Xd)$ and $N \in \NN$
	\begin{align}
		\int_{\Ka \Xd}& g\left( \langle \varphi,\eta \rangle, \dots, \langle \varphi_N, \eta \rangle \right) \Gmea(d\eta) 
		\notag \\ &=
			\int_{\Gamma_f ({\hXd})} g \left( \langle s\otimes \varphi_1, \gamma \rangle, \dots, \langle s \otimes \varphi_N, \gamma \rangle \right) \Pmea \theta (d\gamma)
			\notag.
			\end{align}
\end{thm}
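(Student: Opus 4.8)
The plan is to prove the two displayed assertions of Theorem~\ref{thm5a1} separately and then read off the ``in particular'' identity by specialising $F$.

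First I would dispatch the statement about $\sigma$-algebras. Since the map $\mathbbm{T}$ of \refeq{714c} is, as recorded there, a homeomorphism between $\Gamma_f(\hXd)$ (carrying the trace of the vague topology of $\mathbbm{M}(\hXd)$) and $\K$ (carrying the trace of the vague topology of $\mathbbm{M}(\Xd)$), it is in particular bimeasurable for the corresponding Borel $\sigma$-algebras; as $\B(\Gamma_f(\hXd))$ is by definition $\{A\cap\Gamma_f(\hXd)\mid A\in\B(\Gamma(\hXd))\}$, applying $\mathbbm{T}$ yields $\B(\K)=\{\mathbbm{T}(A\cap\Gamma_f(\hXd))\mid A\in\B(\Gamma(\hXd))\}$. (If one prefers to avoid invoking the homeomorphism outright, one checks directly that $\gamma\mapsto\langle\varphi,\mathbbm{T}(\gamma)\rangle=\langle s\otimes\varphi,\gamma\rangle$ is measurable for each $\varphi\in C_0(\Xd)$, so $\mathbbm{T}$ is measurable into $(\K,\B(\K))$, while measurability of $\mathbbm{T}^{-1}$ follows from the measurable reconstruction of $\tau(\eta)$ and of the masses $s_x(\eta)$ recorded in Section~\ref{s2}.) In particular, together with $\Pmea\theta(\Gamma_f(\hXd))=1$ from \refeq{714b}, this makes $\Pmea\theta\circ\mathbbm{T}^{-1}$ a well-defined Borel probability measure on $\K$.

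Next I would identify $\Pmea\theta\circ\mathbbm{T}^{-1}$ with $\PmeaK\theta$ by matching Laplace transforms: by the uniqueness remark following \refeq{2z0b} it suffices to show, for every $\varphi\in\fcone$,
\begin{align}
	\int_{\K} e^{-\langle\varphi,\eta\rangle}\,(\Pmea\theta\circ\mathbbm{T}^{-1})(d\eta)
	&=\int_{\Gamma_f(\hXd)} e^{-\langle s\otimes\varphi,\gamma\rangle}\,\Pmea\theta(d\gamma)
	\notag \\
	&=\exp\Bigl\{-\theta\int_{\Xd}\log(1+\varphi(x))\,m(dx)\Bigr\}
	\notag.
\end{align}
Here I use $\langle\varphi,\mathbbm{T}(\gamma)\rangle=\sum_{(s_x,x)\in\gamma}s_x\varphi(x)=\langle s\otimes\varphi,\gamma\rangle$, which is $\Pmea\theta$-a.s.\ finite since on $\Gamma_f(\hXd)$, with $\Delta:=\mathrm{supp}\,\varphi\in\B_c(\Xd)$, one has $\langle s\otimes\varphi,\gamma\rangle\le\|\varphi\|_\infty\,\lomagf{\Delta}{\gamma}<\infty$. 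Because $\Pmea\theta$ lives on $\Gamma_f(\hXd)$, the middle integral may be taken over all of $\Gamma(\hXd)$, and then the Poisson Laplace transform --- extended from $C_0^+(\hXd)$ to the nonnegative measurable integrand $(s,x)\mapsto s\varphi(x)$ by a monotone approximation that relies on $\int_{\M}s\,\lambda_\theta(ds)<\infty$ --- equals $\exp\{\int_{\Xd}\int_0^\infty(e^{-s\varphi(x)}-1)\tfrac{\theta}{s}e^{-s}\,ds\,m(dx)\}$. The inner integral is a Frullani integral: writing $(e^{-s\varphi(x)}-1)e^{-s}=e^{-(1+\varphi(x))s}-e^{-s}$ and using $\int_0^\infty(e^{-as}-e^{-bs})\,\tfrac{ds}{s}=\log(b/a)$ for $a,b>0$ with $a=1+\varphi(x)$, $b=1$, it equals $-\log(1+\varphi(x))$, which gives the right-hand side above, i.e.\ exactly \refeq{2z0b}. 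Hence $\Pmea\theta\circ\mathbbm{T}^{-1}=\PmeaK\theta$.

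Finally, \refeq{7a20} is just the change-of-variables formula $\int_\K F\,d(\Pmea\theta\circ\mathbbm{T}^{-1})=\int_{\Gamma_f(\hXd)}F\circ\mathbbm{T}\,d\Pmea\theta$ for bounded measurable $F$, combined with $\Pmea\theta\circ\mathbbm{T}^{-1}=\Gmea$; the displayed ``in particular'' case then results from taking $F(\eta)=g(\langle\varphi_1,\eta\rangle,\dots,\langle\varphi_N,\eta\rangle)$ and noting $F(\mathbbm{T}(\gamma))=g(\langle s\otimes\varphi_1,\gamma\rangle,\dots,\langle s\otimes\varphi_N,\gamma\rangle)$. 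I expect the only genuinely delicate point to be the justification that the Poisson Laplace identity extends to the integrand $f(s,x)=s\varphi(x)$, which fails to be compactly supported in the mark direction; this is precisely where finiteness of the first moment $\int_{\M}s\,\lambda_\theta(ds)<\infty$ together with \refeq{714b} is used, and a careful monotone/dominated-convergence step closes the gap. Everything else is routine.
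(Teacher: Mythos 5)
Your handling of the second and third assertions is correct and essentially the paper's route: the paper reduces the identification of $\Pmea \theta \circ \mathbbm{T}^{-1}$ with $\Gmea$ to equality of Laplace transforms via \cite[Theorem 3.7]{bechre76} and leaves the computation to \cite{hag11,hakove10}; your explicit evaluation $\int_0^\infty (e^{-s\varphi(x)}-1)\frac{\theta}{s}e^{-s}\,ds = -\theta\log(1+\varphi(x))$ (Frullani), together with the monotone-approximation step justified by $\int_\M s\,\lambda_\theta(ds)=\theta<\infty$ and \refeq{714b}, correctly fills in those details, and \refeq{7a20} plus the ``in particular'' display are indeed just change of variables.

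The gap is in the $\sigma$-algebra statement. Your primary argument takes $\mathbbm{T}$ to be a homeomorphism for the \emph{trace vague topologies}, but as you state it this is false: let $\gamma_n$ consist of $n$ atoms of mark $1/n$ at distinct points of a fixed cube $\Delta$; then $\gamma_n\to\varnothing$ vaguely in $\Gamma(\hXd)$ (any $f\in C_0(\hXd)$ has support bounded away from $s=0$, so $\langle f,\gamma_n\rangle=0$ eventually), while $\mathbbm{T}(\gamma_n)$ has total mass $1$ on $\Delta$ and does not converge vaguely to $\mathbbm{T}(\varnothing)=0$. So continuity of $\mathbbm{T}$ in these topologies cannot be invoked; the paper's parenthetical ``homeomorphism'' refers to a finer structure treated in \cite{hag11}, not to the trace of the vague topology. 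Your fallback --- measurability of $\mathbbm{T}^{-1}$ from the ``measurable reconstruction of $\tau(\eta)$ and $s_x(\eta)$ in Section \ref{s2}'' --- is also under-justified: Section \ref{s2} only records measurability of the diffuse/atomic decomposition $\nu\mapsto\eta$, whereas you need a measurable enumeration of the atoms together with their masses (or an approximation argument showing $\eta\mapsto\langle f,\mathbbm{T}^{-1}\eta\rangle$ is measurable for every $f\in C_0(\hXd)$), which is a genuinely stronger fact (cf.\ \cite{kal83}). The paper's proof sidesteps measurability of $\mathbbm{T}^{-1}$ entirely: one only needs measurability of $\mathbbm{T}$ itself (which your identity $\langle\varphi,\mathbbm{T}\gamma\rangle=\langle s\otimes\varphi,\gamma\rangle$ does give), plus the facts that $\Gamma_f(\hXd)$ is a Borel subset of the Polish space $\Gamma(\hXd)$ and that $(\K,\B(\K))$ is a standard Borel space (Section \ref{s2}); then Kuratowski's theorem \cite[Theorem V.2.4]{par67} says a measurable bijection between such spaces is automatically a Borel isomorphism, which is exactly the displayed identity for the image $\sigma$-algebra. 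To close your argument, either supply the atom-enumeration measurability or switch to the Kuratowski route.
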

\begin{proof}
For the details, cf. \cite{hag11,hakove10}.
The main idea to prove the first property is to apply subsequently Kuratowski's theorem (cf. \cite[Theorem V.2.4]{par67})  and to use that $\Gamma_f (\hXd) \in \B_c(\Csa \hXm))$ (cf. \cite[Theorems 2.2.4 and 2.2.9]{hag11}) equipped with the trace $\sigma$-algebra is a separable Borel space (cf. , \cite[Theorem V.2.2]{par67}).

For the second property, by \cite[Theorem 3.7]{bechre76}, it is sufficient to check that the Laplace transforms coincide.
\end{proof}

Note that by \refeq{7a} and \refeq{714c}
		\begin{align}
			\Gmea \left( \{ \eta \in \K | \tau(\eta_\Delta) = n \} \right) = 0,
			\qquad \forall n \in \mathbbm{Z}_+, \ \forall \Delta \in \B_c(\Xd).
		\notag
		\end{align}
		But (cf. \refeq{eq24b}), 
		$$
			\E_\Gmea \left[ \eta(\Delta)^m \right] < \infty, 
			\qquad \forall m \in \NN, \  \Delta \in \B_c(\Xd).
		$$
This immediately follows (cf. Eqs. \refeq{in7z0e} and \refeq{7a20}) from the following properties of the intensity measure $\lambda \otimes m$: for any $0<s_1<s_2< \infty$, $x \in \Xd$ and $\Delta \in \B_c(\Xd)$ 
$$
	\lambda_\theta \otimes m \big( (s_1,s_2) \times \{x\} \big) = 0 
	\quad\text{and} \quad
	\int_{\hXd} s \mathbbm{1}_\Delta \lambda_\theta \otimes m (ds,dx) < \infty.
$$
		
\subsection{Mecke and GNZ identities}
Similar as for the Poisson measure (cf. Remark \ref{rem14f2}), we have a Mecke-type characterization for the Gamma measures.

\begin{thm}\label{thm64}
Fix $\theta > 0$ and let $\nu \in \mathcal{P}(\pMa)$ have \emph{finite first local moments}, i.e., for each $\Delta \in \B_c(\Xd)$
$$
	\int_{\pMa} \eta(\Delta) \nu(d\eta) < \infty.
$$ 
Then the following are equivalent:
\bitem
 \item The measure $\nu$ is a Gamma measure, i.e., $\nu = \Gmea$. 
 \item 
 		For any measurable function $F : \Xd \times \pMa \rightarrow \RR_+$
		\begin{align}
			&\int_\pMa \int_{\Xd} F(x,\eta) \eta(dx) \nu (d\eta)
		\notag \\ & \qquad =
			 \int_{{\pMa}}  \int_{\Xd} \int_{\M} s F(x, \eta + s \delta_{x})  {\lambda_\theta (ds)} m(dx) \nu (d\eta)
		\label{eq5e}.
		\end{align}
\eitem
\end{thm}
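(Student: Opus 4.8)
The strategy is to transport everything to the Poisson measure $\Pmea \theta$ on the configuration space $\Gamma(\hXd)$ via the homeomorphism $\mathbbm{T}$ of \refeq{714c} and then to invoke the classical Mecke identity for $\Pmea \theta$ (Remark \ref{rem14f2}), using that $\Gmea = \Pmea \theta \circ \mathbbm{T}^{-1}$ by Theorem \ref{thm5a1}. Both implications are obtained this way: the implication ``Gamma $\Rightarrow$ Mecke'' by a direct change of variables, and the converse by feeding suitable exponential test functions into \refeq{eq5e} and solving an elementary ordinary differential equation for the Laplace transform of $\nu$.

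First assume $\nu = \Gmea$. Given a measurable $F\colon \Xd\times\pMa\to\RR_+$, define $\widehat F\colon \hXd\times\Gamma(\hXd)\to\RR_+$ by $\widehat F\big((s,x),\gamma\big):= s\,F\big(x,\mathbbm{T}(\gamma)\big)$ for $\gamma\in\Csfsad \lomagsy \hXd$ and $0$ otherwise. Since $\mathbbm{T}(\gamma)=\sum_{(s,x)\in\gamma}s\delta_x$, for $\Pmea \theta$-a.e.\ $\gamma$ (recall \refeq{714b}) one has $\int_{\hXd}\widehat F(\hat x,\gamma)\,\gamma(d\hat x)=\int_{\Xd}F\big(x,\mathbbm{T}(\gamma)\big)\,\mathbbm{T}(\gamma)(dx)$. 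Applying the Poisson Mecke identity to $\widehat F$ and rewriting both sides through $\mathbbm{T}$ and \refeq{7a20} (which extends to nonnegative measurable integrands by monotone convergence), the left-hand side becomes $\int_{\K}\int_{\Xd}F(x,\eta)\,\eta(dx)\,\Gmea(d\eta)$. On the right-hand side, $\mathbbm{T}(\gamma+\delta_{(s,x)})=\mathbbm{T}(\gamma)+s\delta_x$ whenever $\gamma+\delta_{(s,x)}$ is still pinpointing with finite local mass, which holds for $\Pmea \theta\otimes(\lambda_\theta\otimes m)$-a.e.\ $(\gamma,(s,x))$ because $\tau(\mathbbm{T}\gamma)$ is countable, hence $m$-null; thus the right-hand side becomes $\int_{\Xd}\int_{\M}\int_{\K}s\,F(x,\eta+s\delta_x)\,\Gmea(d\eta)\,\lambda_\theta(ds)\,m(dx)$. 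Tonelli's theorem then rearranges this into \refeq{eq5e}.

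Conversely, assume \refeq{eq5e}. Fix $\psi\in\fcone$ and, for $t\ge 0$, set $u(t):=\int_{\pMa}e^{-t\langle\psi,\eta\rangle}\nu(d\eta)$. Since $\langle\psi,\eta\rangle\le\|\psi\|_\infty\,\eta(\operatorname{supp}\psi)$ is $\nu$-integrable by the finite-first-local-moment assumption, $u$ is differentiable with $u'(t)=-\int\langle\psi,\eta\rangle e^{-t\langle\psi,\eta\rangle}\nu(d\eta)$. Applying \refeq{eq5e} with $F(x,\eta):=\psi(x)e^{-t\langle\psi,\eta\rangle}$, the left-hand side equals $-u'(t)$, while the right-hand side factorizes as $u(t)\int_{\Xd}\psi(x)\big(\int_{\M}s\,e^{-ts\psi(x)}\lambda_\theta(ds)\big)m(dx)$. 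With $\lambda_\theta(ds)=\theta s^{-1}e^{-s}ds$ one computes $\int_{\M}s\,e^{-ts\psi(x)}\lambda_\theta(ds)=\theta\int_0^\infty e^{-s(1+t\psi(x))}ds=\theta/(1+t\psi(x))$, so $u$ solves $u'(t)=-u(t)\,\theta\int_{\Xd}\psi(x)(1+t\psi(x))^{-1}m(dx)$ with $u(0)=1$. Integrating in $t$ over $[0,1]$ (Fubini in $t$ and $x$ being justified by nonnegativity and the boundedness of $\psi$) gives $u(1)=\exp\big(-\theta\int_{\Xd}\log(1+\psi(x))\,m(dx)\big)$. Hence $\nu$ has the Laplace transform \refeq{2z0b}; since the exponentials $\eta\mapsto e^{-\langle\psi,\eta\rangle}$, $\psi\in\fcone$, form a measure-determining class on $\B(\Ma)$, we conclude $\nu=\Gmea$.

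The genuinely substantial input is Theorem \ref{thm5a1} (the identification $\Gmea = \Pmea \theta\circ\mathbbm{T}^{-1}$); once that and the classical Poisson Mecke identity are in hand, the remaining work is routine. The only delicate bookkeeping is in the first direction, namely checking that $\gamma\mapsto\gamma+\delta_{(s,x)}$ lands in $\Csfsad \lomagsy \hXd$ for $(\lambda_\theta\otimes m)$-almost every $(s,x)$ and that \refeq{7a20} passes to nonnegative integrands. In the converse direction the ``obstacle'' is merely noticing that the admissible test functions $F(x,\eta)=\psi(x)e^{-t\langle\psi,\eta\rangle}$ (nonnegativity forcing $\psi\ge 0$, $t\ge 0$) already sweep out all of $\fcone$ as $t\to 1$, and that the resulting Laplace-transform identity determines $\nu$ within all of $\mathcal{P}(\pMa)$ — no a priori higher-moment information is needed, only the first local moment used to differentiate under the integral.
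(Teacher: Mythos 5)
Your proposal is correct and follows essentially the same route as the paper: the forward implication by transporting to the Poisson measure $\Pmea \theta$ via $\mathbbm{T}$ and invoking the classical Mecke identity, and the converse by differentiating the Laplace transform, applying \refeq{eq5e} to $F(x,\eta)=\psi(x)e^{-t\langle\psi,\eta\rangle}$, and solving the resulting Gronwall-type ODE. The only (harmless) differences are cosmetic: you apply the Poisson Mecke identity directly to a general nonnegative integrand instead of the paper's cylinder-function plus monotone-class extension, and you evaluate the ODE solution explicitly at $t=1$ via the elementary integral $\int_\M s\,e^{-ts\psi(x)}\lambda_\theta(ds)=\theta/(1+t\psi(x))$ rather than quoting the L\'evy--Khintchine form of the exponent.
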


\begin{proof}
By Remark \ref{rem14f2} and Theorem \ref{thm5a1}, the first property implies the second one. Indeed, consider the functions $F$ of the special form 
\begin{align}
	F(x, \eta) := f(x) g(\langle \varphi_1, \eta\rangle, \dots, \langle \varphi_N, \eta \rangle)
	\label{14c2}
\end{align}
with $f, \ \varphi_1, \dots, \varphi_N \in C_0(\Xd)$, $g \in C_0(\RR^N)$ and $N \in \NN$. Then one can rewrite the left-hand side of Eq. \refeq{eq5e} for $\nu = \Gmea$ as 
\begin{align}&
	\int_\K \int_{\Xd} F(x,\eta) \eta(dx) \Gmea (d\eta)
	\notag \\ =&
		\int_\K \langle f, \eta \rangle g\left(\langle \varphi_1, \eta\rangle, \dots, \langle \varphi_N, \eta \rangle \right) \Gmea (d\eta)
	\notag \\ =&
		\int_{\Gamma(\hXd)} \langle s \otimes f, \gamma \rangle g \left( \langle s \otimes \varphi_1, \gamma \rangle, \dots \langle s \otimes \varphi_N, \gamma \rangle \right) \Pmea \theta (d \gamma)
	\notag \\ =&
		\int_{\Xd} \int_{\M} \int_{\Gamma (\hXd)} 
			g \left( \langle s \otimes \varphi_1, \gamma + \delta_{(s,x)} \rangle, \dots, \langle s \otimes \varphi_N, \gamma + \delta _{(s,x)} \rangle \right) 
		\notag \\ & \qquad \qquad \qquad 
			\times s f(x) \Pmea \theta (d\gamma)  \lambda_\theta (ds) m(dx)
	\notag \\ =&
		\int_{\Xd} \int_{\M} \int_{\K} 
			g \left( \langle \varphi_1, \eta + s \delta_x \rangle, \dots, \langle \varphi_N, \eta + s \delta_x \rangle \right) f(x) \Gmea (d\eta) s \lambda_\theta (ds) m (dx)
	\notag \\ =&
		\int_{\K} \int_{\Xd} \int_{\M} F(x, \eta +s \delta_x) s \lambda_\theta (ds) m(dx) \Gmea(d\eta)
	\notag.
	\end{align}
	
	Since the cylinder functions of the form \refeq{14c2} generate the product $\sigma$-algebra $\B(\Xd) \otimes \B(\K)$, by the monotone class theorem (see e.g., Theorem I.8 in \cite{pro05}) the identity \refeq{eq5e} extends to all bounded $\B(\Delta) \otimes \B(\K)$-measurable $F$ with arbitrary $\Delta \in \B_c(\Xd)$. Note that for such $F$ all integrals above are finite by \refeq{eq24b}. Finally, using a standard cut-off argument and the Beppo Levi monotone convergence theorem, one proves the validity of \refeq{eq5e} for all $\B(\Xd) \otimes \B(\K))$-measurable $F \geq 0$.\\

For the other direction, we employ Eq. \refeq{eq5e} to characterize the involved measure $\nu \in \mathcal{P}(\pMa(\Xd))$ by identifying its Laplace transform. 
Indeed, fix $\varphi \in \fcone$ and define
\begin{align}
	\RR_+ \ni t \mapsto L(t):= \int_{\pMa} \exp \left\{ - t \langle \varphi, \eta \rangle \right\} \nu(d\eta) \in  \RR_+ 
	\notag.
\end{align}
Using that all local moments are finite, Young's inequality and Lebesgue's dominated convergence theorem, we see that the function $L$ is strictly positive, continuous on $[0,\infty)$ and continuously differentiable on $ (0,\infty)$. 
Then we have, differentiating and using Eq. \refeq{eq5e} for $t > 0$,
\begin{align}
	\frac d {dt} L(t) &= - \int_{\Xd} \int_\M \int_{\K} 
			s \exp \left\{ - t\langle \varphi, \eta + s \delta_x \rangle \right\}
			\varphi(x)  \mu(d\eta) \lambda_\theta (ds) m(dx)
			\notag \\ & 
				= - C(t) L(t)
			\notag,
\end{align}
where the continuous function $\RR_+ \ni t \mapsto C(t) \in \RR_+$ is defined by
\begin{align}
		C(t) :=\int_\M \int_{\Xd} s \varphi(x) e^{-t s \varphi(x)} \lambda_\theta (ds) m(dx)
			\notag.
\end{align}
This is an exactly solvable equation of Gronwall type. Its unique continuous solution with the initial data $L(0)=1$ is given by 
\begin{align}
	L(t) &= L(0) \exp \left\{ -\int_0^t C(r) dr \right\}
	\notag \\ &=
		 \exp \left\{ -\int_{\M} \int_{\Xd} \left(e^{-s \varphi(x)} - 1 \right) \lambda_\theta (ds) m (dx) \right\}
		 \notag.
\end{align}
Since the latter holds for all $\varphi \in \fcone$, we get by the uniqueness of the Laplace transform that $\nu = \Gmea$. 
\end{proof}

For all Gibbs measures $\mu \in \tGibK \pot \theta \Xd$, an equation similar to \refeq{eq5e} is also true. It is called Georgii-Nquen-Zessin identity (GNZ for short) and was first established on configuration spaces in \cite{geo76} and \cite{ngze79}.

\begin{thm}
For any measurable function $F: \Xd \times \K \rightarrow \RR_+$ and any $\mu \in \tGibK \pot \theta \Xd$
\begin{align}
	&\int_{\K} \int_{\Xd} F(x,\eta) \eta(dx) \mu(d\eta) 
	= 
		\int_{\K} \sum_{\substack{x \in \tau(\eta)\\ \eta = ( (s_x,x))_{x \in \tau(\eta)}}} s_x F(x,\eta) \mu(dx) 
	\notag \\ =&
		\int_{\K} \int_{\hXd} F(x, \eta + s \delta_x ) e^{-\Phi ((s,x);\eta)}
		s \lambda_\theta (ds) m (dx) \mu(d\eta)
	\label{eq65a},
	\end{align}
	where for $\eta := (s_y,y)_{y \in \tau(\eta)} \in \K$
	$$
		\Phi\big( (s,x); \eta \big) := 2 s \sum_{y \in \tau(\eta)} s_y \phi(x,y).
	$$	
\end{thm}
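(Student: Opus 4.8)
Identity \refeq{eq65a} should follow by feeding the Mecke identity for the Gamma reference measure (Theorem~\ref{thm5a1}, equivalently \refeq{eq5e}) into the \textbf{DLR} equation \refeq{B18}, exploiting the additive structure \refeq{3c0a} of the relative energy. I would first reduce, by the monotone class theorem together with a cut-off argument ($F\wedge M$, multiplication by $\mathbbm{1}_{\Delta_n}$, Beppo Levi) exactly as in the last part of the proof of Theorem~\ref{thm64}, to functions of the cylinder form $F(x,\eta)=f(x)\,g(\langle\varphi_1,\eta\rangle,\dots,\langle\varphi_N,\eta\rangle)$ with $f,\varphi_1,\dots,\varphi_N\in C_0(\Xd)$, $g\in C_0(\RR^N)$, $N\in\NN$, and with $f$ supported in some $\Delta\in\mathcal{Q}_c(\Xd)$.

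\textbf{Main computation.} Fix such an $F$ and enlarge $\Delta$ so that it contains the $R$-neighbourhood of the $x$-support of $f$; then both $F(\cdot,\eta)$ and $\Phi((s,\cdot);\eta)$ depend on $\eta$ only through $\eta_\Delta$ and through the boundary part of $\eta$ near $\partial\Delta$. By \refeq{B18} write $\mu=\int_\K\pi_\Delta(\,\cdot\,|\xi)\,\mu(d\xi)$; since $f$ vanishes off $\Delta$,
\[
\int_\K\!\!\int_\Xd F(x,\eta)\,\eta(dx)\,\mu(d\eta)=\int_\K\frac1{Z_\Delta(\xi)}\int_{\Ka\Delta}\!\!\int_\Delta F(x,\eta_\Delta\cup\xi_{\Delta^c})\,\eta_\Delta(dx)\,e^{-H_\Delta(\eta_\Delta|\xi)}\,\PmeaKd\theta\Delta(d\eta_\Delta)\,\mu(d\xi).
\]
Now apply \refeq{eq5e} for $\PmeaKd\theta\Delta$ with integrand $(x,\eta_\Delta)\mapsto\mathbbm{1}_\Delta(x)\,F(x,\eta_\Delta\cup\xi_{\Delta^c})\,e^{-H_\Delta(\eta_\Delta|\xi)}$, which extracts one atom and replaces it by an independent $s\delta_x$ integrated against $s\,\lambda_\theta(ds)\,m(dx)$. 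The key identity is that for $\PmeaKd\theta\Delta$-a.e.\ $\eta_\Delta$, $m$-a.e.\ $x\in\Delta$ (so $x$ is an atom of neither $\eta_\Delta$ nor $\xi$), and all $s>0$,
\[
H_\Delta(\eta_\Delta+s\delta_x\,|\,\xi)=H_\Delta(\eta_\Delta|\xi)+\Phi\bigl((s,x);\eta_\Delta\cup\xi_{\Delta^c}\bigr),
\]
which drops out of expanding the two integrals in \refeq{3c0a}: the cross terms contribute exactly $2s\sum_{y\in\tau(\eta_\Delta\cup\xi_{\Delta^c})}s_y\phi(x,y)$, equal to $\Phi((s,x);\eta)$ with $\eta=\eta_\Delta\cup\xi_{\Delta^c}$ by $\mathbf{(FR)}$. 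Hence $e^{-H_\Delta(\eta_\Delta+s\delta_x|\xi)}=e^{-H_\Delta(\eta_\Delta|\xi)}e^{-\Phi((s,x);\eta)}$, the factor $e^{-H_\Delta(\eta_\Delta|\xi)}/Z_\Delta(\xi)$ recombines with $\PmeaKd\theta\Delta$ into $\pi_\Delta(d\eta|\xi)$, and one obtains
\[
\int_\K\!\!\int_\Xd F(x,\eta)\,\eta(dx)\,\mu(d\eta)=\int_\K\!\!\int_\Xd\!\!\int_\M s\,F(x,\eta+s\delta_x)\,e^{-\Phi((s,x);\eta)}\,\lambda_\theta(ds)\,m(dx)\,\pi_\Delta(d\eta|\xi)\,\mu(d\xi).
\]
Finally, for each fixed $(s,x)$ the map $\eta\mapsto F(x,\eta+s\delta_x)\,e^{-\Phi((s,x);\eta)}$ is a bounded-times-exponential, locally depending function whose $\pi_\Delta(\,\cdot\,|\xi)$- and $\mu$-integrals are finite by Corollary~\ref{cor4c9}; so the consistency \refeq{B16a} together with \refeq{B18} lets one replace $\int_\K(\cdots)\,\pi_\Delta(d\eta|\xi)\,\mu(d\xi)$ by $\int_\K(\cdots)\,\mu(d\eta)$, and a Tonelli reassembly of the $(s,x)$-integration yields the right-hand side of \refeq{eq65a}.

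\textbf{Main obstacle.} The algebra is routine; the real work is the integrability bookkeeping, because $e^{-\Phi}$ is unbounded above as soon as $\phi$ has an attractive part. At the finite-volume level this is controlled by the stability estimate of Lemma~\ref{Bl2a}: inserting $s\delta_x$ into $\eta_\Delta$ only strengthens the Gaussian-type decay of $e^{-H_\Delta(\,\cdot\,|\xi)}$ in $\eta_\Delta(Q_k)+s$ (where $x\in Q_k$), so all integrals against $\PmeaKd\theta\Delta$ and $s\,\lambda_\theta(ds)$ converge, just as in the proof of Lemma~\ref{Bl2b}. To undo the \textbf{DLR} equation for fixed $(s,x)$ one needs the \emph{uniform} exponential moment bounds of Theorem~\ref{Bthm4c8}/Corollary~\ref{cor4c9}, available precisely because $\mu\in\tGibK\pot\theta\Xd$ is tempered. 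One must additionally verify that each Fubini--Tonelli interchange is legitimate (immediate from nonnegativity once the above finiteness is in hand) and that the $m$-null exceptional set where $x\in\tau(\eta)$ is harmless.
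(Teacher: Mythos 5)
Your proposal follows essentially the same route as the paper's proof: reduce by a monotone class/cut-off argument to $F(x,\eta)=f(x)g(\eta_\Delta)$ with $f$ compactly supported, insert the DLR equation \refeq{B18}, apply the Mecke identity \refeq{eq5e} for the Gamma measure inside the specification kernel, use the energy splitting $H_\Delta(\eta_\Delta+s\delta_x|\xi)=H_\Delta(\eta_\Delta|\xi)+\Phi\big((s,x);\eta_\Delta\cup\xi_{\Delta^c}\big)$ to produce the factor $e^{-\Phi}$, recombine into $\pi_\Delta$ and close with the DLR equation again. The only (harmless) deviation is your appeal to the uniform moment bounds of Theorem \ref{Bthm4c8} and Corollary \ref{cor4c9}, which is not needed: all integrands are nonnegative, so the DLR equation and Tonelli hold as identities in $[0,\infty]$, which is exactly how the paper argues.
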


\begin{proof}
As was explained in the proof of Theorem \ref{thm64}, it suffices to establish \refeq{eq65a} for all functions $F$ of the form $F(x,\eta) := f(x) g(\eta_\Delta)$, where the support of $f \in C_0(\Xd)$ lies in $\Delta \in \B_c(\Xd)$ and $g: \Ka \Delta \rightarrow \RR$ is bounded and measurable. 

Then by the DLR equation \refeq{Beq10ab} and the Mecke identity \refeq{eq5e}
\begin{align}&
	\int_{\K} \int_{\Xd} F(x,\eta) \eta(dx) \mu(d\eta) 
	= 
		\int_\K \langle f, \eta_\Delta \rangle g\left(\eta_\Delta \right) \mu(d\eta) 
		\notag \\ =&
			\int_\K \int_\K \langle f, \eta_\Delta \rangle g \left( \eta_\Delta \right) \pi_\Delta (d \eta| \xi) \mu( d\xi)
	\notag \\ =&
	\int_{\K} \int_{\Ka \Delta} \langle f, \eta_\Delta \rangle g(\eta_\Delta) 
  	\frac 1 {Z_\Delta(\xi)} e^{-H(\eta_\Delta | \xi_{\Delta^c})} \PmeaKd \theta \Delta  (d \eta_\Delta) \mu(d\xi)
	\notag \\ =&
		\int_{\K} \int_{\Ka \Delta} \int_\Delta \int_\M 
			f(x) g\left( \eta_\Delta + s \delta_x \right) 			
			\frac 1 {Z_\Delta(\xi)} 
	\notag \\ & \qquad \qquad 
			\times \exp \left\{-H(\eta_\Delta + s \delta_x | \xi_{\Delta^c})\right\}
			s \lambda_\theta (ds) m(dx) \PmeaKd \theta \Delta (d \eta_\Delta) \mu(d\xi)
	\notag \\ =&
		\int_\Delta \int_\M \int_{\K} \int_{\Ka \Delta} 
			F(x,\eta_\Delta + s \delta_x) 
			\frac 1 {Z_\Delta(\xi)} 
		\exp\left\{- H(\eta_\Delta| \xi_{\Delta^c} )\right\} 
	\notag \\ & \qquad \qquad 
				\times \exp \left\{-\Phi\big( (s,x); \eta_\Delta + \xi_{\Delta^c}\big)\right\} 
				\PmeaKd \theta \Delta (d \eta_\Delta) \mu(d \xi) s \lambda_\theta (ds) m(dx) 
	\notag \\
	=&
		\int_\Delta \int_\M \int_{\K} \int_{\K} 
			F(x, \eta+ s \delta_x) e^{-\Phi\big( (s,x); \eta \big)} \pi_\Delta(\eta|\xi) \mu(d\xi) s \lambda_\theta (ds) m(dx)
	\notag.  \intertext{The last line equals} &
		\int_\Delta \int_\M \int_\K F(x, \eta + s \delta_x) e^{-\Phi\big( (s,x);\eta \big)} \mu(d\xi) s \lambda_\theta(ds) m (dx)
	\notag \\ =&
		\int_\K \int_\Xd \int_\M  F(x, \eta + s \delta_x) e^{-\Phi\big( (s,x);\eta \big)} s \lambda_\theta(ds) m (dx) \mu(d\eta),
		\notag
\end{align}
which proves the GNZ identity \refeq{eq65a}.
\end{proof} 	 

\subsection{FKG inequality}
We introduce a partial order on the cone $\K$. For any two measures 
$$
	\eta= \big( (s_x,x) \big)_{x \in \tau(\eta)}, \thinspace
	\eta' = \big( (s_x',x') \big)_{x' \in \tau(\eta)} \in \K
$$
we write $\eta \leq \eta'$ if $\eta(\Delta) \leq \eta'(\Delta)$ for all $\Delta \in \B_c(\Xd)$. In the language of particles, this means that $\tau(\eta) \subseteq \tau(\eta')$ and $s_x \leq s_x'$ for each $x \in \tau(\eta)$. A function $F: \K \rightarrow \RR$ is called \emph{increasing} if 
\begin{align}
	F(\eta) \leq F(\eta') \quad \text{whenever} \quad \eta \leq \eta'.
	\notag
\end{align}
A typical example of such $F$ is given by the following cylinder functions 
\begin{align}
	F(\eta) := f\left( \langle \varphi_1, \eta \rangle, \dots, \langle \varphi_N, \eta \rangle \right), 
	\label{63a}
\end{align}
where $\varphi_1, \dots, \varphi_N \in C_0^+(\Xd)$, $N \in \NN$ and $f: \RR^N \rightarrow \RR$ is monotonically increasing in each argument. 

\begin{prop}[FKG inequality]\label{prop66}
The Gamma measure $\Gmea$ obeys the FKG inequality, which says that $\Gmea$ has \emph{positive correlations}
\begin{align}
	\text{Cov}_{\Gmea} (F,G) := \int_\K F G d \Gmea - \int_\K F d \Gmea \int_\K G d \Gmea \geq 0
	\notag
\end{align}
for all bounded increasing measurable functions $F,G: \K \rightarrow \RR$.
\end{prop}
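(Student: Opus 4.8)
The plan is to transport the statement, via the homeomorphism $\mathbbm{T}$ of Theorem~\ref{thm5a1}, to the configuration space $\Gamma(\hXd)$, where it becomes the classical FKG (Harris) inequality for the Poisson measure $\Pmea\theta$. First I would use \refeq{7a20} together with \refeq{714b}: for bounded measurable $F\colon\K\to\RR$ one has $\int_\K F\,d\Gmea=\int_{\Gamma_f(\hXd)}F\circ\mathbbm{T}\,d\Pmea\theta$, the Poisson measure being concentrated on the pinpointing configurations of finite local mass $\Gamma_f(\hXd)$ (and $\mathbbm{T}$ being bimeasurable by Theorem~\ref{thm5a1}). Hence for bounded measurable $F,G\colon\K\to\RR$
\[
	\mathrm{Cov}_{\Gmea}(F,G)=\mathrm{Cov}_{\Pmea\theta}\bigl(F\circ\mathbbm{T},\,G\circ\mathbbm{T}\bigr),
\]
and it suffices to prove the right-hand side is nonnegative.

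The key structural point is that $\mathbbm{T}$ is order preserving: if $\gamma,\gamma'\in\Gamma_f(\hXd)$ with $\gamma\subseteq\gamma'$, then for every $\Delta\in\B_c(\Xd)$ the sums $\mathbbm{T}(\gamma)(\Delta)=\sum_{(s,x)\in\gamma,\,x\in\Delta}s$ and $\mathbbm{T}(\gamma')(\Delta)$ are finite (finite local mass) and the former does not exceed the latter, i.e.\ $\mathbbm{T}(\gamma)\le\mathbbm{T}(\gamma')$ in the order of $\K$. Consequently, if $F$ is increasing on $\K$ then $F\circ\mathbbm{T}$ is increasing on $(\Gamma_f(\hXd),\subseteq)$; since $\Gamma_f(\hXd)$ carries $\Pmea\theta$, one may, if needed, extend $F\circ\mathbbm{T}$ to a measurable increasing functional on all of $\Gamma(\hXd)$, e.g.\ by $\bar F(\gamma):=\sup\{F(\mathbbm{T}\gamma'):\gamma'\in\Gamma_f(\hXd),\ \gamma'\subseteq\gamma\}$ (with $\sup\varnothing:=\inf_{\eta\in\K}F(\eta)$), which agrees with $F\circ\mathbbm{T}$ on the full-measure set $\Gamma_f(\hXd)$ and is monotone because enlarging $\gamma$ enlarges the index set of the supremum.

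It then remains to invoke the FKG inequality for Poisson point processes: for $\Pmea\theta$ on $\Gamma(\hXd)$ and bounded increasing measurable $\Phi,\Psi$ one has $\mathrm{Cov}_{\Pmea\theta}(\Phi,\Psi)\ge0$. If one prefers a self-contained argument, this follows by discretization: exhaust $\hXd=\M\times\Xd$ by boxes with compact closure, partition each box into small measurable cells, use that the corresponding Poisson counts are independent and that, conditionally on the counts, the points are i.i.d.\ within each cell (so conditional expectations of increasing functionals are coordinatewise increasing functions of the counts), apply the Harris inequality for coordinatewise increasing functions of independent real random variables, and pass to the limit, all quantities being uniformly bounded; alternatively one may discretize directly on $\K$ using the independence of the local masses $\eta(\Delta_i)$ over disjoint $\Delta_i$ (cf.\ \refeq{eq27}) together with the fact that, conditionally on $\eta(\Delta_i)=a_i$, the sub-measure $\eta|_{\Delta_i}$ is distributed as $a_i$ times a Dirichlet random measure not depending on $a_i$, so that the conditional laws are stochastically increasing in the $a_i$. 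I expect the only genuine work to be this passage to Poisson/product measures with the accompanying monotone approximation; the reduction through $\mathbbm{T}$ and the order-preservation are immediate, and the extension off the $\Pmea\theta$-null set $\Gamma(\hXd)\setminus\Gamma_f(\hXd)$ is a routine technicality.
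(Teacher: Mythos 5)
Your proposal follows essentially the same route as the paper: transport the statement via the homeomorphism $\mathbbm{T}$ of Theorem \ref{thm5a1} using $\Gmea = \Pmea\theta\circ\mathbbm{T}^{-1}$, observe that $\mathbbm{T}$ is order preserving so increasing functions pull back to increasing functions on $\Gamma(\hXd)$, and invoke the known FKG inequality for the Poisson measure $\Pmea\theta$ (which the paper simply cites, e.g.\ \cite{jan84,geku97,eva90}, rather than reproving by discretization as you sketch). Your extra care about extending $F\circ\mathbbm{T}$ monotonically off the $\Pmea\theta$-null set $\Gamma(\hXd)\setminus\Gamma_f(\hXd)$ is a harmless technicality the paper passes over silently, so the argument is correct and matches the paper's proof.
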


By the monotone or dominated convergence, the result immediately extends to unbounded functions provided $F,G \geq 0$ or $F,G \in L^2(\K,\Gmea)$.\\

The FKG inequality is well-known for Poisson measures on configuration spaces (see Lemma 2.1 in \cite{jan84} and Corollary 1.2 in \cite{geku97}) or more generally, for infinitely divisible $\Ma$-valued random variables (see Theorem 1.1 in \cite{eva90} and \cite{buwa85}). In statistics, the random measures satisfying the FKG inequality are called \emph{associated}.

\begin{proof}[Proof of Proposition \ref{prop66}]
We use the identity $\Gmea = \Pmea \theta \mathbbm{T}^{-1}$ and the FKG inequality for $\Pmea \theta$ on $\Gamma(\hXd)$. Consider a pair of bounded monotone functions $F,G: \K \rightarrow \RR$. Then $\hat F := F \circ \mathbbm{T}$ and $\hat G := G \circ \mathbbm{T}$ are monotone functions on $\Gamma(\hXd)$. To this end, note that the homemorphism $\mathbbm{T}: \Gamma_f(\Xd) \rightarrow \K$ is order preserving, i.e., $\gamma_1 \geq \gamma_2$ in $\Gamma_f (\hXd)$ implies $\mathbbm{T} \gamma_1 \geq \mathbbm{T} \gamma_2$ in $\K$. The latter is equivalent to checking that for any $\varphi \in C_0^+(\Xd)$ 
$$
	\langle \varphi, \mathbbm{T} \gamma_1 \rangle = \langle s \otimes \varphi, \gamma_1 \rangle \geq \langle s \otimes \varphi, \gamma_2 \rangle = \langle \varphi, \mathbbm{T} \gamma_2 \rangle.
$$
Therefore, we have 
\begin{align}&
\int_\K F(\eta) G(\eta) \Gmea (d \eta) 
= 
	\int_{\Gamma(\hXd)} F (\mathbbm{T} \gamma) G(\mathbbm{T} \gamma) \Pmea \theta (d\gamma) 
	\notag \\ =&
		\int_{\Gamma_f(\hXd)} \hat F(\gamma) \hat G (\gamma) \Pmea \theta (d\gamma)
	\geq 
		\int_{\Gamma_f (\hXd)} \hat F (\gamma) \Pmea \theta (d \gamma)
		\int_{\Gamma_f (\hXd)} \hat G (\gamma) \Pmea \theta (d \gamma)
	\notag \\ =&
		\int_{\K} F(\eta) \Gmea (d \eta) \int_\K G(\eta) \Gmea (d \eta),
	\notag
\end{align}
which yields the result.
\end{proof}

\begin{rem}
	There is a standard way of extending FKG correlation inequalities to ferromagnetic models. However, pure attractive pair interactions are not physically relevant for particle systems in the continuum. In \cite{got05} FKG inequalities have been proven for particle systems on marked configuration spaces with so-called weakely attractive interactions and then used to study existence and uniqueness of the corresponding Gibbs states.
\end{rem}

\section*{Acknowledgements}
We thank Eugene Lytvynov, Ilya Molchanov and Anatoly Vershik for valuable discussions. Financial support by the DFG through the  CRC (SFB) 701 ``Spectral Structures and Topological Methods in Mathematics'' and the IRTG (IGK) 1132 ``Stochastics and Real World Models'' is gratefully acknowledged.


\def\cprime{$'$} \def\cprime{$'$}

\end{document}